\definecolor{red}{RGB}{213,94,0}
\def\sym#1{\ifmmode^{#1}\else\(^{#1}\)\fi}
\newcommand{\binarybtwnvar}{0.034}
\newcommand{\binarybtwnr}{19}
\newcommand{\industrybtwnr}{47}
\newcommand{\sqbtwnvar}{0.047}
\newcommand{\sqbtwnr}{37}
\newcommand{\indsqbtwnr}{56}    
\newcommand{\binarydr}{3.9}
\newcommand{\binarytau}{0.21}
\newcommand{\binarylambdaone}{0.46}
\newcommand{\binarydrlambdaone}{0.27}
\newcommand{\industrydr}{5.2}
\newcommand{\industrytau}{0.32}
\newcommand{\industrylambdaone}{0.51}
\newcommand{\industrydrlambdaone}{0.24}
\newcommand{\sqdr}{4.7}
\newcommand{\industrysqdr}{3.1}
\newcommand{\namedrsq}{5.4}
\theoremstyle{definition}
\newtheorem{example}{Example}
\newtheorem{proposition}{Proposition}
\begin{document}

\title{A Discrimination Report Card}
\author{Patrick Kline, Evan K. Rose, and Christopher R. Walters\thanks{Kline: UC Berkeley and NBER, pkline@econ.berkeley.edu; Rose: University of Chicago and NBER, ekrose@uchicago.edu; Walters: UC Berkeley and NBER, crwalters@econ.berkeley.edu. We thank Ben Scuderi for helpful feedback on an early draft of this paper and Hadar Avivi and Luca Adorni for outstanding research assistance. Seminar participants at Brown University, the 2022 California Econometrics Conference, Columbia University, CIREQ 2022 Montreal, Microsoft Research, Monash University, Peking University, Royal Holloway, UC Santa Barbara, UC Berkeley, The University of Virginia, and The University of Chicago Interactions Conference provided useful comments. Routines for implementing the ranking procedures developed in this paper are available online at \url{https://github.com/ekrose/drrank}.}}

\maketitle
\begin{abstract}
\begin{singlespace}
We develop an Empirical Bayes grading scheme that balances the informativeness of the assigned grades against the expected frequency of ranking errors. Applying the method to a massive correspondence experiment, we grade the racial biases of 97 U.S. employers. A four-grade ranking limits the chances that a randomly selected pair of firms is mis-ranked to 5\% while explaining nearly half of the variation in firms’ racial contact gaps. The grades are presented alongside measures of uncertainty about each firm's contact gap in an accessible rubric that is easily adapted to other settings where ranks and levels are of simultaneous interest.
\end{singlespace}
\end{abstract}

\begin{center}
Keywords: Discrimination, Empirical Bayes, Ranking, Correspondence Experiment\\
\bigskip
JEL Codes: J71, C11, C13
\end{center}

\newpage

\section{Introduction} \label{sec:introduction}

\epigraph{Sunlight is said to be the best of disinfectants; electric light the most efficient policeman.}
			{\textit{Louis Brandeis } }

Though half a century has passed since the enactment of the Civil Right Act, there is ample evidence that American employers still discriminate based upon legally protected characteristics including sex and race \citep{charles_guryan_jel,bertrand_duflo_review, quillian2017meta, quillian2020evidence}. One reason for the stubborn persistence of employer discrimination is the paucity of reliable information regarding the tendencies of specific organizations to engage in such illicit behavior. Job seekers cannot direct their search effort away from biased firms if the identities of these firms are unknown. Corporate executives may have little incentive to search for more equitable recruiting practices if they cannot benchmark their organizations' biases to those of their peers.

This paper constructs a \emph{discrimination report card} that summarizes objective information regarding the relative biases of a broad collection of Fortune 500 companies. Our analysis leverages a massive correspondence experiment sending up to 1,000 job applications to each of 108 firms. In a previous analysis of this experiment \citep{kline2021systemic}, we established that applications listing distinctively Black names received seven percent fewer employer contacts than those listing distinctively white names. This contact penalty displays striking heterogeneity across companies: applying Empirical Bayes (EB) deconvolution methods, we found that the top 20\% of discriminating firms were responsible for nearly half of the employer contacts lost to racial discrimination.

While this earlier work established that discrimination is highly concentrated among a small subset of employers, the experimental estimate for any particular firm is subject to significant sampling uncertainty. This statistical noise presents an obstacle to reliable estimation of discrimination by specific firms. Though scientific communication is generally aided by transparency \citep{andrews2021model}, lay audiences can find it challenging to translate point estimates and standard errors into conclusions of interest. Experimental evidence from psychology suggests audiences frequently depart from Bayesian reasoning, which may generate over- or under-reactions to statistical evidence \citep{mullainathan2002thinking,mullainathan2008coarse,bordalo2016stereotypes}. These considerations suggest that a low-dimensional, easily-digestible summary of the experimental findings is important for effectively communicating the results to a broad audience. Likely for similar reasons, scholars, policymakers, and private businesses increasingly report simple ``report cards" summarizing estimates of quality for various types of institutions, including colleges \citep{chetty2017college}, K-12 schools \citep{bergman2020housing,angrist2021gs}, teachers \citep{bergmanhill2018teachers,pope2019teachers}, healthcare providers \citep{randhealthcarereportcards2002,pope2009hospitals,kolstad2013}, and neighborhoods \citep{chettyhendren2018part2,chetty2018atlas}.

This paper develops new methods for grading firms (or other units) based upon noisy measures of their conduct while maintaining statistical guarantees on the reliability of the resulting grades. The information content of our classification scheme is quantified by Kendall's $\tau$ measure of correlation \citep{kendall1938new} between our proposed ranking and the true ranking of firms' latent discriminatory behavior. The reliability of the report card grades is quantified by an analogue of the False Discovery Rate \citep{benjamini1995controlling,storey2002direct} that we term the Discordance Rate (DR). The DR between two grades gives the expected probability that a firm assigned the worse grade discriminates less than a firm assigned the better grade. We also develop an extension of DR that weights mistakes by their cardinal magnitudes, which captures the idea that large mistakes are more costly than small ones.

We show that the tradeoff between these notions of information and reliability arises naturally from a series of lotteries over firm pairs where an analyst guesses which member of each pair exhibits worse conduct. When facing multiple gambles of this form, the analyst faces an optimization problem subject to logical transitivity constraints requiring all pairwise comparisons to be consistent with a coherent underlying ranking. A parameter $\lambda$ trades off the gains of correctly ranking pairs of firms against the costs of misclassifying them. When $\lambda=1$, it is optimal to assign every firm a unique grade that maximizes the expected rank correlation with the true latent discrimination levels. These maximally-informative grades turn out to be closely connected to classic proposals for preference aggregation via pairwise elections found in the social choice literature \citep{borda1784memoire,condorcet1785essay,young1978consistent,young1986optimal}, with the posterior probability that one firm is more biased than another serving as a ``vote share.'' When $\lambda<1$ it is only optimal to strictly rank firm pairs that can be distinguished with sufficiently high posterior probability, potentially yielding ties and therefore a low number of distinct ``grades." These coarse grades protect against misinterpretation at the cost of losing information, thereby reducing correlation with the true ranks.

The grades generated by our procedure yield a simple classification of firms into groups that facilitates pairwise comparisons of firm conduct. The grades allow an assessment not only of which firms are discriminating most, but also which firms are discriminating \emph{least}. The latter sort of information may prove especially useful in the hunt for best practices that lead to inclusive hiring. For example, our past work documented smaller contact gaps at firms that contacted applicants from fewer phone numbers, a proxy for centralized human resources (HR) practices. \cite{berson2020outsourcing} provide corroborating evidence that contact gaps are lower among French firms with centralized recruiting practices, while \citet{challeeffect} report encouraging experimental evidence on the potential for HR reforms to foster inclusive hiring at large firms. Our report card results suggest directions for work comparing these and other policies across firms graded as high and low performers.

To supplement the ordinal information conveyed by our grades, we report Empirical Bayes posterior means and credible intervals summarizing the \emph{level} of discriminatory conduct at each employer. This ordinal and cardinal information is visualized in a single report card that provides a powerful and easily-accessible rubric for assessing absolute and relative discriminatory conduct. Although this rubric was designed to communicate information about discrimination, we expect the methods proposed here to prove useful in other settings where noisy measures of institutional performance and conduct have been studied, including assessments of school value added \citep{angristqje2017}, hospital quality \citep{chandra2016health}, health insurance plans \citep{abaluckhealthinsurance2021}, and regional intergenerational mobility \citep{chetty2018impacts2}. Routines for implementing our ranking procedure are available online at \url{https://github.com/ekrose/drrank}.

As an introductory illustration of the method, we rank the contact rates of the first names used in our correspondence experiment. A non-parametric deconvolution suggests that name-specific contact rates cluster around two distinct values capturing mean contact rates for distinctively white and Black names. Weighing the loss from incorrectly ordering a pair of names four times as heavily as the gain from correctly ordering them, our ranking procedure stratifies the names into two groups with distinct grades. These grades are shown to be strongly predictive of a name's nominal race but not its sex. Allowing additional grades has little impact on these correlations, suggesting that our ranking procedure is suitable for recovering ``missing labels'' with a low-dimensional structure.

Proceeding to our primary application of ranking firm biases against Black applicants, we compute optimal grades subject to the same preferences over correct and incorrect rankings used for first name pairs. In a single pairwise gamble, these preferences (which correspond to a particular choice of $\lambda$) require at least 80\% posterior confidence to justify a strict ordering of firms. In our baseline specification, applying this choice of $\lambda$ to generate a transitive ordering over all firms yields three unique grade levels. These grades capture roughly \binarybtwnr\% of the between-firm variation in proportional contact penalties and yield an expected rank correlation with the true penalties of \binarytau. Although our grading system reflects only ordinal considerations, we estimate that the average white/Black gap in contact rates among firms awarded the worst grade is 24\%, while the gap among firms awarded the best grade is only 3\%.

Our earlier work found that industry explains roughly half of the variation in racial discrimination levels across firms \citep{kline2021systemic}. Motivated by this finding, we extend our procedure to build industry information into the report card grades. This extension is achieved by augmenting \cite{efron2016empirical}'s log-spline deconvolution approach to flexibly estimate separate distributions of discrimination within and between industries. Consistent with our past work, we find that industry affiliation accounts for roughly half of the cross-firm variation in proportional contact penalties. Incorporating industry affiliation into the ranking procedure with the same choice of $\lambda$ yields four grades. These improved grades explain \industrybtwnr\% of the variation in contact penalties across firms and yield a correlation with the latent ranks of \industrytau, while limiting the expected share of firm pairs that are mis-ranked to \industrydr\%. 

These more informative industry-dependent grades constitute our preferred ranking of the companies in our experiment.  Firms assigned the worst grade in this ranking contact white applicants 22\% more often than Black applicants, similar to the lowest category in the ranking without industry effects. However, 5 firms receive this label in the model with industry effects compared to only 2 in the baseline model, an indication of the extra information conveyed by industry. Similar to the specification without industry, firms receiving the best grade in the industry effects model (11 firms) exhibit very small racial biases. To the extent that these differences are driven by HR practices or other firm policies, there may be opportunities for the substantial set of firms that scored poorly to improve their behavior by imitating the practices of those who scored more highly. 

Our work extends a burgeoning literature on EB ranking methods. A large empirical literature ranks teachers, schools, hospitals, and neighborhoods using James-Stein style shrinkage rules \citep[e.g.,][]{chetty2014measuring,chetty2018opportunity}. \cite{portnoy1982maximizing} established conditions under which ranking based on such rules maximizes the probability of a correct ordering, while \cite{laird1989empirical} proposed directly computing posterior mean ranks under a normality assumption on the latent heterogeneity. 
Both sorts of ranks may be noisy, however, leading to a proliferation of ranking mistakes when the number of units grows large. A recent econometrics literature confirms that this problem can become severe in practice and proposes approaches to testing hypotheses regarding either ranks themselves or the levels of highly-ranked units \citep{andrews2019winners,mogstad2020rankings}. 

Building on the analogy with multiple testing, \cite{gu2020invidious} consider the use of non-parametric EB methods to select tail performers subject to constraints on the False Discovery Rate, which limits the number of ordering mistakes expected when selecting top performers. Our proposal generalizes the approach in \cite{gu2020invidious} by accommodating more than two grades and avoids the requirement to treat one of the grades as a null hypothesis. More recent work by \cite{gu2022ranking} considers a ranking of journals based on pairwise citation counts using a penalized Bradley-Terry model \citep{bradley1952rank}. While our proposed approach shares \cite{gu2022ranking}'s focus on pairwise differences, the method does not require pairwise data on tournaments and allows users to trade off transparent notions of the information content and reliability of the resulting grades. 

After undergoing peer review, we plan to release the names of the firms in this study to the public. Regulatory agencies such as the Equal Employment Opportunity Commission (EEOC) and the Office of Federal Contract Compliance (OFCCP) have broad discretion to launch investigations into possible violations of equal employment opportunity laws, especially violations by federal contractors. Many of the firms receiving poor grades turn out to be federal contractors, suggesting this information may be of help in targeting future compliance efforts. However, compliance efforts are inevitably long and costly and many firms remain out of compliance even after having been fined \citep{maxwell2013using}. 

As the introductory quote by Brandeis suggests, shining some statistical light on the problem of discrimination may have a more immediately salutary effect than regulatory enforcement efforts. Little scientific information about the discriminatory conduct of particular firms is available to the public. The most powerful ``disinfectant'' may well be the decentralized reactions of employees, customers, and leaders of these organizations to the provision of such information.

\section{The experiment}

We construct discrimination report cards based on the resume correspondence experiment analyzed in \cite{kline2021systemic}. The experiment's sampling frame began with the 2018 list of companies in the Fortune 500. Attention was then restricted to firms with sufficient geographic variation and entry-level job posting for our experiment to be feasible. Over the course of the study, 125 entry-level job vacancies were sampled from each of these employers, with each vacancy corresponding to an establishment in a different U.S. county. This restriction was intended to ensure nation-wide coverage of each firm's recruitment conduct and to minimize the chances that multiple sampled job vacancies were being managed by the same individual.

Sampling was organized in a series of 5 waves, with a target of 25 jobs sampled for each firm in each wave. The majority of firms (72) were sampled in all waves; the rest were excluded in some waves due to COVID-19 and technological interruptions. We attempted to send each sampled job four pairs of applications, with each pair consisting of one Black applicant and one white applicant. Some vacancies received fewer than 8 total applications because the job opening closed while applications were still in progress. Our final sample included roughly 84,000 applications.

To signal race and gender, we followed previous correspondence experiments and used distinctive names. Our set of names started with that of \cite{bertrand_mullainathan_2004}, who used 9 unique names for each race and gender group. This list was supplemented with 10 additional names per group from a database of speeding tickets issued in North Carolina between 2006 and 2018. We classified a name as racially distinctive if more than 90\% of individuals with that name are of a particular race, and selected the most common distinctive Black and white names for those born between 1974 and 1979. Distinctive last names were taken from the 2010 U.S. Census. We selected names with high race-specific shares among those that occur at least 10,000 times nationally. 

One application within each pair was randomly assigned a distinctively white name while the other was randomly assigned a distinctively Black name. Fifty-percent of names were distinctively female and the rest distinctively male, but assignment of sex was not stratified. Each fictitious applicant was independently randomly assigned a large set of additional characteristics, including educational and previous employment histories. 

Our primary outcome is whether an employer attempted to contact the fictitious applicant. Phone numbers and e-mail addresses assigned to the fictitious applicants were monitored to determine when employers reached out for an interview. Contact information was assigned to ensure that no two applicants to the same firm shared an e-mail address or phone number. Our analysis focuses on whether the employer attempted to contact an applicant by any method within 30 days of applying. Further details on the experimental design are available in \cite{kline2021systemic}.

\section{Decision problem}\label{sec:decision}

Consider the problem of ranking a collection of $n$ firms, indexed by $i\in\{1,\dots,n\}\equiv[n]$, according to their values of a scalar measure of discrimination $\theta_{i}\in \mathbb{R}$. The decision variable $d_i\in[n]$ gives the \emph{grade} assigned to firm $i$. Larger values of $d_i$ indicate a firm is more biased. Hence, when $d_i>d_j$ for two firms $i$ and $j$, we say that firm $i$ received a ``worse'' grade than firm $j$.

For each firm $i\in[n]$, we have the measurements $Y_{i}=(\hat \theta_i, s_i)$, where $\hat \theta_i$ is a consistent estimate of $\theta_i$ and $s_i$ is that estimate's asymptotic standard error. We assume the $\{\hat \theta_i\}_{i=1}^n$ are mutually independent and that $\hat \theta_i \sim \mathcal{N}(\theta_i, s_i^2)$. The normality assumption can be justified by an asymptotic approximation as each $\hat{\theta}_{i}$ is based on the contact rate of a large number of underlying applications.

It is convenient to recast the problem of ranking $n$ firms as that of ranking all $\binom{n}{2}$ pairs of firms subject to a set of transitivity constraints. Correctly ranking the bias of a pair of firms yields a \emph{concordance} while ranking the pair incorrectly yields a \emph{discordance}. A pair can also be deemed a tie, which yields neither a discordance nor a concordance. 

\subsection{Gambling over ranks}\label{sec:gambling}

To build intuition it is helpful to first consider the problem of deciding on the rank of a single pair of firms $i$ and $j$ based upon realizations $(y_i,y_j)$ of independent signals $(Y_i,Y_j)$. Suppose that correctly ranking the pair yields payoff $\lambda\in[0,1]$ while reversing their true rank yields payoff -1. We can also declare the comparison a draw by assigning the firms equal ranks, which amounts to abstaining from the gamble and yields certain payoff 0. 

In considering this lottery, we view the pair $(\theta_i,\theta_j)$ of firm types as $i.i.d.$ draws from a continuous prior distribution $G:\mathbb{R}\rightarrow[0,1]$. The posterior probability that firm $i$ is more biased than firm $j$ will be denoted $\pi_{ij}=\Pr\left(\theta_{i}>\theta_{j}|Y_i=y_{i},Y_j=y_{j}\right)$. Because $G$ is continuous, ties are measure zero and $\pi_{ij}=1-\pi_{ji}$. 

The expected utility of assigning grades $d=(d_1,d_2)\in\{1,2\}^2$ to these firms can therefore be written
\begin{align*}
    EU(\pi_{ij},d)=&[\lambda\pi_{ij}-\pi_{ji}]\cdot1\{d_{i}>d_{j}\}+[\lambda\pi_{ji}-\pi_{ij}]\cdot1\{d_{i}<d_{j}\}\\
                    =&[(1+\lambda)\pi_{ij}-1]\cdot1\{d_{i}>d_{j}\}+[(1+\lambda)(1-\pi_{ji})-1]\cdot1\{d_{i}<d_{j}\}.
\end{align*} 
The optimal policy is a simple posterior threshold rule:
    \smallskip{}
    \begin{itemize}
        \item  Set $(d_i=2,d_j=1)$ iff $\pi_{ij}>\frac{1}{1+\lambda}$. 
        \smallskip{}
        \item Set $(d_i=1,d_j=2)$ iff $\pi_{ji}>\frac{1}{1+\lambda}$.
        \smallskip{}
        \item Otherwise, set $d_i=d_j$.
    \end{itemize}
\bigskip{}

When $\lambda=1$, it is optimal to follow a maximum a posteriori (MAP) rule, assigning the higher rank to whichever firm has a greater probability of having the largest value of $\theta$. But when $\lambda<1$, it is better to assign pairs of firms with $\pi_{ij}$ near 1/2 equal grades rather than risk ranking them incorrectly. The quantity $(1-\lambda)$ can therefore be thought of as measuring discordance aversion.

\subsection{Compound Loss}

Now consider the case where we can gamble on the relative rank of all $\binom{n}{2}$ pairs of firms. \cite{kendall1938new}'s classic $\tau$ measure of rank correlation can be defined as the share of pairs yielding a concordance minus the share yielding a discordance. The loss function we propose is a generalization of $\tau$ indexed by a scalar $\lambda\in[0,1]$ that controls the benefit of a concordance relative to the cost of a discordance. 

Letting $\theta=(\theta_1,\dots,\theta_n)'$ denote the vector of latent biases and $d=(d_1,\dots,d_n)'$ a vector of assigned grades, our loss function can be written:
\begin{flalign}\label{eq:kendall}
L\left(d,\theta;\lambda\right)=\binom{n}{2}^{-1}\sum_{i=2}^{n}\sum_{j=1}^{i}
\Bigg[\underbrace{1\left\{ \theta_{i}>\theta_{j},d_{i}<d_{j}\right\} +1\left\{ \theta_{i}<\theta_{j},d_{i}>d_{j}\right\} }_{\text{discordant pairs}}-\\ \nonumber
\lambda\bigg(\underbrace{1\left\{ \theta_{i}<\theta_{j},d_{i}<d_{j}\right\} +1\left\{ \theta_{i}>\theta_{j},d_{i}>d_{j}\right\} }_{\text{concordant pairs}}\bigg) \Bigg].
\end{flalign}
While every discordant pair yields a loss of 1, every concordant pair reduces loss by $\lambda$. When $\lambda=1$ the loss function equals minus one times Kendall's $\tau$ measure of rank correlation between $d$ and $\theta$, which we will denote $\tau(d,\theta)$. When $\lambda<1$, ranking mistakes are more costly than forgone concordances, which creates an incentive to deem the pair a tie. 

Building on the insight that $\tau(d,\theta)=-L\left(d,\theta;1\right)$, we can also write the loss function:
\begin{flalign}
L\left(d,\theta;\lambda\right)=&\left(1-\lambda\right)\binom{n}{2}^{-1}\sum_{i=2}^{n}\sum_{j=1}^{i}\Bigg[1\left\{ \theta_{i}>\theta_{j},d_{i}<d_{j}\right\} +1\left\{ \theta_{i}<\theta_{j},d_{i}>d_{j}\right\}\Bigg]-\lambda\tau(d,\theta)\nonumber\\=&\left(1-\lambda\right)DP(d,\theta)-\lambda \tau(d,\theta),\nonumber
\end{flalign}
where the quantity $DP(d,\theta)=\binom{n}{2}^{-1}\sum_{i=2}^{n}\sum_{j=1}^{i}[1\left\{ \theta_{i}>\theta_{j},d_{i}<d_{j}\right\} +1\left\{ \theta_{i}<\theta_{j},d_{i}>d_{j}\right\}]$ is the \emph{Discordance Proportion}. The Discordance Proportion gives the share of firm pairs that are misranked according to their grades. Interpreting our decision problem as a series of tests of the null hypothesis that each member of a pair discriminates equally, the Discordance Proportion may be seen as a directional (sometimes called type III) error rate. That is, we reject equality in favor of an erroneous alternative, yielding a discordance. This representation clarifies that the parameter $\lambda$ trades off the desire to accurately communicate information to the audience by maximizing $\tau(d,\theta)$ against concern about misclassifying the ranks of firms, as reflected by $DP(d,\theta)$.

\subsection{Risk function}\label{sec:risk}
While one would ideally like to choose grades $d$ that balance the rank correlation $\tau(d,\theta)$ against the Discordance Proportion $DP(d,\theta)$, these quantities are not directly observed. Suppose, however, that we have a continuous $i.i.d.$ prior $G:\mathbb{R}\rightarrow[0,1]$ over the elements of $\theta$ and that we observe a realization $y$ of the random vector $Y=(Y_1,\dots,Y_n)'$. The posterior probability that firm $i\in[n]$ is more biased than firm $j \neq i$ will again be denoted by $\pi_{ij}=\Pr\left(\theta_{i}>\theta_{j}|Y=y\right)$.

The posterior expectation under $G$ of both the unknown $\tau(d,\theta)$ and $DP(d,\theta)$ can be expressed in terms of the pairwise probabilities $\pi_{ij}$. The posterior expected rank correlation $\bar \tau(d)=\mathbb{E}[\tau(d,\theta)|Y=y]$ is given by 
\[
\bar \tau(d) = \binom{n}{2}^{-1}\sum_{i=2}^{n}\sum_{j=1}^{i}1\left\{ d_{i}<d_{j}\right\}\cdot \left(\pi_{ij}-\pi_{ji}\right) +1\left\{ d_{i}>d_{j}\right\} \cdot \left(\pi_{ji}-\pi_{ij}\right).
\]
Likewise, the posterior expected value of $DP(d,\theta)$, a quantity we term the \emph{Discordance Rate} (DR), is
\begin{align}
DR(d)=\binom{n}{2}^{-1}\sum_{i=2}^{n}\sum_{j=1}^{i-1}1\left\{ d_{i}<d_{j}\right\} \pi_{ij}+1\left\{ d_{i}>d_{j}\right\} \pi_{ji}. \label{def: DR}
\end{align}
Consequently, the posterior expected loss (i.e., the Bayes risk) of assigning grades $d\in[n]^n$ can be written:
\begin{flalign}
    \mathcal{R}(d;\lambda)&=\mathbb{E}[L(d,\theta;\lambda)\mid Y=y] \nonumber \\\nonumber
    &=(1-\lambda)DR(d) - \lambda \bar \tau(d) \\ \nonumber
    &=\binom{n}{2}^{-1} \sum_{i=2}^{n}\sum_{j=1}^{i}\pi_{ji}1\left\{ d_{i}>d_{j}\right\} +\pi_{ij}\left(1-1\left\{ d_{i}=d_{j}\right\} -1\left\{ d_{i}>d_{j}\right\} \right)-\\
&\lambda\pi_{ji}\left(1-1\left\{ d_{i}=d_{j}\right\} -1\left\{ d_{i}>d_{j}\right\} \right)-\lambda\pi_{ij}1\left\{ d_{i}>d_{j}\right\} .\label{eq:obj}
\end{flalign}

The optimal grades $d^*(\lambda)$ minimize $\mathcal{R}(d;\lambda)$. To simplify this minimization problem, it is convenient to recast the relevant decision
variables as pairwise indicators $d_{ij}=1\left\{ d_{i}>d_{j}\right\} $
and $e_{ij}=1\left\{ d_{i}=d_{j}\right\}$. Transitivity requires that for
any triple $\left(i,j,k\right)\in[n]^{3}$ the following constraints
hold:
\begin{equation}
d_{ij} +d_{jk} \leq 1+d_{ik}, \label{eq:constraints}
\end{equation}
\[
d_{ik} +\left(1-d_{jk}\right)\leq1+ d_{ij},
\]
\[
e_{ij} + e_{jk} \leq 1 + e_{ik}.
\]

Hence, we can rewrite the problem of choosing $d\in[n]^n$ to minimize \eqref{eq:obj} as that of choosing the binary indicators $\{d_{ij},e_{ij}\}_{i=2,j=1}^{i=n,j=i}$ to minimize
\begin{flalign}
    \sum_{i=2}^{n}\sum_{j=1}^{i}\pi_{ji}d_{ij} +\pi_{ij}\left(1-e_{ij} - d_{ij} \right)-\lambda\pi_{ji}\left(1-e_{ij} -d_{ij} \right)-\lambda\pi_{ij}d_{ij} ,\label{eq:obj2}
\end{flalign}
subject to the transitivity constraints in \eqref{eq:constraints} and the logical constraint that $e_{ij}+d_{ij}+d_{ji} = 1$ for all $(i,j)\in[n]^2$. Note that both the objective \eqref{eq:obj2} and the constraints are linear in the control variables. This reformulation therefore yields an integer linear programming problem, the solution to which can be computed with standard optimization packages.

\subsection{The role of $\lambda$}
To develop intuition for the role that $\lambda$ plays in the nature of the solution to our linear programming problem, it is useful to consider the task of ranking a single pair in \eqref{eq:obj2} while ignoring cross-pair constraints. Setting $d_{ij}=1$ minimizes risk whenever
\[
\underbrace{\pi_{ji}-\lambda\pi_{ij}}_{d_{ij}=1 \text{ } (i \succ j)}<\min\left\{ \underbrace{0}_{e_{ij}=1 \text{ (tie)}},\underbrace{\pi_{ij}-\lambda\pi_{ji}}_{d_{ij}=e_{ij}=0 \text{ } (i \prec j)}\right\}. 
\]

For $\lambda\in\left[0,1\right]$, we can rewrite this condition as
\begin{equation}
\pi_{ij}>(1+\lambda)^{-1}, \label{eq:thresholds}
\end{equation}
which is equivalent to the rule derived in section \ref{sec:gambling}. Hence, with $\lambda=1$, it is optimal to choose $d_{ij}=1\{\pi_{ij}>1/2\},$ which can be thought of as a MAP estimate of the pairwise rank. When $\lambda\in\left(0,1\right)$, greater posterior certainty is required to conclude that $d_{ij}=1$ and values of $\pi_{ij}$ near 1/2 will yield ties even though $G$ is continuous. As $\lambda$ approaches zero, fewer distinct grades will be assigned. When $\lambda=0$, all $n$ firms are assigned the same grade because $\pi_{ij}\leq1$. 

The coarse grades that result from applying the pairwise thresholding rule in \eqref{eq:thresholds} when $\lambda<1$ can generate a form of Condorcet cycle in \emph{indifferences} that violates the transitivity constraints in \eqref{eq:constraints} even if they would be satisfied under $\lambda=1$. The following three firm example illustrates the problem.

\begin{example}[Three firms, normal posteriors]
Suppose $n=3$ and $\theta_i | Y_i = y_i \sim N(\omega_i, 1)$. If the $\{\theta_i\}_{i=1}^3$ are mutually independent, then:
\begin{flalign*}
\pi_{ij} = \Pr(\theta_i > \theta_j | Y_i = y_i,  Y_j = y_j) = \Phi \left( \frac{\omega_i - \omega_j}{\sqrt{2}}\right).
\end{flalign*}
Let $\lambda=1/4$, which implies $(1+\lambda)^{-1}=0.8$. If $(\omega_1,\omega_3)=(2,0)$, so that $\pi_{13}=\Phi(\sqrt{2}) = .92$ and $\pi_{31}=1-\pi_{13}=.08$, then it is optimal to rank $\theta_1>\theta_3$. But if $\omega_{2}\in(0.81,1.19)$, it is optimal to rank both $(\theta_1,\theta_2)$ and $(\theta_2,\theta_3)$ as ties because $\max\{\pi_{12},\pi_{23}\}<0.8$. By transitivity, this implies $\theta_1=\theta_3$ which contradicts our earlier assertion that $\theta_1>\theta_3$.\qed
\end{example}

Note that if we had set $\lambda=1$ in the above example transitivity would have been satisfied because the posteriors themselves are transitive in the sense that for any triple $(i,j,k)$ of firms, $\pi_{ij}>\pi_{ji}$ and $\pi_{jk}>\pi_{kj}$ imply $\pi_{ik}>\pi_{ki}$. This transitivity derives from the scalar index structure of the posteriors, revealed by the fact that
\[
\pi_{ij}>\pi_{ji} \Longleftrightarrow \omega_i>\omega_j.
\]
\cite{sobel1993bayes} establishes the transitivity of posteriors in a broader exponential family subject to a corresponding index restriction. In general, however, when the observations are heteroscedastic, such index representations are not available and transitivity is not assured. When transitivity fails, the constraints in \eqref{eq:constraints} will bind and multiple units may receive the same grade even when $\lambda=1$.

Finally, note that coarse grades need not be a consequence of transitivity violations. If $\omega_2 \in (-1.19,0.81)$ in the preceding example, it is optimal to rank $\theta_1 > \theta_3$, $\theta_1 > \theta_2$, and $\theta_2 = \theta_3$. Thus pairwise thresholding yields two grades and no transitivity violations. Whether the transitivity constraints bind therefore depends on the structure of the pairwise posterior contrasts. 

\subsection{Connections to social choice} \label{sec: social_choice}

The literature on ranking methods bears a close connection to problems of social choice. If we re-interpret $\pi_{ij}$ as the share of votes for firm $i$ over firm $j$ in a pairwise election then a number of standard preference aggregation schemes suggest themselves.\footnote{In developing this analogy, we temporarily depart from the convention that $d_i>d_j$ implies firm $i$ has been assigned a ``worse'' grade than firm $j$, referring instead to firms with high $d_i$ as highly ranked.} For example, \cite{borda1784memoire}'s voting method simply ranks each firm $i$ based on the number of elections it has won; i.e., based upon $\sum_{j\neq i} 1\{\pi_{ij}>1/2\}$. If (as we have assumed) $G$ is continuous, then the Borda measure is equivalent to the posterior mean rank, a quantity studied by \cite{laird1989empirical}. 

The ranking procedure devised in section \ref{sec:risk} turns out to be closely tied to \cite{condorcet1785essay}'s voting scheme. To develop this connection, it is useful to define the \cite{kemeny1959mathematics} distance between the vectors $\theta$ and $d$, which can be written
\begin{align*}
    K\left(\theta,d\right)=\sum_{i=2}^{n}\sum_{j=1}^{i}\left|1\left\{ \theta_{i}>\theta_{j}\right\} -1\left\{ \theta_{i}<\theta_{j}\right\} -\left(1\left\{ d_{i}>d_{j}\right\} -1\left\{ d_{i}<d_{j}\right\} \right)\right|.
\end{align*}
Integrating out $\theta$ yields
\begin{align}
    \mathbb{E}\left[K\left(\theta,d\right)|Y=y\right]	&=	\sum_{i=2}^{n}\sum_{j=1}^{i}\left(\pi_{ij}+\pi_{ji}\right)1\left\{ d_{i}=d_{j}\right\} +\left(1+\pi_{ji}-\pi_{ij}\right)1\left\{ d_{i}>d_{j}\right\} \nonumber \\ &+\left(1+\pi_{ij}-\pi_{ji}\right)1\left\{ d_{i}<d_{j}\right\}. \nonumber
\end{align}
When ties are not possible (i.e., when $\pi_{ij}=1-\pi_{ji}$ for all $i\neq j$) we obtain the simplification
\begin{align}
\mathbb{E}\left[K\left(\theta,d\right)|Y=y\right]\propto	\sum_{i=2}^{n}\sum_{j=1}^{i}\left(2\pi_{ij}-1\right)\left(d_{ji}-d_{ij}\right) \label{eq:MAP}.
\end{align}
\cite{young1978consistent} show that \cite{condorcet1785essay}'s voting scheme is equivalent to choosing a ranking $d$ that minimizes \eqref{eq:MAP}. \cite{young1986optimal} establishes that this vote aggregation scheme is the unique rule that is unanimous, neutral, and satisfies reinforcement and independence of remote alternatives. It can also be viewed as a type of maximum likelihood estimator giving ``the ranking of all candidates that is most likely to be correct'' \citep[][p.114]{young1986optimal}.

Note that $\left(2\pi_{ij}-1\right)\left(d_{ji}-d_{ij}\right)$ is minimized by the MAP thresholding rule $d_{ij}^{MAP}=1\{\pi_{ij}>1/2\}$. When $\lambda=1$, our objective in \eqref{eq:obj2} reduces to \eqref{eq:MAP}. Consequently, the most granular version of our grading scheme minimizes the expected Kemeny distance between the assigned grades and the true rankings. When $\lambda<1$ we depart from the Kemeny criterion by calling elections a draw when they are close. Here, a close election is one where $\pi_{ij}<(1+\lambda)^{-1}.$

Condorcet rankings obey the famous Condorcet winner criterion: a unit that wins all pairwise elections between candidates (that is, satisfies $\pi_{ij}>1/2$ $\forall j \neq i$) will be ranked first. The following proposition reveals that when $\lambda<1$ our grades fulfill a modified version of the Condorcet winner criterion.

\begin{proposition}[$\lambda$-Condorcet Criterion]\label{prop1}
Suppose that firm $i$ satisfies $\pi_{ij} > (1+\lambda)^{-1} \ \forall \ j \neq i$. Then $d_i > d_j \ \forall \ j \neq i$. Moreover, suppose that firm $k$ satisfies $\pi_{ik} > (1+\lambda)^{-1}$ and  $\pi_{kj} > (1+\lambda)^{-1} \ \forall \ j \neq i, j \neq k$. Then $d_i > d_k > d_j \ \forall \ j \neq i, j \neq k$.
\end{proposition}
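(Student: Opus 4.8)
The plan is to prove both statements by an \emph{exchange argument} that exploits the fact that the Bayes risk in \eqref{eq:obj2} is additively separable across the $\binom{n}{2}$ unordered firm pairs. The one piece of bookkeeping I would record at the outset is the contribution of a single pair $\{a,b\}$ to the objective under a grade vector $d$: it equals $\pi_{ba}-\lambda\pi_{ab}$ when $d_a>d_b$, equals $\pi_{ab}-\lambda\pi_{ba}$ when $d_a<d_b$, and equals $0$ when $d_a=d_b$. Since $\lambda\le 1$ gives $(1+\lambda)^{-1}\ge 1/2$, the hypothesis $\pi_{ab}>(1+\lambda)^{-1}$ makes the first of these equal to $1-(1+\lambda)\pi_{ab}<0$, strictly below both $0$ and $\pi_{ab}-\lambda\pi_{ba}=(1+\lambda)\pi_{ab}-\lambda>0$. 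In other words, whenever $\pi_{ab}>(1+\lambda)^{-1}$, ranking the pair concordantly (with $d_a>d_b$) is the \emph{unique} per-pair minimizer, and it beats every other treatment of that pair by a strictly positive margin. This elementary comparison is the only quantitative input.

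For the first sentence, suppose $\pi_{ij}>(1+\lambda)^{-1}$ for all $j\ne i$ and let $d^*$ be any minimizer of \eqref{eq:obj2}. I would compare $d^*$ with the weak order $d'$ obtained by extracting firm $i$ and placing it strictly above every other firm, while leaving the relative order of all firms $m\ne i$ exactly as in $d^*$. This $d'$ is a genuine weak order, so its induced indicators satisfy the transitivity constraints \eqref{eq:constraints} and the logical constraint $e_{ij}+d_{ij}+d_{ji}=1$; it is therefore feasible. Only pairs that contain $i$ have their contribution altered, and for each such pair $\{i,j\}$ the move sets the treatment to the concordant orientation $d'_i>d'_j$, which by the first paragraph weakly lowers that pair's contribution, and strictly so unless $d^*$ already had $d^*_i>d^*_j$. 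Hence if $d^*$ failed to satisfy $d^*_i>d^*_j$ for all $j\ne i$, then $\mathcal{R}(d';\lambda)<\mathcal{R}(d^*;\lambda)$, contradicting optimality; so every optimal grade vector gives firm $i$ a strictly top grade.

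For the second sentence I would run the identical argument on the top two positions. Given in addition $\pi_{ik}>(1+\lambda)^{-1}$ and $\pi_{kj}>(1+\lambda)^{-1}$ for all $j\ne i,k$, compare an optimal $d^*$ with the weak order $d''$ that places $i$ strictly first, $k$ strictly second, and keeps every other firm in its $d^*$ relative order. Exactly the pairs incident to $i$ or $k$ change, and each is moved to its concordant orientation, using that $\pi_{ij}$, $\pi_{ik}$, and $\pi_{kj}$ all exceed $(1+\lambda)^{-1}$; hence $\mathcal{R}(d'';\lambda)\le\mathcal{R}(d^*;\lambda)$. Equality can hold only if $d^*$ already ranked every pair $\{i,\cdot\}$ and $\{k,\cdot\}$ concordantly, which says precisely that $d^*_i>d^*_m$ for all $m\ne i$ and $d^*_k>d^*_j$ for all $j\ne i,k$ — that is, $d^*_i>d^*_k>d^*_j$ — while strict inequality would contradict optimality. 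Either way the conclusion follows.

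The step I expect to require the most care is the verification accompanying the surgery: checking that relocating $i$ (and, in the second part, also $k$) to the top positions while preserving the relative order of the remaining firms perturbs \emph{only} the pairs incident to the relocated firms, so that all other pair contributions — and in particular whatever unfavorable orientations $d^*$ may have used among the untouched firms — are left exactly as in $d^*$, and that the resulting object is still a legitimate weak order (equivalently, feasible for the integer program). Everything else collapses to the one-line per-pair inequality established at the start.
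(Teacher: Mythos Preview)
Your proof is correct and follows the same exchange strategy as the paper: both exploit the additive separability of the risk across pairs to show that repositioning $i$ upward strictly lowers the objective unless $i$ is already strictly on top. The paper's proof does this in two incremental steps (first bumping $i$ just above any tied firms, then leapfrogging the next-higher grade level), whereas you relocate $i$---and in the second part $k$---all the way to the top in one shot; this is a slightly more streamlined execution of the same idea and has the minor advantage of not needing to treat the tie and strict-dominance cases separately.
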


We leave the short proof for \ref{sec: proofs}. By symmetry of the objective in \eqref{eq:obj2}, the firm assigned the lowest grade by our method must achieve the highest grade when the sign of the estimand being ranked is reversed. Hence, Proposition \ref{prop1} also implies that any Condorcet \emph{loser} -- i.e., any candidate firm $i$ with $\pi_{ji}>(1+\lambda)^{-1}$ for all $j\neq i$ -- must be assigned the lowest grade. 

Another well-known property of Condorcet rankings is that when no Condorcet winner exists, the top ranked candidate must be a member of the \citet{smith1973aggregation} set: the smallest non-empty subset of candidates such that every candidate in the subset is majority-preferred over every candidate not in the subset. The following proposition establishes a corresponding property of our grades in the case where $\lambda<1$.

\begin{proposition}[$\lambda$-Smith criterion]\label{prop2A}
Let $\mathcal{S}$ denote a collection of firms exhibiting the following dominance property: $\pi_{ij}>(1+\lambda)^{-1}$ $\forall i\in \mathcal{S},j \notin \mathcal{S}$. Then the top graded firms must be a member of $\mathcal{S}$.
\end{proposition}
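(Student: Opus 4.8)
The plan is to argue by contradiction via a block-exchange argument on an optimal grade vector, closely paralleling the strategy behind Proposition~\ref{prop1}. Let $d^*$ minimize the Bayes risk $\mathcal{R}(d;\lambda)$ in \eqref{eq:obj2} subject to the transitivity constraints; a minimizer exists because the objective depends on $d$ only through the induced preorder, of which there are finitely many. Suppose, contrary to the claim, that some firm $\ell\notin\mathcal{S}$ attains the top grade under $d^*$, so that $d^*_i\le d^*_\ell$ for every firm, and in particular for every $i\in\mathcal{S}$ (which is nonempty).

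The first step is to record the one-pair bookkeeping that drives the argument. Using $1-e_{ij}-d_{ij}=d_{ji}$, the contribution of pair $(i,j)$ to \eqref{eq:obj2} equals $1-(1+\lambda)\pi_{ij}$ when $d_i>d_j$, equals $1-(1+\lambda)\pi_{ji}$ when $d_i<d_j$, and equals $0$ when $d_i=d_j$. Hence whenever $\pi_{ij}>(1+\lambda)^{-1}$, choosing $d_i>d_j$ is the \emph{strict} minimizer of that pair's contribution: it yields a strictly negative value, versus $0$ for a tie and a strictly positive value for the reverse order (the latter because $\lambda\le 1$ forces $(1+\lambda)^{-1}\ge 1/2$, hence $\pi_{ji}<1/2$ and $(1+\lambda)\pi_{ji}<1$).

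The second step is to build the competitor $\tilde d$ obtained from $d^*$ by keeping the preorder that $d^*$ induces on $\mathcal{S}$, keeping the preorder it induces on $\mathcal{S}^c$, and stacking all of $\mathcal{S}$ strictly above all of $\mathcal{S}^c$. This is again a preorder, hence feasible, and it can be represented with grades in $[n]$. Now compare risks term by term. Pairs with both endpoints in $\mathcal{S}$, or both in $\mathcal{S}^c$, have unchanged contributions, since the internal orders are preserved. Every ``crossing'' pair $(i,j)$ with $i\in\mathcal{S}$, $j\notin\mathcal{S}$ now satisfies $\tilde d_i>\tilde d_j$, which by the dominance hypothesis $\pi_{ij}>(1+\lambda)^{-1}$ is the minimizer of that pair's contribution, so its contribution weakly decreases. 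Finally, for any $i\in\mathcal{S}$ the crossing pair $(i,\ell)$ had $d^*_i\le d^*_\ell$, i.e. $d^*_{i\ell}\ne 1$, so by the strictness in Step~1 its contribution \emph{strictly} decreases. Therefore $\mathcal{R}(\tilde d;\lambda)<\mathcal{R}(d^*;\lambda)$, contradicting optimality of $d^*$; hence no firm outside $\mathcal{S}$ can receive the top grade, and the top-graded firms lie in $\mathcal{S}$.

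The step I expect to require the most care is the feasibility/invariance claim in Step~2: that gluing a preorder on $\mathcal{S}$, a preorder on $\mathcal{S}^c$, and the rule ``everything in $\mathcal{S}$ outranks everything in $\mathcal{S}^c$'' yields a valid preorder and leaves every within-block pair's contribution untouched. Both facts are elementary, but they deserve an explicit sentence, since the Condorcet-cycle-in-indifferences phenomenon in the example preceding this proposition shows that feasibility of modified grade vectors cannot be taken for granted in general. Everything else is the sign accounting in the pairwise objective already deployed for Proposition~\ref{prop1}; the full write-up can be deferred to \ref{sec: proofs}.
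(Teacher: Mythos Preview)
Your proof is correct. Both your argument and the paper's are exchange arguments, but they modify the putative optimum in different ways. The paper performs an \emph{adjacent-level swap}: it takes the highest-graded members of $\mathcal{S}$ (at grade $\tilde d$) and the firms outside $\mathcal{S}$ at the next grade up (grade $d'$), and exchanges their grades. This requires tracking how the swap interacts with any firms in $\mathcal{S}^c$ that happen to share grade $\tilde d$, since their comparisons with both swapped blocks change. Your argument instead performs a \emph{global block lift}: keep the internal preorders on $\mathcal{S}$ and $\mathcal{S}^c$ and place all of $\mathcal{S}$ strictly above all of $\mathcal{S}^c$. This makes the bookkeeping cleaner---within-block pairs are literally unchanged, and every crossing pair moves to its pairwise minimizer by the dominance hypothesis, with at least one crossing pair $(i,\ell)$ strictly improving because $\ell\notin\mathcal{S}$ had the top grade. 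Your construction also directly yields the slightly stronger reading of the statement, that \emph{every} top-graded firm lies in $\mathcal{S}$, whereas the paper's swap argument as written only rules out firms in $\mathcal{S}^c$ being ranked strictly above the top member of $\mathcal{S}$. Your caution about verifying that the glued preorder is feasible is well placed but, as you note, immediate.
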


The proof is again left for the appendix. Symmetrically, Proposition \ref{prop2A} implies the firm assigned the lowest grade must be a member of the Smith loser set of candidates that are majority non-preferred to all others. Finally, we note that when $\lambda<1$ and no ordering is possible within the Smith set, all firms in the set will receive equal grades.

\begin{proposition}[Unordered $\lambda$-Smith candidates are tied]
Let $\mathcal{S}$ denote a collection of firms exhibiting the following dominance property: $\pi_{ij}>(1+\lambda)^{-1}$ $\forall i\in \mathcal{S},j \notin \mathcal{S}$. Moreover, suppose $\pi_{ij}<(1+\lambda)^{-1}$ $\forall (i,j) \in \mathcal{S}$. Then all firms in $\mathcal{S}$ receive the highest grade.
\end{proposition}

As with the preceding propositions, the proof appears in \ref{sec: proofs}.
 
\subsection{Extension to weighted loss}\label{sec:sqwt}

Ranking mistakes are likely to be more costly when the magnitude of the mistake is larger. The following family of loss functions weight pairwise concordances and discordances by the $p$'th power of the difference between the cardinal biases of the two firms:
\begin{flalign}\nonumber
L^p\left(d,\theta;\lambda\right)=\binom{n}{2}^{-1}\sum_{i=2}^{n}\sum_{j=1}^{i}
\Bigg[\underbrace{1\left\{ \theta_{i}>\theta_{j},d_{i}<d_{j}\right\}(\theta_{i}-\theta_{j})^p +1\left\{ \theta_{i}<\theta_{j},d_{i}>d_{j}\right\}(\theta_{j}-\theta_{i})^p }_{\text{discordant pairs}}-\\ \nonumber
\lambda\bigg(\underbrace{1\left\{ \theta_{i}<\theta_{j},d_{i}<d_{j}\right\}(\theta_{i}-\theta_{j})^p +1\left\{ \theta_{i}>\theta_{j},d_{i}>d_{j}\right\}(\theta_{j}-\theta_{i})^p }_{\text{concordant pairs}}\bigg) \Bigg].
\end{flalign}
A loss function corresponding to the $(p=2,\lambda=1)$ case was previously considered by  \cite{sobel1990complete}. The corresponding family of risk functions take the form
\begin{flalign}
    \mathcal{R}^p(d;\lambda)=\binom{n}{2}^{-1}\sum_{i=2}^{n}\sum_{j=1}^{i}\mu^p_{ji}d_{ij} +\mu^p_{ij}\left(1-e_{ij} - d_{ij} \right)-\lambda\mu^p_{ji}\left(1-e_{ij} -d_{ij} \right)-\lambda\mu^p_{ij}d_{ij},\nonumber
\end{flalign}
where $\mu^p_{ij}=\mathbb{E}\left[\max\{(\theta_i-\theta_j),0\}^p\mid Y=y\right]$. Note that $\lim_{p\rightarrow0} \mu_{ij}^p = \pi_{ij}$. Hence, one can think of our baseline risk function in \eqref{eq:obj2} as a limiting case of $\mathcal{R}^p$ as $p$ approaches zero. In what follows, we focus on this limiting case where $p\rightarrow0$ (``binary loss'') but also report results analyzing the case where $p=2$ (``square-weighted loss'') in \ref{appdx:sqwt}.

\subsection{Discordance rates}\label{sec:DR}

To summarize the reliability of the grades it is useful to report the Discordance Rate $DR(d^*)$, which gives the posterior expected frequency of discordances between pairs of grades. From \eqref{def: DR}, this quantity is trivial to compute, as it depends only on the optimized decisions $\{d_i^*\}_{i=1}^n$ and the posterior probabilities $\{\pi_{ij}\}_{i\neq j}$.

We can also define the conditional Discordance Rate between a specific pair of grades $g$ and $g'<g$ as
\begin{eqnarray*}
DR_{g,g'} & = & \frac{\sum_{i=2}^{n}\sum_{j=1}^{i-1}1\left\{ d_{i}^{*}=g\right\} 1\left\{ d_{j}^{*}=g'\right\} \mathbb{E}\left[1\left\{ \theta_{i}<\theta_{j}\right\} |Y=y\right]}{\sum_{i=2}^{n}\sum_{j=1}^{i-1}1\left\{ d_{i}^{*}=g\right\} 1\left\{ d_{j}^{*}=g'\right\} }\\
 & = & \frac{\sum_{i=2}^{n}\sum_{j=1}^{i-1}1\left\{ d_{i}^{*}=g\right\} 1\left\{ d_{j}^{*}=g'\right\} \pi_{ji}}{\sum_{i=2}^{n}\sum_{j=1}^{i-1}1\left\{ d_{i}^{*}=g\right\} 1\left\{ d_{j}^{*}=g'\right\} }.
\end{eqnarray*}
The denominator of each pairwise rate can be thought of as giving the number of rejections of the null hypothesis that a pair of firms discriminate equally in favor of the alternative that the firm assigned to group $g'$ is more biased than the firm assigned to group $g$. Hence, $DR_{g,g'}$ is an  analogue of the directional false discovery rate \citep{benjamini1995controlling,benjamini2005false}, giving the expected share of pairs with differing grades that are misranked. 

Because the conditional DRs are symmetric ($DR_{g,g'}=DR_{g',g}$), we report them as a lower triangular matrix. Note that the overall $DR$ is a weighted average of the conditional DRs with positive weight put on the ``on-diagonal'' terms $DR_{g,g}$, which are necessarily zero. When working with $p$-weighted loss a corresponding weighted version of the conditional discordance rate can be employed:
\begin{eqnarray*}
DR_{g,g'}^{p}&=&\frac{\sum_{i=2}^{n}\sum_{j=1}^{i-1}1\left\{ d_{i}^{*}=g\right\} 1\left\{ d_{j}^{*}=g'\right\} \mathbb{E}\left[\max\{(\theta_{i}-\theta_{j}),0\}^{p}\mid Y=y\right]}{\sum_{i=2}^{n}\sum_{j=1}^{i-1}1\left\{ d_{i}^{*}=g\right\} 1\left\{ d_{j}^{*}=g'\right\} \mathbb{E}\left[(\theta_{i}-\theta_{j})^{p}\mid Y=y\right]} \\
&=&\frac{\sum_{i=2}^{n}\sum_{j=1}^{i-1}1\left\{ d_{i}^{*}=g\right\} 1\left\{ d_{j}^{*}=g'\right\} (1-\mu_{ij}^{p})}{\sum_{i=2}^{n}\sum_{j=1}^{i-1}1\left\{ d_{i}^{*}=g\right\} 1\left\{ d_{j}^{*}=g'\right\} m_{ij}^{p}},
\end{eqnarray*}
where $m^p_{ij}=\mathbb{E}_{G}\left[(\theta_{i}-\theta_{j})^{p}\mid Y=y\right].$ The $p$-weighted discordance rate nests the corresponding unweighted rate as $DR^0_{g,g'}=DR_{g,g'}$. For any $p>0$, $DR^0_{g,g'}$ is guaranteed to lie in the unit interval. 

If we view the grades $\{d_i^{*}\}_{i=1}^n$ as exchangeable random variables and take $DR_{g,g'}$ as an assessment of the conditional misranking probability $\Pr(\theta_i>\theta_j \mid d_{i}^{*}=g,d_{j}^{*}=g')$ we can convert the conditional DR's into Bayes Factors expressing the informativeness of assigned grade pairs about the relative ranks of firms. By Bayes' rule,
\begin{eqnarray*}
BF_{g,g'}	&=&	\frac{\Pr\left(d_{i}^{*}=g,d_{j}^{*}=g'\mid\theta_{i}\leq\theta_{j}\right)}{\Pr\left(d_{i}^{*}=g,d_{j}^{*}=g'\mid\theta_{i}>\theta_{j}\right)}\\
	&=&	\frac{\Pr\left(\theta_{i}\leq\theta_{j} \mid d_{i}^{*}=g,d_{j}^{*}=g'\right)\Pr\left(\theta_{i}\leq\theta_{j}\right)}{\Pr\left(\theta_{i}>\theta_{j} \mid d_{i}^{*}=g,d_{j}^{*}=g'\right)\Pr\left(\theta_{i}>\theta_{j}\right)}\\
	&=&	\frac{1-DR_{g,g'}}{DR_{g,g'}},
\end{eqnarray*}
where we have used the fact that under a smooth $i.i.d.$ prior $\Pr\left(\theta_{i}>\theta_{j}\right)=1-\Pr\left(\theta_{i}\leq\theta_{j}\right).$ The Bayes Factor $BF_{g,g'}$ conveys the  odds that a randomly selected firm assigned grade $g$ is more biased than a randomly selected firm assigned grade $g'<g$. Corresponding $p$-weighted Bayes Factors can be formed in the same manner using $DR^p_{g,g'}$.

\section{Ranking Names}

As an introductory illustration of the methods developed in the previous section, we now rank the employer contact propensities of the names used in our correspondence experiment. The experiment utilized 76 first names, which were split equally between the nominal categories of: Black male, Black female, white male, and white female.

Table \ref{tab:sumstats_names} lists the mean contact rates of names in each of these categories, along with the number of applications. Distinctively white and female names were called back most often in the experiment, followed by white male names, then Black male names, with Black female names being called back least often. Though the same names were intended to be sent to each firm, the COVID-19 epidemic and other disruptions led to minor imbalances reflected in the sample counts. In our past work \citep{kline2021systemic} we were unable to reject the null hypothesis that the first names randomly assigned to our applications exhibited equal employer contact probabilities conditional on their nominal race and sex, which suggests these imbalances had little effect on mean contact rates by race or sex. For completeness, Table \ref{tab:sumstats_names} reports for each demographic group the $p$-value of a Wald test of the null hypothesis that the name specific contact rates are equal. The smallest such $p$-value is 0.24.

To get a sense of what share of the variation in name contact probabilities is likely explained by the names' putative race and sex labels, we compute a bias-corrected estimate of the between demographic group variance using the formula $\frac{G-1}{G}\left(S^2-\overline{s^{2}}\right)$ where $G=4$ is the number of demographic groups, $S^2$ is the sample variance across demographic groups of the point estimates  reported in Table \ref{tab:sumstats_names}, and $\overline{s^{2}}$ is the average squared standard error across those groups. This calculation yields an unbiased estimate of the between demographic group variance of $(0.011)^2$. Applying a corresponding calculation to the name specific means and standard errors yields a bias-corrected estimate of the variability in contact probabilities across all 76 first names of $(0.010)^2$. The finding that our between demographic group variance estimate exceeds our between name variance estimate suggests that the nominal race and sex of a first name explains nearly all of the variation across first names in contact probabilities; that is, we expect that a regression of the latent name specific contact probabilities on race and sex interactions would yield an $R^2$ very close to one. 

In principle, even if race and sex perfectly predict employer treatment of names, the causal factors generating this association could be other features of names that correlate strongly with race and sex. A candidate factor that has attracted substantial attention from social scientists is employer stereotypes about the likely productivity of individuals with different names \citep{fryer2004causes,gaddis2017names}. This hypothesis was evaluated by \cite{bertrand_mullainathan_2004}, who found that the average socioeconomic status of the first names considered in their experiment (as proxied by average maternal education) varied widely within race but were insignificantly related to contact rates.\footnote{Recent work by \cite{crabtree2022racially} directly elicits perceptions of educational attainment and income by first name on a variety of online platforms. Remarkably, they find that perceptions of social class exhibit variability across racially distinctive first names in the same race category comparable in magnitude to the variability found between race categories (see their Figure 4).} Because roughly half (40 out of 76) of our first names coincide with those studied by \cite{bertrand_mullainathan_2004}, our finding of insignificant contact probability differences within race and gender casts further doubt on the view that employer responses are driven primarily by features of names other than their likely race or sex. 

The finding that race and sex provide an accurate low dimensional summary of the 76 name specific contact probabilities suggests it is possible to build a highly informative ranking of the names involving just a few grades. Below, we investigate this conjecture in two ways. First, we examine how the expected Kendall's $\tau$ produced by our procedure scales with the number of grades assigned. Second, we treat each name's nominal race and sex as ``missing labels'' and study the extent to which the coarse grades assigned to first names by our ranking algorithm can recover these labels from data on firms' sample contact rates.

\subsection{Estimating $G$}

Abusing notation somewhat, let $i$ in this section refer to a first name and denote the number of applications with name $i$ sent in the experiment by $N_i$. The number of employer contacts received within 30 days by those applications is denoted by $C_i$. If the contacts are viewed as independent Bernoulli trials with name specific contact probabilities $p_i$ then the contact rate $C_i/N_i$ of name $i$ has mean $p_i$ and variance $p_i(1-p_i)/N_i$. This dependence of the variance on the contact probability complicates ranking exercises, as contact rates for names that deserve the best and worst grades -- that is, those with $p_{i}$ near one or zero -- will also be estimated with the least noise.

To stabilize the variance, we rank names according to a \cite{bartlett1936square} transformation of their contact rates 
\[
\hat \theta_i = \sin^{-1} \sqrt{C_i/N_i}.
\]
The logic of this transform follows from the observation that $\frac{d}{dx} \sin^{-1} \sqrt{x}=\left[2\sqrt{x(1-x)}\right]^{-1}$. Consequently, the Delta method implies $\hat \theta_i$ has asymptotic distribution $\mathcal{N}(\theta_i, (4N_i)^{-1}),$ where $\theta_i=\sin^{-1} \sqrt{p_i}$.

To estimate the distribution $G$ from which $\theta_i$ was drawn we first apply the non-parametric maximum likelihood (NPMLE) estimator developed by \cite{koenker2014convex} as implemented by the `GLmix' command in the R package REBayes \citep{koenker2017rebayes}. The NPMLE estimates a discrete approximation to the distribution of $\theta_i$ assuming that $\hat \theta_i\mid \theta_i,N_i \sim \mathcal{N}(\theta_i,(4N_i)^{-1})$. Supporting the maintained independence of $\theta_i$ from $N_i$, a regression of $\hat \theta_i$ on $\ln N_i$ yields a statistically insignificant relationship ($p$=0.17).\footnote{The variation in $N_i$ is primarily attributable to the fact that a subset of our first and last name pairs were taken from the study of \cite{bertrand_mullainathan_2004}, while the remaining name pairs were drawn from North Carolina data on speeding tickets and Census data. The number of last names considered differed across the two data sources, leading to imbalances in the average number of last names (and hence applications) per first name.}

A plot of the estimated marginal distribution $\hat G$ of $\theta_i$ produced by the NPMLE is provided in Figure \ref{fig: name_g}. The bars correspond to histograms of $\hat \theta_i$ while the yellow spikes represent the estimated probability mass function $d \hat G$ of the $\theta_i$. This discrete distribution does an excellent job matching the mean value of the $\hat \theta_i$ and its bias corrected variance, which we compute as its sample variance minus its average squared standard error $n^{-1}\sum_{i=1}^ns_i^2=n^{-1}\sum_{i=1}^n(4N_i)^{-1}$.

Figure \ref{fig: name_g} also plots the estimated density of $\theta_i$ produced by  \cite{efron2016empirical}'s log-spline estimator, which models $G$ as a 5th order spline in the exponential family. Estimation of the spline parameters is conducted via penalized maximum likelihood, where $N_i$ is treated as independent of $\theta_i$. The penalization parameter has been chosen by GMM to match unbiased estimates of $G$'s first two moments as closely as possible. Despite being continuous, the bimodal shape of the log-spline estimate is remarkably consistent with that of the NPMLE. For reference, the sample mean values of $\hat \theta_i$ for each nominal race and sex category are portrayed on the Figure as vertical lines. Despite not using race in the estimation procedure, the two modes of the log-spline estimate fall near the race specific means as do the modes of the NPMLE estimate.

The lower panel of Figure \ref{fig: name_g} converts these estimates back into probability points via the inverse transform $\theta \mapsto \sin(\theta)^2$. The NPMLE finds two large mass points: one at $p=0.226$ and another at $p=0.244$. The 1.8 percentage point gap between these mass points is very near the Black-white contact gap in the experiment of 2.1 percentage points. Likewise, the distance between the modes of the log-spline estimate is roughly 2.1 percentage points. The NPMLE also finds a third mass point at $p=0.260$, which lies just below the estimated average contact rate for distinctively white female names.

As noted earlier, the discrete $\hat G$ produced by the NPMLE is a data dependent approximation to the true $G$. Even if differences in the treatment of names are driven primarily by employer perceptions of their race and sex, it seems unlikely that the true $G$ is literally characterized by a few mass points, as small differences across names in their perceived race should generate corresponding contact rate differences. In fact, although three discrete modes are visible from the Figure, the NPMLE solution involves dozens of atoms with positive mass, most of which are imperceptible. In what follows, we rely on the log-spline estimate $\hat G$ of $G$ which, as in the theoretical analysis of Section \ref{sec:decision}, implies that ties are measure zero.

\subsection{Reporting possibilities}

The top left panel of Figure \ref{fig: names_res1} depicts the posterior contrast probabilities $\pi_{ij}=\Pr(\theta_i>\theta_j \mid \hat \theta_i,\hat \theta_j,N_i,N_j)$; see the Appendix for details on how these posteriors were computed. Names have been ordered according to their Condorcet rank (i.e., their grade when $\lambda=1$). To ease interpretation, we have labeled the name with the highest ranked contact probability 1 and that with the lowest ranked contact probability 76. Name pairs with adjacent ranks exhibit $\pi_{ij}$'s near 1/2, indicating that we have little confidence in their relative order. Reassuringly, names with vastly different ranks exhibit $\pi_{ij}$'s near 0 or 1, indicating that the experimental data are informative about the relative ordering of these pairs. 

The top right panel of Figure \ref{fig: names_res1} depicts the Discordance Rate that arises from minimizing $\mathcal{R}(d;\lambda)$ -- that is, from solving \eqref{eq:obj2} subject to \eqref{eq:constraints} -- for different choices of $\lambda$. The number used to demarcate each solution corresponds to the number of distinct grades the integer linear programming procedure produces for that choice of $\lambda$. A sharp ``elbow'' emerges around $\lambda=0.18$, above which the $DR$ grows rapidly. 

The bottom panel depicts the trade off between grade reliability $1-DR$ and informativeness $\bar \tau$ associated with our choice of $\lambda$. The data are potentially quite informative about name rankings: as $\lambda$ approaches 1, the expected rank correlation $\bar \tau$ approaches 0.44. However, the reliability of such a report would be fairly low, yielding a Discordance Rate of 0.28. For comparison, we also show the results of naively ranking based on $\hat \theta$ or the EB posterior mean $\bar \theta_i = \mathbb{E}[\theta_i|Y=y]$. Remarkably, both naive approaches yield ranks with $\bar \tau$ and $DR$ very similar to those produced by our procedure when $\lambda=1$.

To improve the reliability of the grades, we set $\lambda=.25$, implying via equation \eqref{eq:thresholds} that, in the absence of transitivity considerations, we would require 80\% posterior certainty of each pairwise ranking decision that is not an abstention. This choice yields two grades that exhibit a remarkably high level of informativeness ($\bar \tau=0.29$) and an acceptable level of reliability ($DR=.07$). For comparison, lowering the implicit posterior threshold to 70\% by setting $\lambda=0.41$ would yield three grades and increase their informativeness by 11\% (to $\bar \tau=0.32$) at the expense of a 21\% increase in $DR$. Conversely, requiring $\lambda < 0.18$ would generate only one grade, yielding both $\bar \tau$ and $DR$ of zero by construction.

\subsection{Grades and demographics}

Figure \ref{fig: names_ranks} lists the first names according to their Condorcet ranking, along with the posterior mean of each name's contact probability $p_i=\sin(\theta_i)^2$. In addition to the posterior means, which are depicted as dots, we report posterior credible intervals connecting the 2.5th percentile of each name's posterior distribution of contact probabilities to the 97.5th percentile of its posterior distribution. One should expect that approximately 72 (i.e., 95\%) of these 76 intervals contain their name's true latent contact rate.\footnote{The asymmetry of the credible intervals reflects both that the estimated mixing distribution $\hat G$ of $\theta_i$ is asymmetric and that we have fed the interval limits through the nonlinear transformation $\theta \mapsto \sin(\theta)^2$.} While the credible intervals tend to be fairly short---spanning between two and three percentage points in most cases---there is clearly enough uncertainty about each name's contact probability to significantly complicate the task of ranking them. 

The Condorcet ranks can be thought of as a one dimensional encoding of the $\binom{76}{2}=2,850$ pairwise contrasts depicted in the upper left panel of Figure \ref{fig: names_res1}. When the posteriors are transitive, as they are in this case, this reduction amounts to the most likely ordering of the name contact probabilities. Variation in $N_i$ across names, and hence the precision with which contact rates are measured, could in principle generate substantial non-monotonicity of the posterior mean in the Condorcet rank. In practice, however, names' Condorcet rankings are very nearly monotone in their posterior means.

The Condorcet ranks are extremely correlated with race. Of the top 38 ranked first names, only 8 are distinctively Black. Though the three top ranked names ``Misty,'' ``Heather,'' and ``Laurie'' are all distinctively female, the presumptive sex of a name turns out to be only weakly related to its Condorcet rank: 19 of the top 38 names are distinctively male. Hence, the Condorcet ranks manage to recover the race labels from contact rates with very little error but serve as unreliable proxies of a name's sex. 

By construction, the Condorcet ranks exhibit the strongest expected rank correlation with the latent $\theta$ ranks. The coarse ranks that emerge when $\lambda<1$ sacrifice rank correlation in exchange for the prospects of incurring fewer mistakes. Each name's color reflects its assigned grade (i.e., its coarse rank). A depiction of how the grades vary with name-specific contact rates and their standard errors is provided in Appendix Figure \ref{fig:grade_frontiers_names}. As expected, names with higher contact rates tend to earn the better grade $\star\star$. However, heteroscedasticity in the estimates prevents the grades from being characterized by a single cutoff contact rate.

Though we saw earlier that the expected rank correlation of our grades with the true latent ranks is 0.29, it is also of interest to know how much $p_i$ varies across grades. As described in \ref{sec: posteriors}, we can use our EB posteriors to compute an estimate of the variance of $p_i$ across grades. Though our procedure assigns only two grades to the names, we estimate that the (name-weighted) between grade standard deviation in contact probabilities is 0.006. Since the marginal standard deviation of $p_i$ is roughly 0.010, a regression of the latent $p_i$ on our grades should yield an $R^2$ of 35\%.

The coarse grades that emerge from our procedure continue to align closely with our race labels: 35 of the 53 names (66\%) in the top grade are distinctively white, while just 3 of the 23 names (13\%) in the second grade are white. Notably, the top two names are also female; however, they do not appear in their own grade. Hence, a two-group ranking recovers the missing race label with limited error and, consistent with our findings in Table \ref{tab:sumstats_names}, suggests that white female names are particularly favored.

It is natural to wonder if a solution with more grades would be more predictive of sex. Appendix Figure \ref{fig:pseudoR2} reports the pseudo-$R^2$ and Area under the Curve (AUC) from a series of logistic regressions of the name's sex on grade indicators for different choices of $\lambda$. Note that if we were to set $\lambda=1$, this regression would necessarily predict sex perfectly, as every name would receive its own dummy indicator. However, the four-grade solution with the smallest value of $\lambda$ yields a pseudo-$R^2$ for sex of 0.012. With five grades we find a pseudo-$R^2$ for sex of 0.034. By contrast, a corresponding logistic regression of race on assigned grades yields pseudo-$R^2$s for four- and five-grade solutions of 0.28 and 0.23, respectively. 

Appendix Figure \ref{fig: names_ranks_sq_weighted} shows the grades that result from minimizing expected square-weighted loss $\mathcal{R}^2(d,\lambda)$, which weights large mistakes in pairwise rankings more heavily than small mistakes. As above, we set $\lambda=0.25$, which now yields five distinct grades. Once again, the grade categories are better predictors of race than sex. As shown in Appendix Figure \ref{fig:pseudoR2_sq}, a logistic regression of sex on these five grade categories yields a pseudo-$R^2$ of roughly 0.05, while a corresponding race regression yields a pseudo-$R^2$ of 0.24. \ref{appdx:sqwt} provides further details on the square-weighted rankings. 

We conclude that our grades are fairly strong predictors of a name's race but not its sex. Given that the overall gender gap in contact rates is indistinguishable from statistical noise in our experiment, the failure to predict gender is not particularly surprising. The ability to predict race for a wide range of choices of $\lambda$, however, suggests that our grading scheme can be effective at detecting latent group structure even when the number of units being ranked is relatively modest.

\section{Ranking firms}

We turn now to ranking firms in their relative treatment of Black versus white names. The conduct of each firm $i$ in our experiment is characterized by the race-specific contact probabilities $(p_{iw},p_{ib})$. These probabilities represent the hypothetical 30 day contact rates that would arise for applications with distinctively white and Black names, respectively, if we were to sample an infinite number of job vacancies from firm $i$ and send each job four pairs of applications. The sample contact rates $(\hat p_{iw},\hat p_{ib})$ provide unbiased estimates of these contact probabilities.

To mitigate the potential influence of firm heterogeneity in baseline contact rates on our measure of discrimination, we focus on the following proportional measure of bias against Black names at firm $i$:
\begin{align*}
    \theta_i=\ln(p_{iw})-\ln(p_{ib}).
\end{align*}
Our estimator of $\theta_i$ will be the plug-in analog $\hat \theta_i = \ln(\hat p_{iw})-\ln(\hat p_{ib})$. Because the number of applications sent to each firm is large, we employ the Delta method to construct a standard error $s_i$ for each $\hat \theta_i$ based on the job-clustered sampling covariance matrix of the sample contact rates. Although $\hat \theta_i$ is not fully variance-stabilized, the log transform removes any direct dependence of the variance on $\theta_i$ itself.\footnote{Specifically, a second-order Taylor expansion of $\hat{p}_{iw}/\hat{p}_{ib}=\exp\left(\hat{\theta}_{i}\right)$ around the point $(p_{iw},p_{ib})$ yields the approximation $\mathbb{V}\left[\hat{p}_{iw}/\hat{p}_{ib}\right]\approx\theta_{i}^{2}\left\{ \frac{\mathbb{V}\left[\hat{p}_{iw}\right]}{p_{iw}^{2}}+\frac{\mathbb{V}\left[\hat{p}_{ib}\right]}{p_{ib}^{2}}-2\frac{\mathbb{C}\left[\hat{p}_{iw},\hat{p}_{ib}\right]}{p_{iw}p_{ib}}\right\}.$ Consequently, the Delta method implies that $\mathbb{V}\left[\hat{\theta}_{i}\right]\approx\frac{\mathbb{V}\left[\hat{p}_{iw}\right]}{p_{iw}^{2}}+\frac{\mathbb{V}\left[\hat{p}_{ib}\right]}{p_{ib}^{2}}-2\frac{\mathbb{C}\left[\hat{p}_{iw},\hat{p}_{ib}\right]}{p_{iw}p_{ib}}.$}

In what follows, we exclude the eleven firms in the original experiment with callback rates below 3\% or fewer than 40 total sampled jobs, since the estimated contact ratios for these firms may be unreliable. Summary statistics for the remaining estimation sample of 97 firms are provided in Table \ref{tab:sumstats}. The unweighted average value of $\hat \theta_i$ across these 97 firms is 0.095, implying the typical firm in our sample exhibits a bias against Black names of roughly 10\%. Detailed point estimates and uncertainty measures for all 97 firms used in our analysis are provided in Appendix \ref{tab: results_detail}.

Twenty-one of the 97 estimated contact gaps are negative, indicating a preference for distinctively Black names. The firm-specific estimates are noisy, however, possessing an average standard error of 0.104. To test whether all firms in fact weakly prefer white to Black names (i.e., the joint null that $\theta_i\geq0$ $\forall i\in[n]$) we apply the high dimensional inequality testing procedure of \cite{bai2021two}. This procedure yields a $p$-value of 0.94, suggesting the observed negative point estimates are likely attributable to chance. 

Although the asymptotic variance of $\hat \theta_i$ does not mechanically depend on $\theta_i$, it is possible for $\theta_i$ and $s_i$ to be correlated. The top panel of Appendix Figure \ref{fig: theta_vs_s} plots $\hat \theta_i$ against $s_i$, revealing that firms with more precise estimates tend to show less bias against Black names. The Spearman correlation between between $\hat \theta_i$ and $s_i$ is 0.36 ($p<0.001$).

\subsection{A model of precision dependence} \label{sec: precision_model}
In light of the above findings, we assume that each $\theta_i$ is non-negative and may depend (statistically) on its standard error $s_i$. A simple model satisfying these criteria is:
\begin{equation}
\label{eq:dependence_model}
\theta_{i}=\exp\left(\beta\ln s_{i}+\ln v_{i}\right)=s_{i}^{\beta}v_{i}, \quad v_{i} \mid s_i \sim G_v \text{ for all } i\in [n]. 
\end{equation}
The parameter $\beta$ governs how the conditional distribution of bias varies with the standard error $s_i$. The latent variable $v_i$ captures heterogeneity in discriminatory conduct among firms with similar standard errors, and follows a distribution $G_v$ with strictly positive support. When $\beta$ is positive, both the mean and variance of $\theta_i$ increase monotonically with $s_i$. To complete the model, we link our estimates $\hat \theta_i$ to $\theta_i$ as follows:
\[
\hat \theta_i = \theta_i + s_ie_i, \quad e_i \mid s_i,v_i \sim \mathcal{N}(0,1) \text{ for all } i\in [n],
\]
where $s_ie_i$ is the noise in $\hat{\theta}_{i}$ attributable to the fact that a finite number of jobs were sampled from firm $i$.

To evaluate the plausibility of this model, we scrutinize some of the moment conditions it implies. Letting $\mathbb{E}[v_{i}|s_{i}]=\mu_{v}>0$ and $\mathbb{V}(v_{i}|s_{i})=\sigma_{v}^{2}>0$, consider the following ``studentized'' version of $\hat \theta_i$:
\[
T_{i}=\dfrac{\hat{\theta}_{i}-s_{i}^{\beta}\mu_{v}}{\sqrt{s_{i}^{2\beta}\sigma_{v}^{2}+s_{i}^{2}}}.
\]
Our model implies that $T_i$ has mean zero and marginal variance one. Moreover, $T_i$ should be independent of $s_i$. These restrictions imply the following four moment conditions:
\begin{align}
\mathbb{E}[T_{i}]=0,\ \mathbb{E}[T_{i}s_{i}]=0,\ \mathbb{E}[T_{i}^{2}-1]=0,\ \mathbb{E}[(T_{i}^{2}-1)s_{i}]=0. \label{eq:Tmoments}
\end{align}
Imposing these moment conditions via two-step efficient GMM yields the parameter estimates reported in Table \ref{tab:gmm_table}. The minimized value of the GMM criterion function suggests the model's over-identifying restrictions -- which test the joint requirement that the $T_i$ have mean zero and constant variance across all values of $s_i$ -- are satisfied ($p=0.97$). The GMM estimate of $\beta$ is $\hat \beta \approx 1/2$, indicating that $\theta_i$ is roughly proportional to $\sqrt{s_i}$. The large estimated value of $\sigma_v$ reveals that discriminatory conduct varies substantially among firms with similar standard errors. 

The top panel of Appendix Figure \ref{fig: theta_vs_s} superimposes the estimated conditional expectation function $\mathbb{\hat E} [\theta_i|s_i]=s_i^{\hat \beta} \hat \mu_v$ on the scatterplot of $\hat \theta_i$ against $s_i$. Consistent with the $J$-test from GMM estimation, the estimated conditional mean fits the cloud of points closely. The bottom panel of Appendix Figure \ref{fig: theta_vs_s} plots values of the estimated residual $\hat T_i=\dfrac{\hat{\theta}_{i}-s_{i}^{\hat \beta}\hat \mu_{v}}{\sqrt{s_{i}^{2\hat\beta}\hat \sigma_{v}^{2}+s_{i}^{2}}}$ against $s_i$. Consistent with our model, $\hat T_i$ exhibits roughly constant variance and a mean near zero throughout the observed range of $s_i$.

\subsection{Estimating $G$}

To estimate the population distribution of contact penalties, we deconvolve the residual $\hat v_i= \hat \theta_i / s_i^{\hat \beta}$ which, by the Delta method, obeys
 \begin{flalign*}
 \hat{v}_i \ | \ v_i,s_{i} \sim \mathcal{N}\left(v_i,s_i^{2(1-\beta)}\right), \ \ v_i \ | \ s_i \sim G_v, \text{ for all } i\in[n].
 \end{flalign*}
Relying again on a variant of \cite{efron2016empirical}'s log-spline estimator, we parametrize $G_v$ as a fifth order natural spline with strictly positive support. The spline parameters are estimated by penalized maximum likelihood with the penalty term chosen to minimize the distance to our earlier GMM estimates $(\hat \mu_v, \hat \sigma_v^2)$ of the first two moments of $v_i$. We then integrate over the empirical distribution of $s_{i}$ to convert the estimated $\hat{G}_{v}$ into an estimate $\hat{G}_{\theta}$ of the distribution of contact gaps $\theta_{i}$ .

Figure \ref{fig: poisson_g} plots the log-spline estimate $\hat{G}_\theta$ overlaid against the histogram of contact gap estimates.  $\hat{G}_\theta$ is less dispersed than the histogram, reflecting the noise in the estimates $\{\hat{\theta}_i\}_{i=1}^n$. Unlike with our earlier analysis of names, the density $\hat{G}_\theta$ is unimodal but highly skewed. While most firms exhibit little bias against Black names, some exhibit large biases of 20-40\%. By construction, no firms are estimated to discriminate against white names.

As a robustness check, we also compute NPMLE estimates using the GLVmix procedure developed by \cite{koenker2017rebayes}, which estimates a bivariate discrete distribution for $(\theta_i,N_is_i^2)$ under the assumption that $\theta_i$ is independent of $N_i$. The resulting marginal distribution of $\theta_i$ exhibits many mass points and is also unimodal. The NPMLE estimate of the variance of the $\theta_i$'s departs somewhat from both the log-spline estimate and a simple bias-corrected variance estimator $n^{-1}\sum_i [(\hat \theta_i - \bar \theta)^2 -s_i^2]$. However, the NPMLE and log-spline estimates appear comparable in their overall shape. Since a discrete distribution with exact ties seems implausible, we rely again on the log-spline estimates in what follows. The EB posterior distribution of $\theta_{i}|Y_{i}$ inherits the continuity of the log-spline deconvolution estimate of the prior distribution, which has the added benefit of simplifying computation of posterior credible intervals for each $\theta_i$.

\subsection{Industry effects} \label{sec: industry_rfe}

In \cite{kline2021systemic} we found large differences in the magnitude of contact gaps across 2-digit NAICS industries. Many of these industries have only 2 or 3 firms, precluding a fixed effects approach to incorporating industry affiliation into the model. We therefore employ a hierarchical random effects specification of $v_i$ taking the form:
\[
    v_{i} = \eta_{k(i)}\xi_{i}, \label{eq:doubleG} 
\]
\[
    \xi_{i} \ | \ s_{i},\eta_{k(i)} \sim G_{\xi}, \ \ i\in\{1,...,n\}, 
\]
\[
\eta_k \ | \ \mathbf{s}_{k} \sim G_{\eta}, \ \ k\in\{1,....,K\},
\]
where the function $k:\{1,\dots,n\}\rightarrow\{1,\dots,K\}$ returns a firm's industry and $\mathbf{s}_{k}$ is the vector of standard errors for all firms with $k(i)=k$. The industry effect $\eta_{k(i)}$ captures correlation in discriminatory conduct among firms in the same industry, while the firm effect $\xi_i$ captures departures from the industry average. As a normalization we assume $\mathbb{E}[\eta_k]=1$, which implies $\mathbb{E}[\xi_i]=\mu_v$.

The marginal variance of $v_{i}$ in this model is $\sigma_{v}^{2}=\sigma_{\eta}^{2}\sigma_{\xi}^{2}+\sigma_{\eta}^{2}\mu_{v}^{2}+\sigma_{\xi}^{2}$, where $\sigma_{\xi}^{2}$ gives the variance of $\xi_i$ and $\sigma_{\eta}^{2}$ the variance of $\eta_k$. To separately identify the between and within industry variance components, we add two new moment conditions to the set listed in \eqref{eq:Tmoments}. Denote the average value of $\hat v_i$ in industry $k$ by
\[
\ensuremath{\bar{v}_{k}= n_{k}^{-1}\sum_{i:k(i)=k}\hat{\theta}_{i}/s_{i}^{\beta}}=n_{k}^{-1}\sum_{i:k(i)=k}v_{i}+n_{k}^{-1}\sum_{i:k(i)=k}e_{i}/s_{i}^{\beta},
\]
where $n_k$ gives the number of firms in industry $k$. The variance of $\bar{v}_k$ in this model can be shown to be $V_k \equiv \left(\sigma_{\eta}^{2}\sigma_{\xi}^{2}/n_{k}+\sigma_{\eta}^{2}\mu_{v}^{2}+\sigma_{\xi}^{2}/n_{k}\right)+n_{k}^{-1}\sum_{i:k(i)=k}s_{i}^{2\left(1-\beta\right)}$. Letting $\ensuremath{\bar{s}_{k}=n_{k}^{-1}\sum_{i:k(i)=k}s_{i}}$ denote the average standard error in industry $k$, our two new moment conditions can be written
\[
\mathbb{E}\left[\left(\bar{v}_{k}-\mu_{v}\right)^{2}-V_{k}\right]=0,\quad\mathbb{E}\left[\left\{ \left(\bar{v}_{k}-\mu_{v}\right)^{2}-V_{k}\right\} \bar{s}_{k}\right]=0.
\]
The first condition simply equates the empirical squared deviations of the $\bar{v}_{k}$ around the model implied mean to the model implied variance. The second condition prohibits heteroscedasticity with respect to $\bar{s}_k$.

GMM estimates of the parameters of this hierarchical model are reported in the second column of Table \ref{tab:gmm_table}. The model's over-identifying restrictions again appear to be satisfied ($p=0.95$). While the variance $\sigma^2_{\eta}$ of the industry component is estimated to be nearly 10 times as large as the variance $\sigma^2_{\xi}$ of the firm specific component, the multiplicative influence of these components on $v_i$ implies that roughly half of the marginal variance in $v_{i}$ stems from within industry variation.\footnote{By the law of total variance $\mathbb{V}\left[v_{i}\right]=\mathbb{V}[\mathbb{E}[v_{i}|k(i)]]+\mathbb{E}[\mathbb{V}[v_{i}|k(i)]].$ Plugging in $v_i=\eta_{k(i)}\xi_i$ reveals that the within share $\mathbb{V}[\mathbb{E}[v_{i}|k(i)]]/\mathbb{V}\left[v_{i}\right]$ evaluates to $(\sigma_{\eta}^2+1)\sigma_{\xi}^2/\sigma_v^2.$} 

To identify the marginal distribution of $\theta_i$, we assume that both $G_{\eta}$ and $G_{\xi}$ lie in the exponential family parameterized by a 5th order spline. Generalizing \cite{efron2016empirical}'s log-spline estimator to the hierarchical case, these distributions are estimated by penalized maximum likelihood (see \ref{sec: industry_fx} for details). The two penalty parameters in this likelihood function are chosen so that the resulting distributions match GMM estimates of the between-industry and total variances of $\theta_{i}$.

Estimates of $G_{\xi}$ and $G_{\eta}$ are displayed in the top panel of Figure \ref{fig: industry_g}. Table \ref{tab:double_g_table} reports moments of the within- and between-industry distributions implied by the log-spline estimates as well as moments of the overall contact ratio $\theta_{i}=s_i^{\beta}\eta_{k(i)}\xi_i$. Standard errors for the moments are computed via the Delta method, treating the fifth-order splines as correctly specified models for the log density functions. The  mean contact gap, between-industry standard deviation, and total standard deviation reported in Table \ref{tab:double_g_table} closely match the corresponding GMM estimates of these parameters in Table \ref{tab:gmm_table}.

As can be seen in Figure \ref{fig: industry_g}, the industry component $\eta_{k}$ is more variable than the firm component $\xi_i$ and exhibits positive skew and excess kurtosis, reflecting that some industries feature particularly heavy discrimination against Black names. Recall however that the location of the industry effect distribution is not informative as we have normalized $\mathbb{E}[\eta_k]=1$. The bottom panel of Figure \ref{fig: industry_g} shows that the implied distribution of $\theta_i$ is similar to the estimate from the model without industry effects in Figure \ref{fig: poisson_g}, with a peak at small
contact penalties and a long right tail. As expected, the deconvolved distribution is more compressed than the empirical distribution of estimated contact gaps.

\subsection{Reporting possibilities}

Figure \ref{fig: Pis} plots the pairwise posterior ranking probabilities $\pi_{ij}$ with firms ordered by their rank under $\lambda=1$. Following our earlier convention with the names, these ranks range from 1 (the largest contact penalty) to 97 (the smallest contact penalty). Panel (a) shows results from our baseline specification with the log-spline estimate of the marginal mixing distribution as prior, while panel (b) reports results based on the hierarchical log-spline model with industry effects. Because the firm assigned rank 1 is deemed most discriminatory, many other firms are more likely than not to have lower values of $\theta$. Firms of middling rank, on the other hand, are more difficult to distinguish from others. Including industry effects tightens the posteriors, which leads the $\pi_{ij}$'s to become more dispersed around 1/2. This phenomenon is apparent in the more distinct ridge along the diagonal of the matrix in panel (b) compared to the baseline specification in panel (a).

Figure \ref{fig: Pis_graded} displays only the pairwise probabilities that satisfy the naive thresholding rule $\pi_{ij}>(1+\lambda)^{-1}$ when $\lambda$ has been set to 0.25. The resulting frontier implies numerous transitivity violations. For example, in panel (a), firm \#9 cannot be distinguished from firm \#4 or firm \#49, suggesting each of these pairs in isolation would be labeled a tie. However, firm \#49 is clearly distinguishable from firm \#4, yielding a contradiction. Super-imposed on the figure we show a frontier corresponding to the three grades that solve \eqref{eq:obj2} subject to \eqref{eq:constraints} when $\lambda=0.25$. These frontiers can be viewed as a transitivity-constrained version of the thresholding rule. 

Figure \ref{fig:poisson_binary_lambda} plots the number of distinct grades that result from minimizing $\mathcal{R}(d;\lambda)$ along with the  Discordance Rate of those grades as a function of the parameter $\lambda$. As expected, the number of grades tends to increase with $\lambda$ as does the $DR$. In the absence of industry effects, setting $\lambda=0.25$ yields three groups and an unconditional DR of roughly \binarydr\%. Introducing industry effects yields four groups and decreases the DR to \industrydr\%. 

Figure \ref{fig:PPF} illustrates the empirical tradeoff between the information content of our grades, quantified by the expected rank correlation $\bar \tau$, and their reliability, as quantified by the Discordance Rate. Without industry effects, setting $\lambda=1$ yields $\bar \tau=\binarylambdaone$ and a Discordance Rate of \binarydrlambdaone. Including industry effects increases the $\bar \tau$ of the Condorcet ranks to \industrylambdaone\ and lowers their DR to \industrydrlambdaone. In contrast, ranking naively on $\hat \theta_{i}$ yields both a higher Discordance Rate and lower $\bar \tau$ than the Condorcet ranks, indicating such an approach is both less informative and less reliable. Interestingly, ranking based upon the EB posterior means yields a $\bar \tau$ and DR essentially equivalent to the Condorcet ranks.\footnote{While ranking based upon posterior means is known to possess certain optimality properties when $G$ is normal and the normal noise is homoscedastic \citep{portnoy1982maximizing}, our environment features both heteroscedasticity and a decidedly non-normal mixing distribution $\hat G$.} 

To improve the reliability of the Condorcet ranks, we set $\lambda=0.25$. In the absence of transitivity violations, this choice of $\lambda$ requires a posterior threshold of at least 80\% to make pairwise ranking decisions. As depicted in Figure \ref{fig: Pis_graded}, however, numerous transitivity violations emerge in this example. Resolving these violations raises the required posterior certainty above 80\% in most instances, yielding a Discordance Rate of only \binarydr \% in the baseline specification without industry effects and \industrydr \% in the hierarchical specification with industry effects. Fortunately, the resulting grades remain highly informative: $\bar \tau$ is \binarytau \ in our baseline specification and \industrytau \ when industry effects are included.

\section{Discrimination report cards}

The report cards generated by our grading procedure provide a concise, low-dimensional summary of differences in discrimination across firms. This can be seen in Figure \ref{fig:poisson_binary}, which displays the report card results for the baseline specification without industry effects. In addition to the report card grades, the Figure plots a posterior mean estimate of each firm's bias $\theta_i$ along with 95\% credible intervals, which are constructed by connecting the posterior 2.5th percentile of $\theta_i$ to the posterior 97.5th percentile. The lower limit of each credible interval is positive as a result of our support restriction ruling out bias against white applicants. The firms are ordered by their Condorcet ranks (i.e., their grades under $\lambda=1$). In this draft we have replaced each firm's name with the rank of its bias estimate $\hat \theta_i$, which allows us to compare firm orderings across loss functions. Firms that are federal contractors, and hence subject to higher regulatory standards regarding equal opportunity laws, have been listed in black, while those that are not contractors are listed in gray. After the paper has undergone peer review, we intend to label the report card with firms' actual names. 

Setting $\lambda=0.25$ generates a report card with three grades, represented in Figure \ref{fig:poisson_binary} by a number of $\star$'s between one (the worst grade) and three (the best). The shading of credible intervals reflects the grade assigned to each firm. Most firms receive the middle grade of $\star\star$, which reflects both the noise in our estimates and the shape of the estimated distribution $\hat G$. By contrast, only two firms out of 97 are assigned the grade of $\star$, suggesting they are the heaviest discriminators against Black names. Fourteen firms are assigned the score of $\star\star\star$, which indicates that this group is the least-biased against Black applicants.

While the Condorcet ranks of the posterior means are highly correlated with the ranks of the bias estimates, heteroscedasticity makes the correlation less than perfect. For example, the firm with the sixth largest point estimate of bias (AKA firm \#6) has a Condorcet rank of 1 and the largest posterior mean, while the firm with the largest bias point estimate (AKA firm \#1) has a Condorcet rank of 2 and the second largest posterior mean. This rank reversal reflects that firm \#6 has a larger standard error than firm \#1, which leads the EB posterior mean to apply more shrinkage of the point estimate towards the center of the distribution. 

Appendix Figure \ref{fig:grade_frontiers} depicts the relationship between report card grades and firm-specific bias estimates and standard errors. Firms assigned the best grade of $\star\star\star$ tend to have both small contact gap estimates and standard errors, while firms assigned the grade $\star\star$ range widely in their standard errors but have modest contact gap estimates falling uniformly below 0.2. Firms assigned the worst grade of $\star$ exhibit very large contact gap estimates and widely varying standard errors. Appendix Figure \ref{fig: binary_all_grades} depicts the grade assignments that result from different choices of $\lambda$.

Though we have used stars to represent the firm ranks, it is important to remember that these grades were designed to convey ordinal rather than cardinal information. One of us \citep{gu2023comment} has recently cautioned against focusing excessively on rankings without also considering absolute standards of conduct. There is nothing in our integer linear programming problem that guarantees a grade of $\star$ implies a particularly egregious level of discrimination. Conversely, there is nothing  that guarantees firms assigned a grade of $\star\star\star$ exhibit no bias against Black names. 

As it turns out, however, the grades assigned by our procedure yield groups of firms with large cardinal differences in contact gaps. The firms assigned the grade of $\star\star\star$ have an average posterior mean value of $\theta_i$ of 0.03, while the two firms assigned the worst grade exhibit posterior means indicating a 24\% penalty against Black names on average. Notably, both of these heavily discriminating firms are federal contractors.

Our past work \citep{kline2021systemic} found that federal contractors, who are subject to monitoring by OFCCP for compliance with equal employment laws, tend to be substantially less biased against Black names on average, which is consistent with a variety of other evidence on the causal effects of affirmative action provisions on hiring behavior \citep[e.g.,][]{mccrary2007effect,kurtulus2016impact,miller2017persistent}. Indeed, an early audit study of federal contractors by \cite{newman1978discrimination} found evidence of a systematic preference for Black over white applicants among such firms. It is somewhat surprising then that the Condorcet ranks indicate that the four most heavily discriminating firms are all federal contractors. This finding is, to some extent, a reflection of the fact that the vast majority of the firms in our sample of large employers are contractors (63 of 97). The mean Condorcet rank of federal contractors is 54 (with rank 1 showing the most bias against Black applicants) while the mean Condorcet rank of non-contractors is 42. 

Although a legal precedent for audit studies has yet to be established, a commonly applied standard in discrimination cases is the so called ``four fifths rule,'' described in the Uniform Guidelines on Employee Selection \citep{equal1978department} which state that
\begin{quote}
A selection rate for any race, sex, or ethnic group which is less than four-fifths (4/5) (or eighty percent) of the rate for the group with the highest rate will generally be regarded by the Federal enforcement agencies as evidence of adverse impact.   
\end{quote}
Our estimates suggest the contact rates for fictitious applicants in our experiment may have violated this standard.

\subsection{Incorporating industry effects}

As a result of the substantial variation in contact penalties across industries, a report card that incorporates industry information is substantially more informative. This can be seen in Figure \ref{fig:poisson_binary_industry}, which displays discrimination report card results based on a model with industry effects. Adding industry information while maintaining the preference parameter $\lambda$ at 0.25 yields a report card with four grades rather than three. The number of firms assigned the worst grade of $\star$ increases from two to five, while nine firms are now assigned the second-worst grade $\star\star$. Eleven firms are assigned the best grade $\star\star\star\star$. Appendix Figure \ref{fig: binary_irfe_all_grades} depicts the grade assignments that result from different choices of $\lambda$.

The average value of the posterior mean $\bar \theta_i$ among the firms assigned the grade $\star$ is 0.22, indicating an expected bias against Black names in this group of 22\%. In contrast, the average value of $\bar \theta_i$ among the eleven firms assigned grade $\star\star\star\star$ is 0.03, suggesting a negligible effect of race on callback outcomes in this group. This finding indicates that many large firms are nearly unbiased, an important possibility result for companies seeking to improve the fairness of their recruiting process.

The small number of grades generated by our report card procedure explain a substantial portion of the total variation in discrimination across employers, especially when we incorporate industry. To summarize the explanatory power of the grades, we again utilize the grade-average posterior means as detailed in the Appendix. The variance estimate is weighted by the number of firms per grade, so that the ratio of between-grade to total variance has an $R^2$ interpretation. The estimated between-grade standard deviation in contact penalties is \binarybtwnvar \ for the three grades reported in Figure \ref{fig:poisson_binary}, implying an $R^2$ of roughly \binarybtwnr\%. Adding industry boosts the $R^2$ to \industrybtwnr\%. In other words, the four categories displayed in Figure \ref{fig:poisson_binary_industry} explain nearly half of the variance in discrimination across the 97 companies in our experiment.

Our ranking procedure allows us to grade the conduct of entire industries in addition to individual firms. Figure \ref{fig:btwn_binary} plots posterior estimates $\mathbb{E}[\eta_k|Y=y]n_k^{-1}\sum_{i:k(i)=k}{s_i^{\hat \beta}}$ of industry mean contact penalties. The most biased industry is estimated to be SIC 55, ``Auto dealers / services,'' with a posterior mean contact penalty of 25\%. In an industry grading scheme with $\lambda=0.25$ (which yields three total grades), SIC 55 is assigned its own unique grade of $\star$, indicating that this industry can be distinguished as the most biased in the experiment with high confidence. A group of six industries receive the best grade of $\star\star\star$, all of which have posterior mean contact gaps of roughly 5\%. The role of common industry-level practices in generating the stark differences between these low- and high-performers is an interesting topic for further inquiry.

\subsection{Grades under square-weighted loss}

\ref{appdx:sqwt} details the firm rankings derived from minimizing our square-weighted notion of risk $\mathcal{R}^2(d;\lambda)$. Under this grading scheme, misrankings involving small differences in discriminatory conduct between firms yield negligible losses, which makes a more aggressive partitioning of the firms optimal. Choosing $\lambda=0.25$ yields a six grade ranking without industry effects and an eight grade ranking when industry affiliation is taken into account. Because more grades are employed under square-weighting than under our default scheme, the variance in contact penalties explained by the grades is somewhat larger than for the grades reported in Figures \ref{fig:poisson_binary} and \ref{fig:poisson_binary_industry}. As explained in the Appendix, the square-weighted grades nevertheless maintain tight control over the expected probability of mistakes involving misrankings of firms with large absolute differences in discriminatory conduct.

\subsection{Misclassification and Bayes Factors}

In addition to conveying substantial information about discrimination, our report card system limits the ranking errors generated by the resulting classification. Figure \ref{fig:poisson_DR} assesses the reliability of report card grades by reporting the lower-triangular matrix of estimated between-grade DRs in our baseline model that omits industry effects. Panel (a), for example, shows that 11\% of the firm comparisons across grades $\star$ and $\star\star$ are expected to be misordered. The Discordance Rate naturally declines when comparing non-adjacent grades. The expected share of misordered comparisons across grades $\star$ and $\star\star\star$ is below 1\%. Adjacent grades have $DR$'s between 11 and 14\%, implying Bayes Factors between 7 and 6, respectively. The discordance rate for non-adjacent grades of 0.8\% implies a Bayes Factor exceeding 11.

Our preferred report card  with industry effects limits misrankings between firms selected as high- and low-performers. Panel (b) of Figure \ref{fig:poisson_DR} summarizes the reliability of the grades obtained when conditioning on industry effects. Discordance Rates between adjacent grades range from 15\% to 17\%. DRs for grades separated by two categories are less than 4\%, and the DR between the worst grade ($\star$) and the best grade ($\star\star\star\star$) is 0.4\%. Combined with the results from Section 6.1, this implies that a comparison of the best- and worst-performers in Figure \ref{fig:poisson_binary_industry} isolates firms with large differences in discriminatory conduct while yielding a negligible chance of misclassification.

\section{Conclusion}

We have proposed a new Empirical Bayes method for ranking units based upon noisy measurements and used it to grade the discriminatory conduct of firms towards distinctively Black names in a large-scale correspondence experiment. The experiment is shown to contain a wealth of information about the relative conduct of firms: our most granular (Condorcet) grades taking into account industry affiliation yield an expected correlation with the true firm ranks of \industrylambdaone. These grades are noisy, however, resulting in (expected) mistakes in nearly one quarter of the $\binom{97}{2}=4,656$ possible pairwise firm comparisons.

A generalization of the Condorcet scheme based on a desired 80\% posterior certainty threshold for pairwise contrasts yields a report card with only four grades. These coarse grades turn out to be substantially more reliable than the Condorcet ranks, lowering the chances that a randomly sampled pair of firms is misordered to \industrydr\%. The four grades are also highly informative, offering an expected correlation with the true firm ranks of \industrytau. In addition to conveying information about the ranking of firm conduct, the grades capture important differences in conduct levels. Firms assigned the worst grade favor white applicants over Black applicants by 23\% while those assigned the best grade favor white applicants by only 3\%. 

The finding of negligible contact gaps in a large group of firms provides a possibility result for employers seeking to improve the fairness of their hiring processes. Recent research points towards centralization of hiring processes as a possible means of dampening bias in large organizations \citep{berson2020outsourcing,challeeffect}, a conjecture that aligns with findings in behavioral economics that snap judgments by individuals are especially susceptible to bias  \citep[e.g.,][]{agan2023automating}. Further corroboration of this view comes from \cite{miller2017persistent}'s finding that temporary exposure to the heightened scrutiny over HR practices accompanying federal contractor status has persistent effects on the composition of firm hires. 

We plan to eventually release the information in this report card to the public along with the identities of the companies in the experiment. Our hope is that this information will prompt corporate leaders to rethink whether their inclusive hiring goals are being met. Much work remains to establish which sorts of reforms to organizational practices can improve the fairness and efficiency of corporate recruiting efforts. Releasing these data for use by other researchers will hopefully accelerate the pace of research into strategies for subduing hiring discrimination.

\clearpage
\bibliographystyle{aea}
\bibliography{bibfile.bib}

\clearpage

\clearpage
\FloatBarrier
\section*{Figures}
\begin{figure}[ht!]
\vspace{-.5cm}
\centering
    \caption{Deconvolutions of name-specific contact rate estimates}
    \label{fig: name_g}
    \begin{tabular}{c}
    a) Variance-stabilized contact rates ($\sin^{-1} \sqrt{p_i}$) \\
    \includegraphics[width=.65\textwidth]{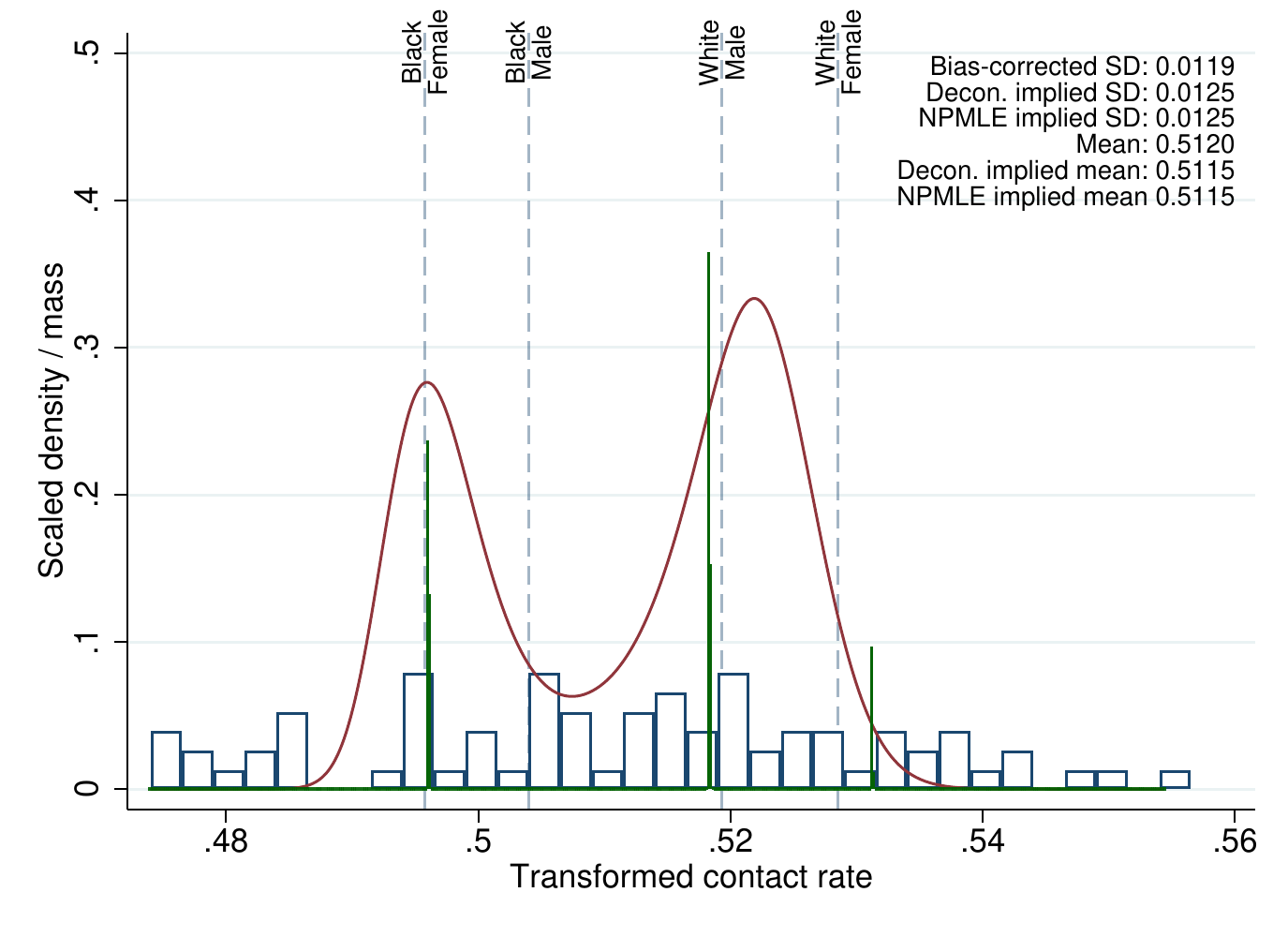} \\
    b) Contact rates ($p_i$) \\
     \includegraphics[width=.65\textwidth]{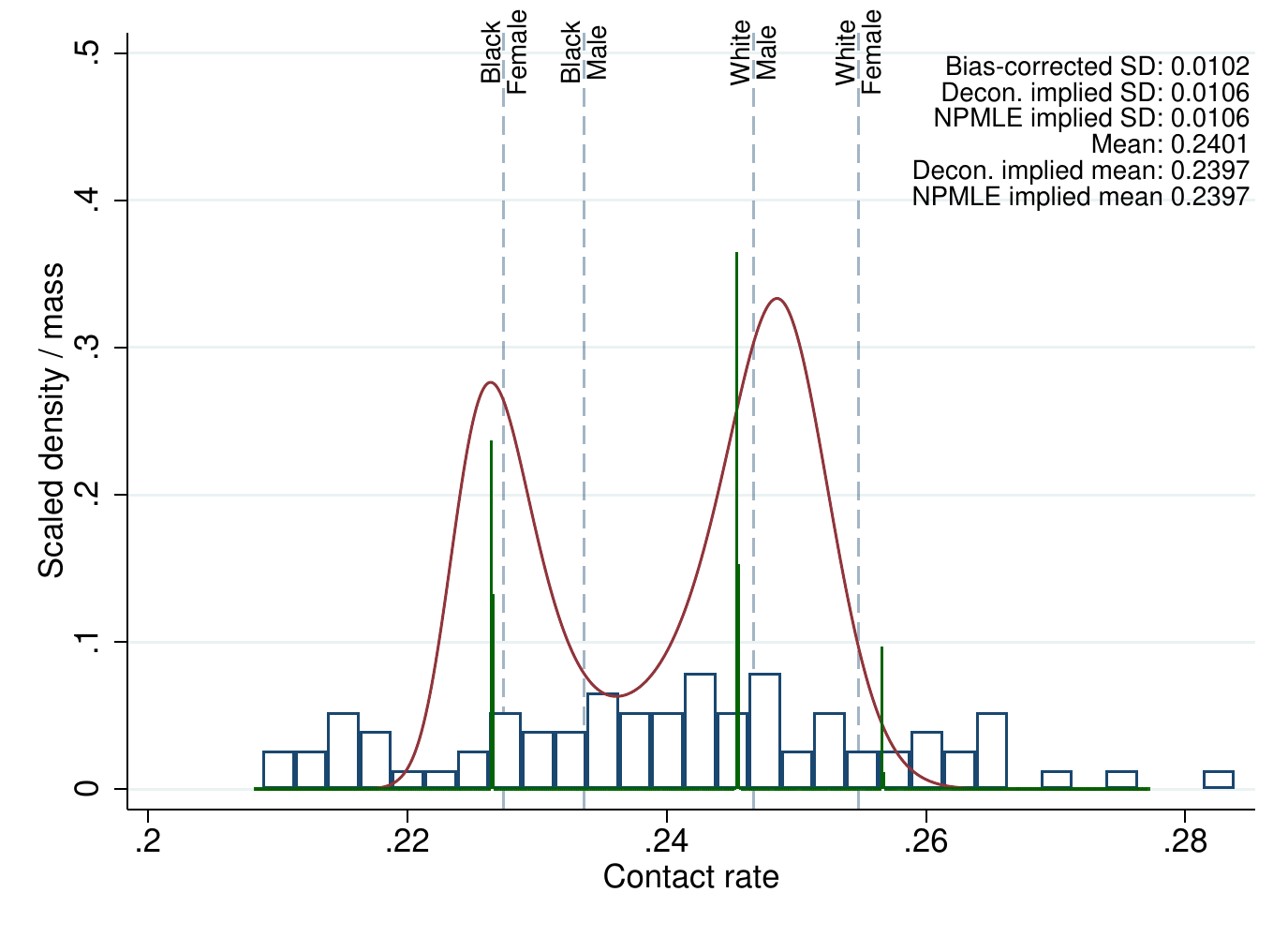}
     \end{tabular}
\parbox{\textwidth}{\small
\vspace{1eX}\emph{Notes}: This figure presents non-parametric estimates of the distribution name-specific contact rates. Panel (a) deconvolves transformed contact rates $\hat{\theta}_i = \sin^{-1}\left(\sqrt{\hat{p}_i}\right)$, where $\hat{p}_i$ is the contact rate for applications sent with first name $i$. The hollow blue histogram shows the distribution of estimated variance-stabilized contact rates. The red line shows a deconvolution estimate of the population contact rate distribution. The deconvolution procedure parameterizes the log-density with a fifth-order spline, and the parameters are estimated by penalized maximum likelihood, with penalization parameter chosen to match the mean and bias-corrected variance estimate as closely as possible. The dark green mass points plot the distribution of population contact rates estimated by non-parametric maximum likelihood (NPMLE). The vertical dashed lines plot mean contact rates for each race and gender group of names. Panel (b) converts the estimated distributions of variance-stabilized contact rates into distributions of contact rates $p_i$.}
\end{figure}

\begin{figure}
\centering
\caption{Name ranking exercises}
    \label{fig: names_res1}
\begin{adjustbox}{center}
\begin{tabular}{c c}
a) Pairwise posterior contrasts & b) Grades and discordance  \\
\includegraphics[width=0.6\textwidth]{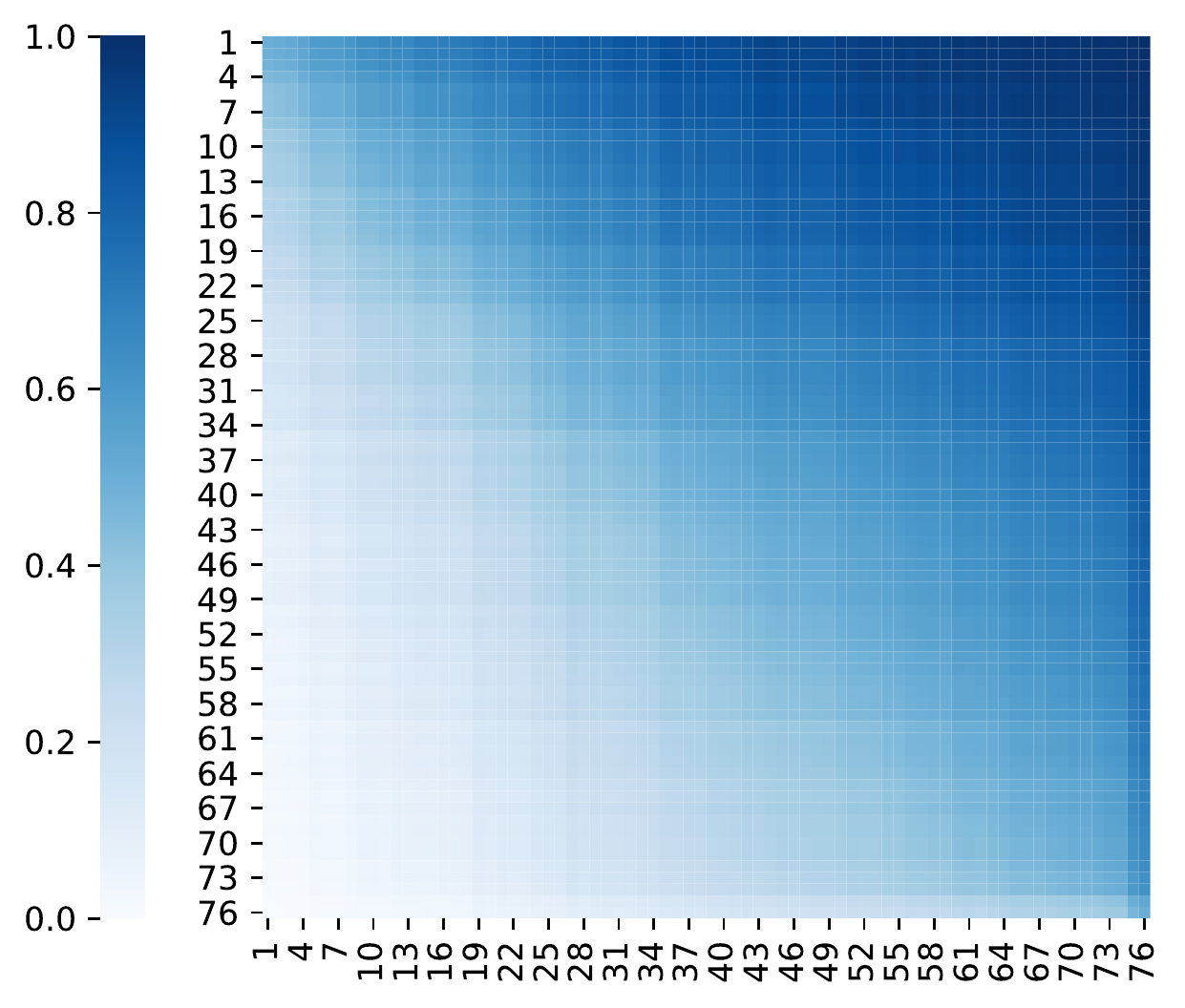} & \includegraphics[width=0.6\textwidth]{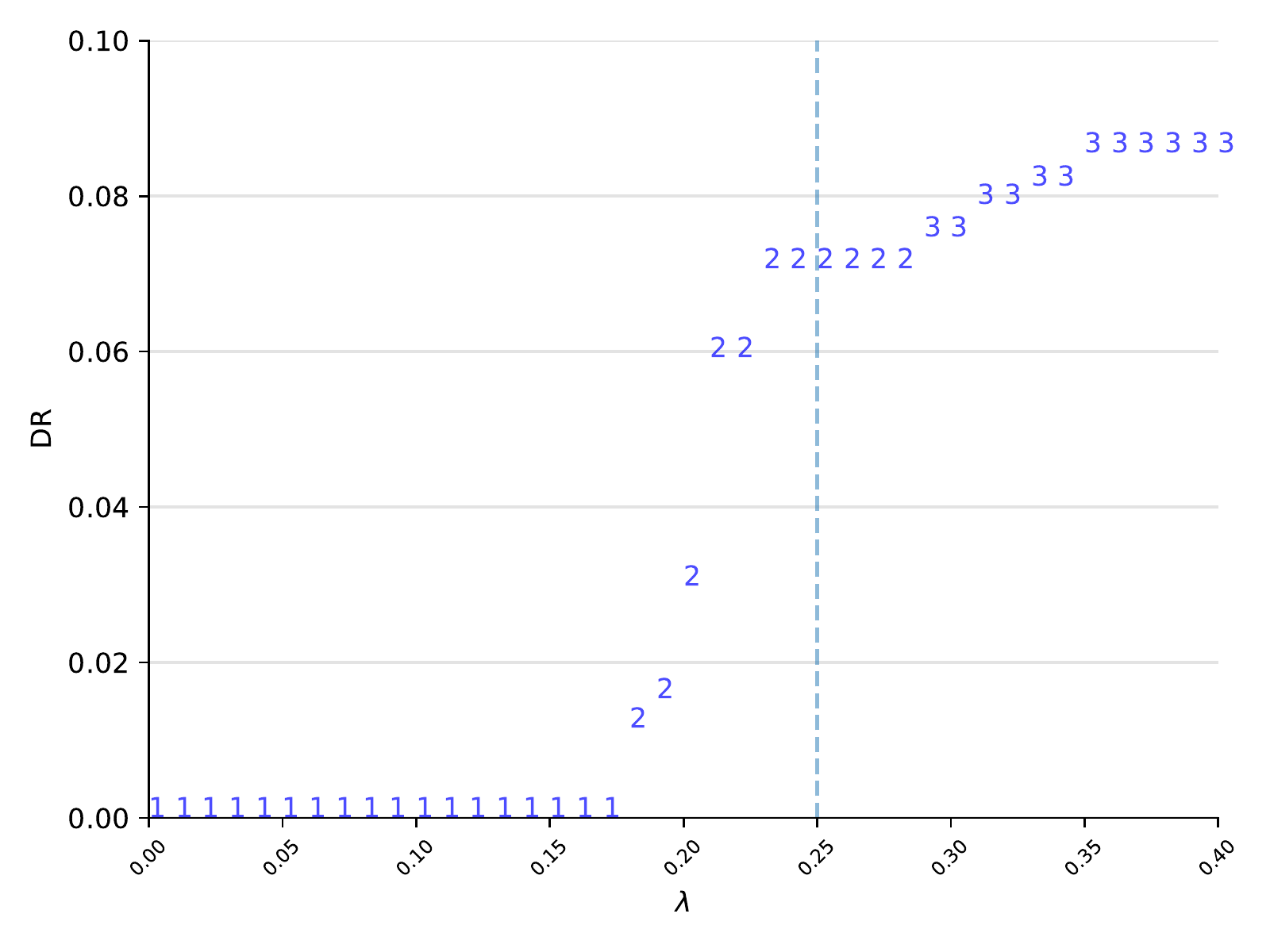} \\
\multicolumn{2}{c}{c) Reporting possibilities } \\
\multicolumn{2}{c}{\includegraphics[width=0.6\textwidth]{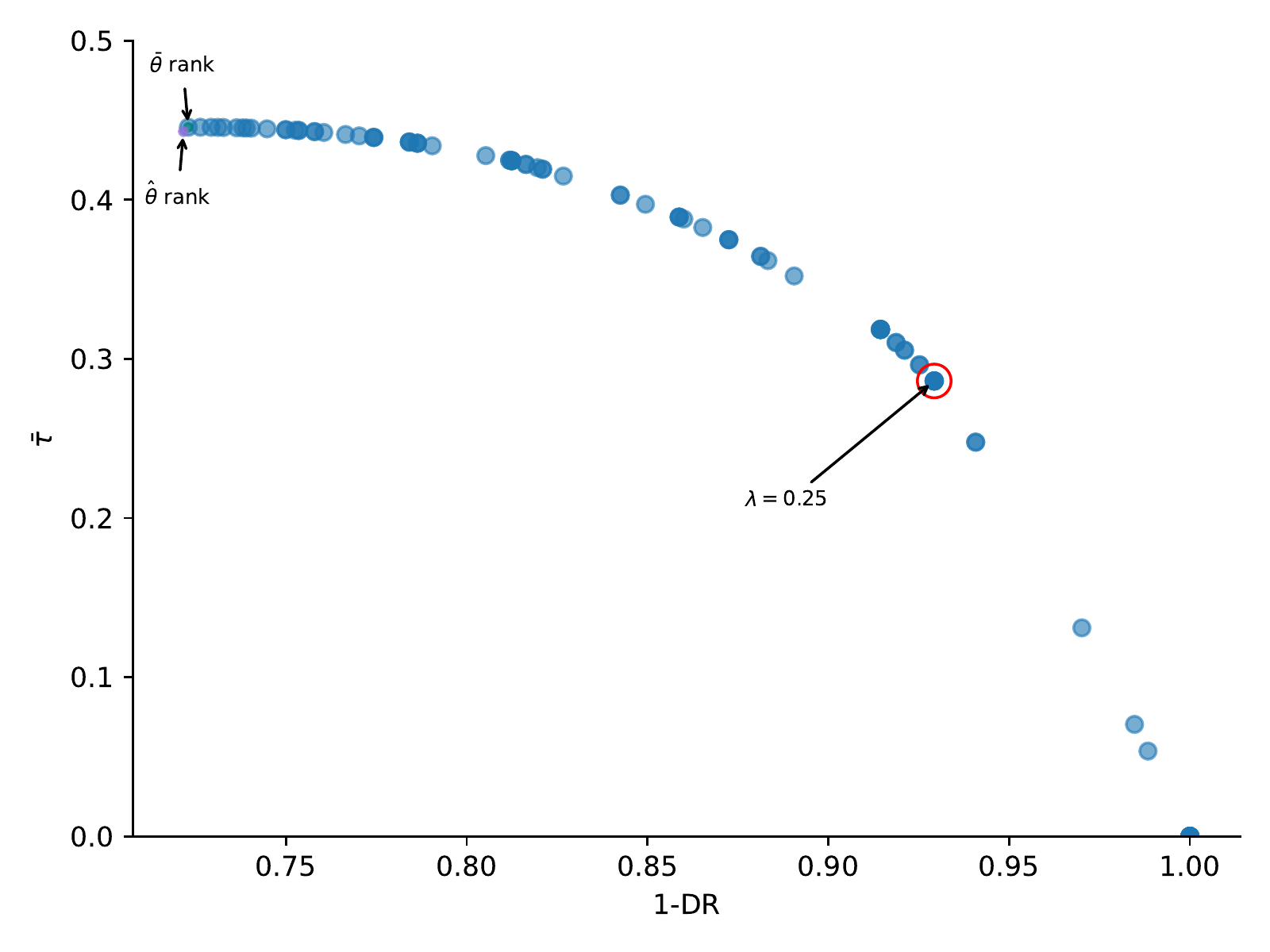}}
\end{tabular}
\end{adjustbox}
\parbox{\textwidth}{\small
\vspace{1eX}\emph{Notes}: This figure summarizes the results from grading contact rates for names. Panel (a) shows pairwise posterior ordering probabilities for all names. Posteriors are computed using the log-spline estimate plotted in Figure \ref{fig: name_g} as the prior. Names are ordered by their rank under $\lambda=1$. Shading indicates  the posterior probability that the contact rate for the name on the vertical axis exceeds the contact rate for the name on the horizontal axis. Panel (b) shows estimated Discordance Rates (DR) for an intermediate range of $\lambda$. Panel (c) plots the expectation of Kendall's $\tau$ rank correlation between true contact rates and grades against Discordance Rates (DR) for a range of grades indexed by $\lambda$. The red circle highlights the DR and expected $\tau$ corresponding to $\lambda = 0.25$. ``$\hat \theta$ rank'' refers to ranks based upon point estimates. ``$\bar \theta$ rank'' refers to ranks based upon Empirical Bayes posterior means.}
\end{figure}

\begin{figure}
\centering
\caption{Posterior means and grades of first names}
    \label{fig: names_ranks}
\begin{adjustbox}{center}
\includegraphics[width=\textwidth]{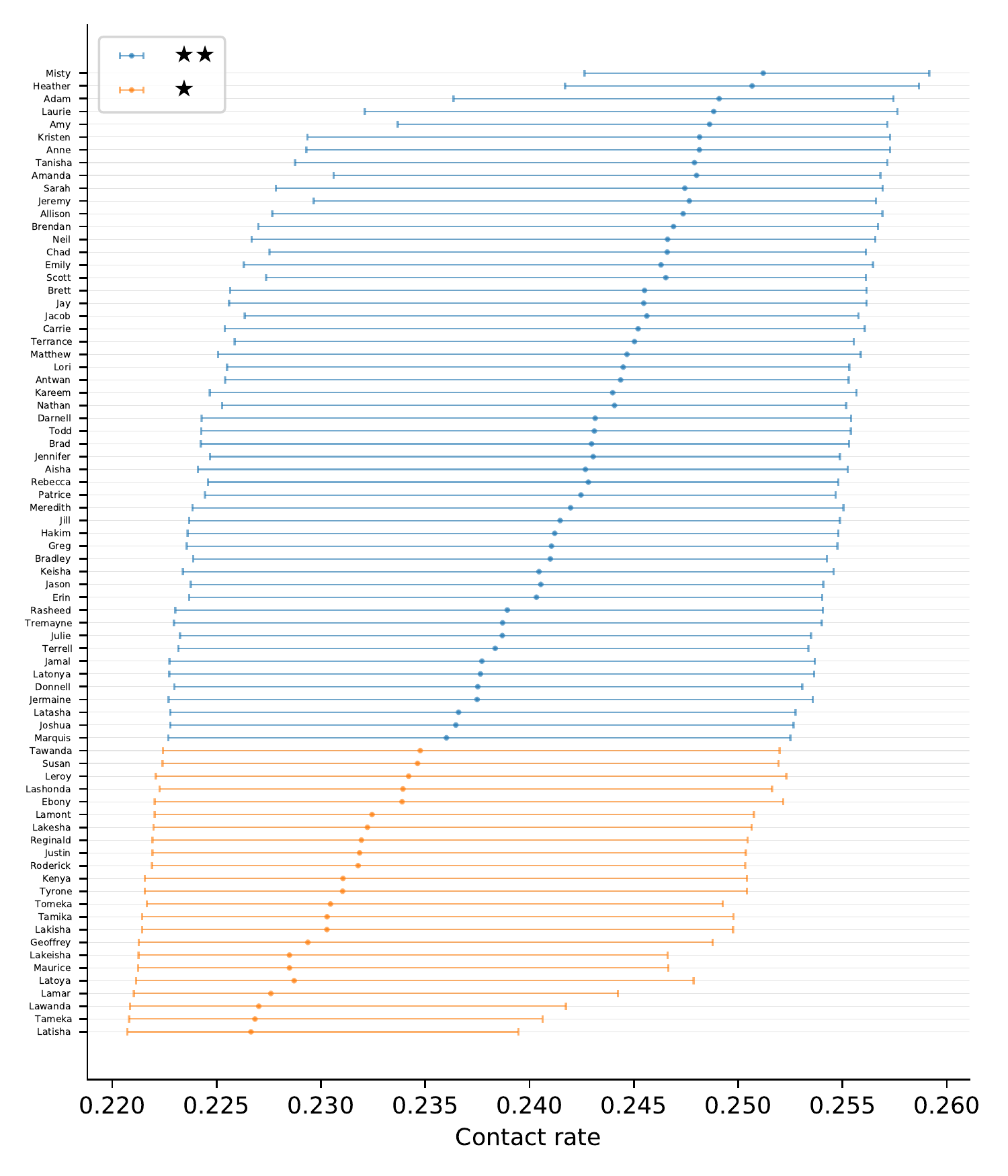}
\end{adjustbox}
\parbox{\textwidth}{\small
\vspace{1eX}\emph{Notes}: This figure shows posterior mean contact rates, 95\% credible intervals, and assigned grades for names. Results are shown for $\lambda = 0.25$, implying an 80\% threshold for posterior ranking probabilities. Names are ordered by their rank under $\lambda = 1$, when each name is assigned its own grade.}
\end{figure}

\begin{figure}[ht!]
    \centering
    \caption{Deconvolution estimates of firm-level contact penalty distribution}
    \includegraphics[width=.8\textwidth]{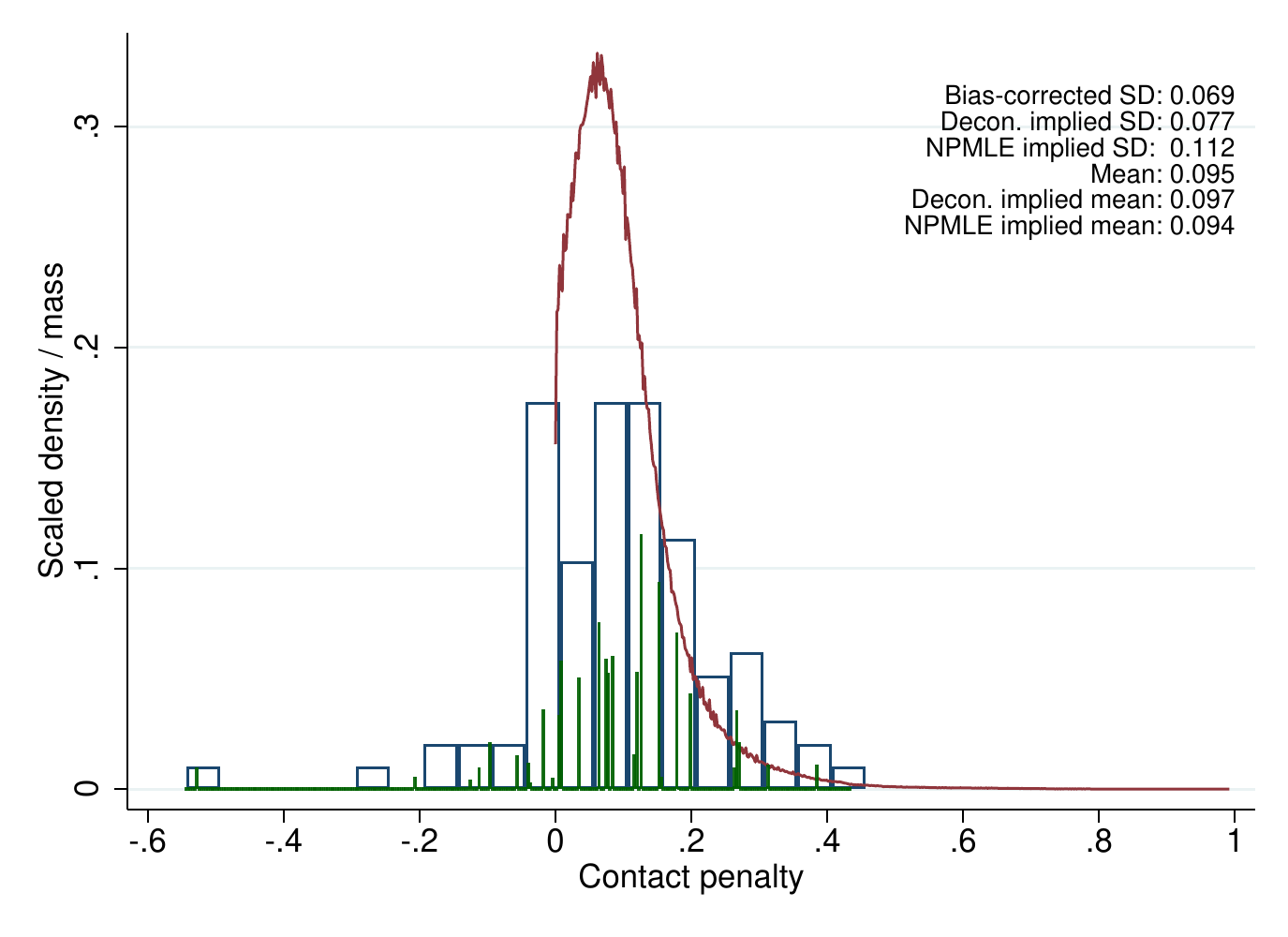}
    \label{fig: poisson_g}
\parbox{\textwidth}{\small
\vspace{1eX}\emph{Notes}: This figure presents non-parametric estimates of the distribution of firm-specific contact penalties. The blue histogram shows the distribution of estimated proportional contact penalties. The red line shows a  log-spline deconvolution estimate of the population contact penalty distribution. The dark green mass points plot a non-parametric maximum likelihood (NPMLE) estimate of the population contact penalty distribution. The bias-corrected standard deviation estimate is computed by subtracting the average squared standard error from the sample variance of estimated contact penalties, then taking the square root.}
\end{figure}

\begin{figure}[ht!]
    \centering
    \caption{Deconvolution estimates of between- and within-industry contact penalty distributions}
    \begin{tabular}{c}
    a) Standardized contact penalty distribution \\
    \includegraphics[width=.75\textwidth]{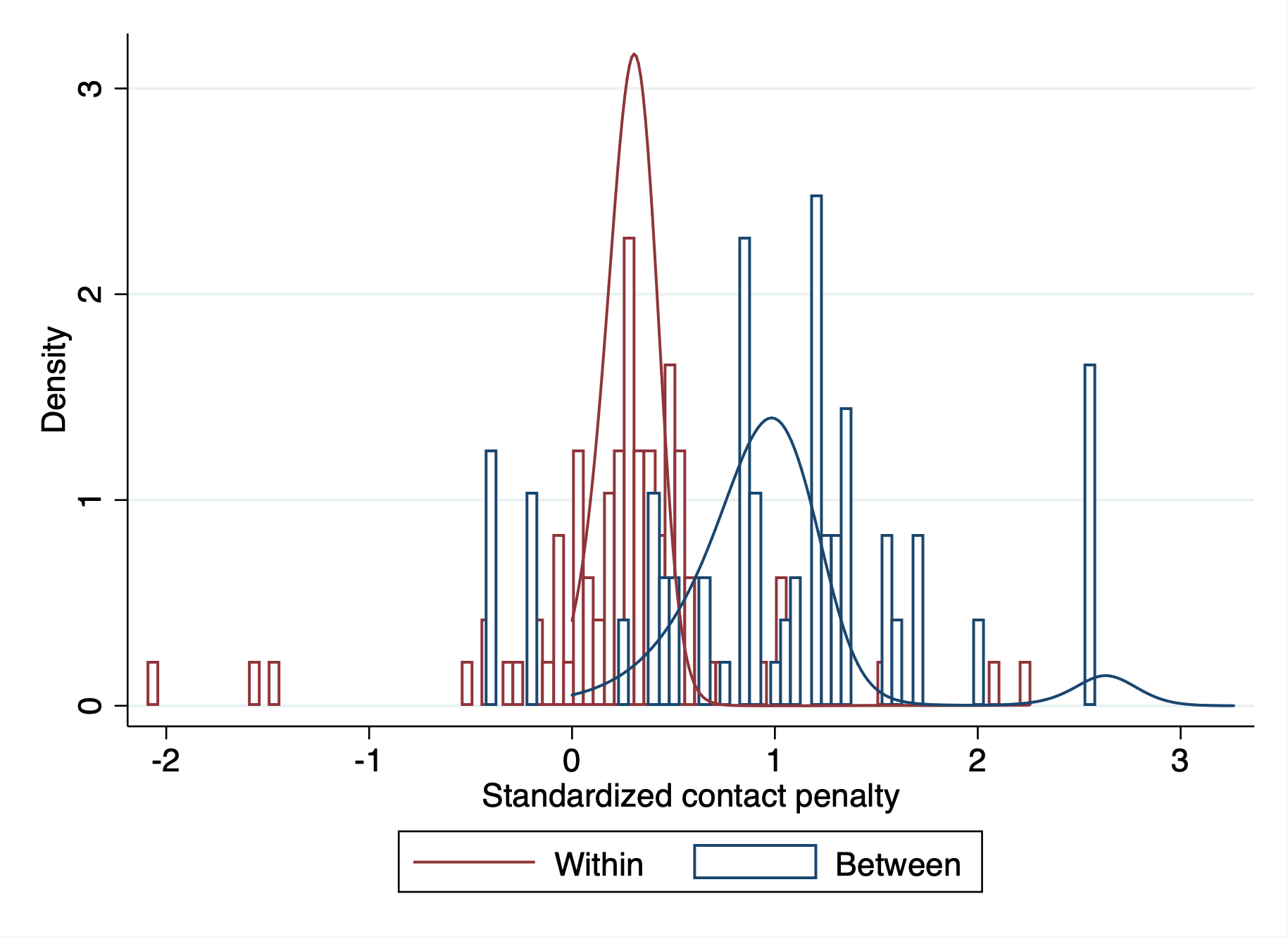} \\
  b) Marginal contact penalty distribution \\    \includegraphics[width=.75\textwidth]{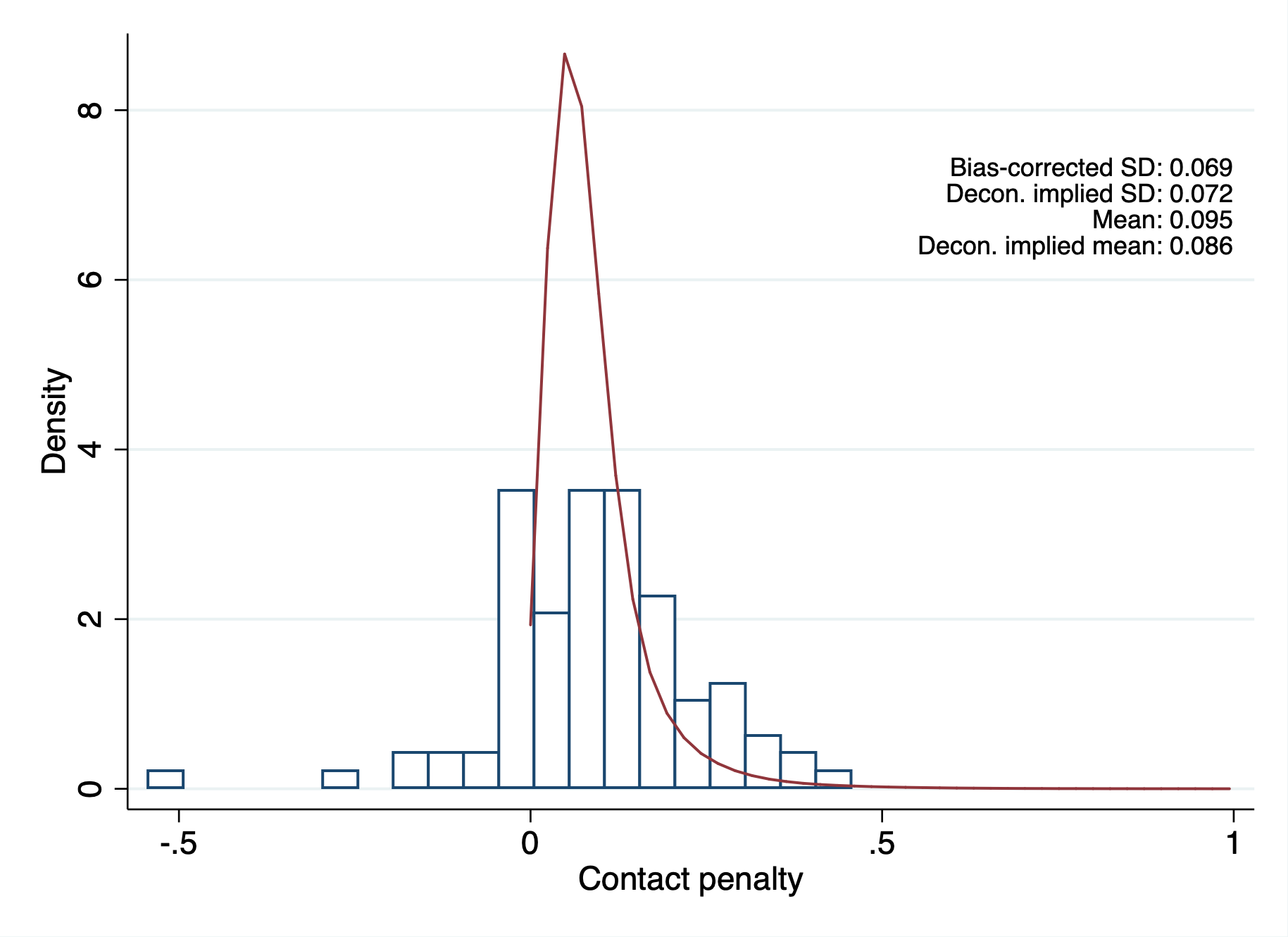}
    \end{tabular}
    \label{fig: industry_g}
\parbox{\textwidth}{\small
\vspace{1eX}\emph{Notes}: This figure presents estimates of the hierarchical random effects model described in Section \ref{sec: industry_rfe}. Panel (a) presents the estimated within- and between- industry distributions of standardized contact penalties. The blue histogram shows the distribution of $\hat{\eta}_{k}$, computed as the industry mean of $\hat{v}_{i}/\hat{\mu}_{v}$, where $\hat{v}_{i}=\hat{\theta}_{i}/s_{i}^{\hat{\beta}}$ and $\hat{\mu}_{v}$ and $\hat{\beta}$ are GMM estimates from Table \ref{tab:gmm_table}. The red histogram displays the distribution of $\hat{\xi}_{i}=\hat{v}_{i}/\hat{\eta}_{k(i)}$. Blue and red densities show corresponding log-spline deconvolution estimates of the distributions of $\eta_{k}$ and $\xi_{i}$. The deconvolution procedure parameterizes the log-density of each distribution with a fifth-order spline, and the parameters are estimated by penalized maximum likelihood, with penalization parameter chosen to match the GMM mean and variance estimates from Table \ref{tab:gmm_table} as closely as possible. Panel (b) shows the marginal distribution of contact penalties $\theta_{i}$ implied by the estimates from panel (a) along with a histogram of estimated contact penalties $\hat{\theta}_{i}$.  See \ref{sec: industry_fx} for complete details.}
\end{figure}

\begin{figure}[ht!]
\caption{Posterior contrasts}
\centering
\begin{tabular}{c}
    a) Baseline \\
    \includegraphics[width=.75\linewidth]{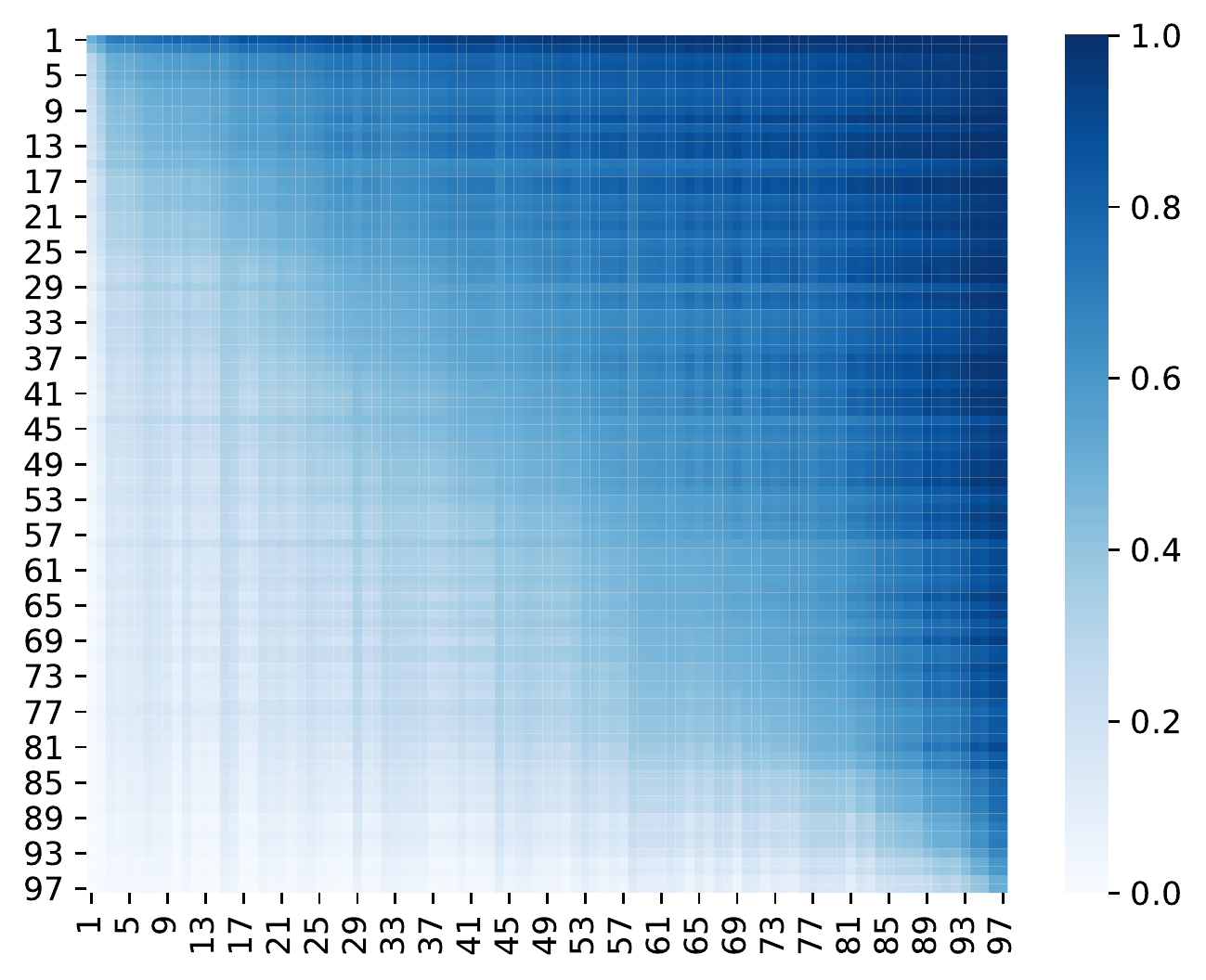}  \\
    b) Industry effects \\
  \includegraphics[width=.75\linewidth]{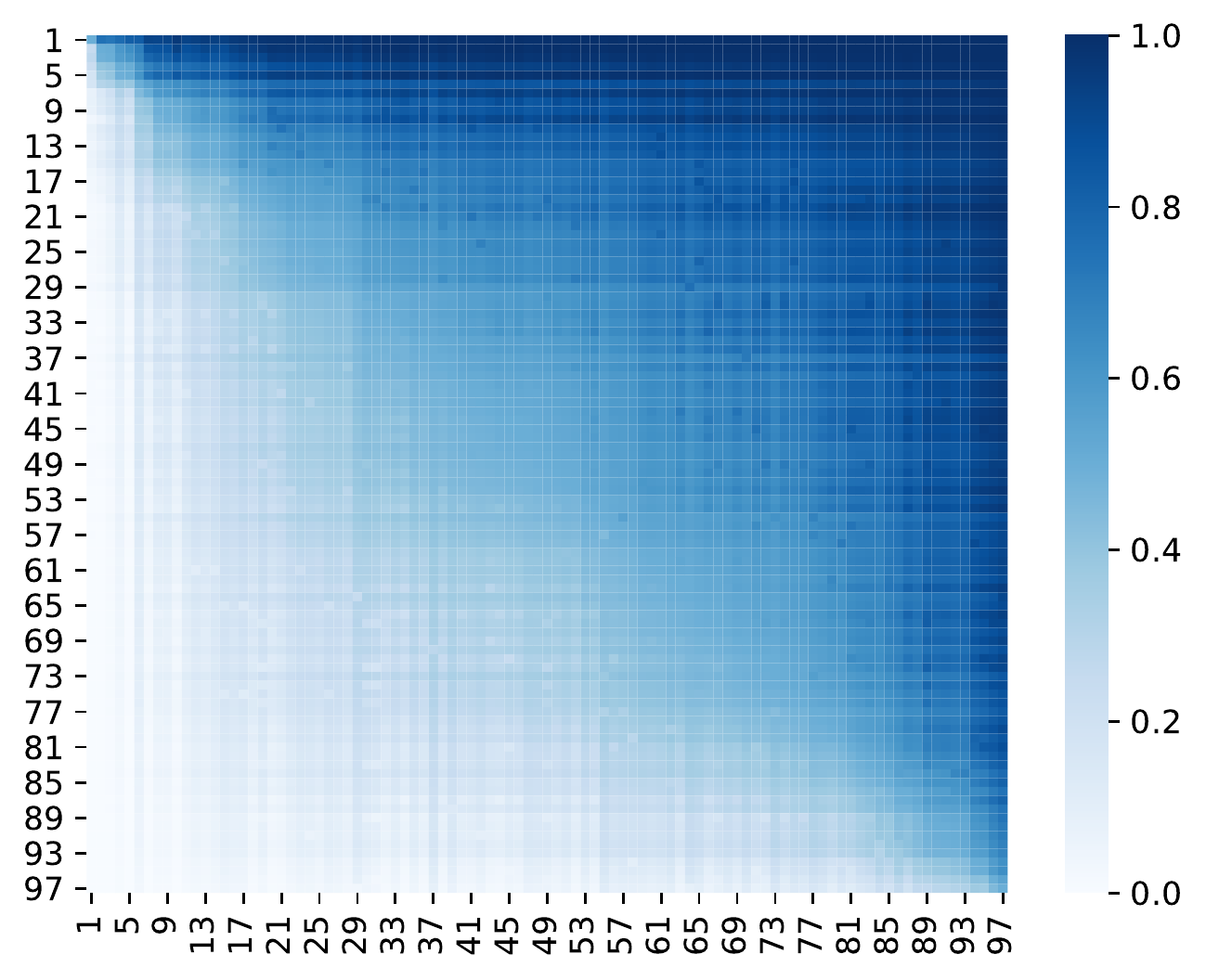}   
\end{tabular}
\label{fig: Pis}
\parbox{\textwidth}{\small
\vspace{1eX}\emph{Notes}: This figure plots pairwise posterior ordering probabilities for firm-specific contact penalties. Posteriors are computed using the log-spline estimates from Figure \ref{fig: poisson_g} as the prior distributions and Firms are ordered by their ranks under $\lambda=1$. The rank implying the largest $\theta_i$ is denoted by 1. Shading indicates the posterior probability that the contact penalty for the firm on the vertical axis exceeds the contact penalty for the firm on the horizontal axis.}
\end{figure}

\begin{figure}[ht!]
 \caption{Optimal pairwise rankings and global orderings}
 \centering
 \begin{tabular}{c}
 a) Baseline \\
  \includegraphics[width=0.75\textwidth]{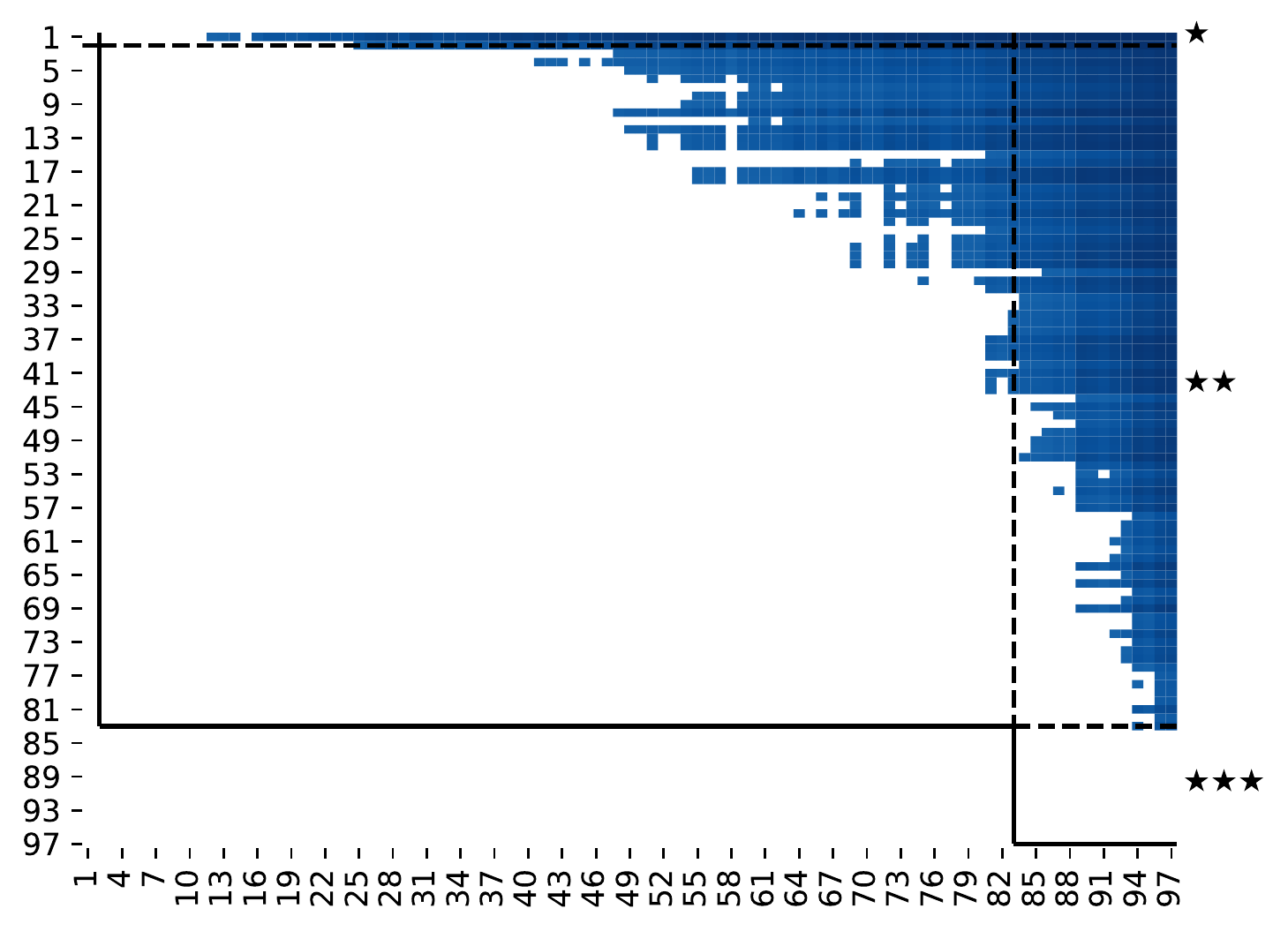} \\
  b) Industry effects \\
  \includegraphics[width=0.75\textwidth]{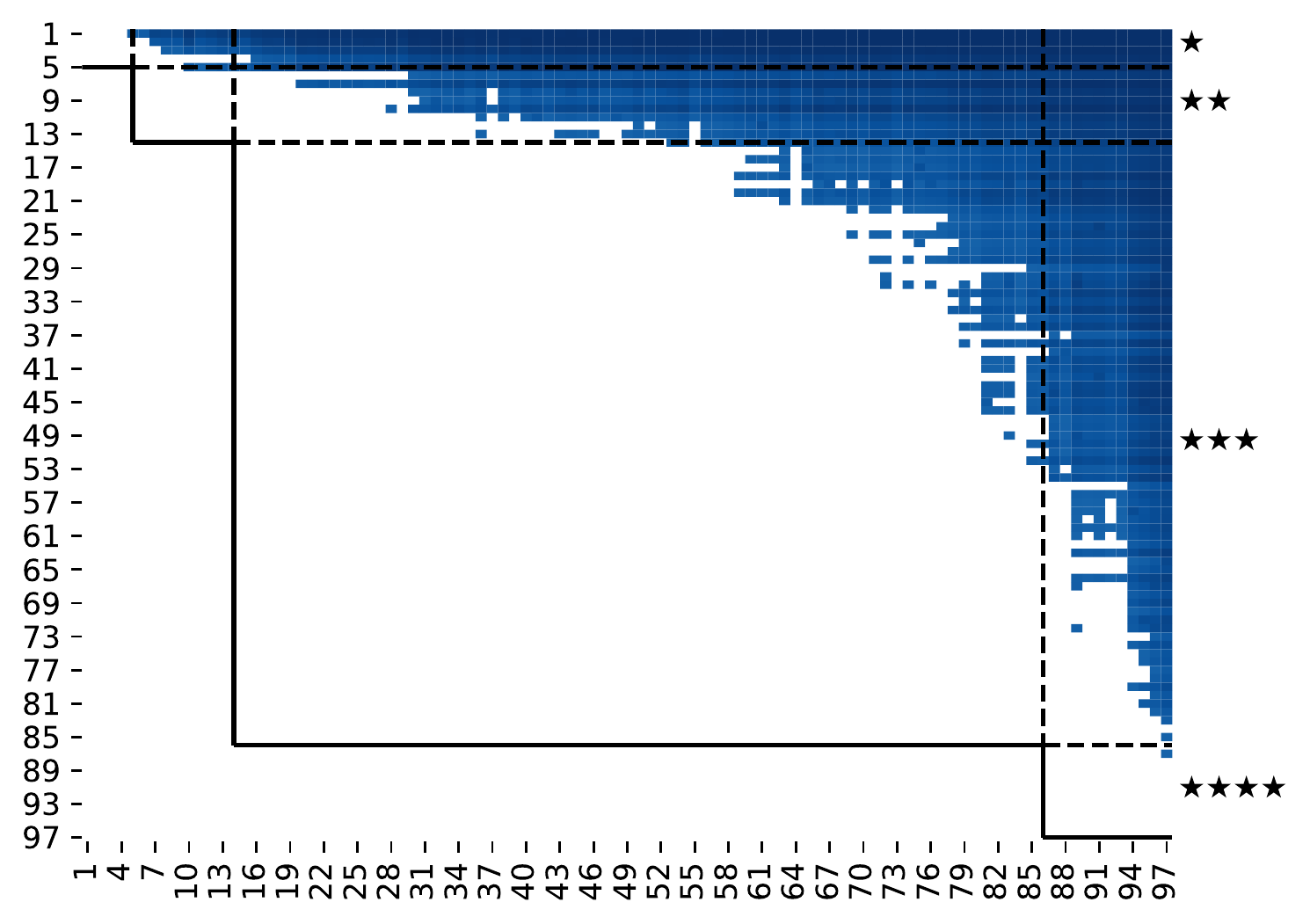} \\
  \label{fig: Pis_graded}
  \end{tabular}
\parbox{\textwidth}{\small
\vspace{1eX}\emph{Notes}: This figure plots the posterior ordering probabilities from Figure \ref{fig: Pis} for firm pairs where $\pi_{ij} > 1/(1+\lambda)$, indicating the pairwise optimal decision would rank the firm on the horizontal axis below the firm on the vertical axis. Both panels use $\lambda = 0.25$, implying an 80\% threshold for posterior ranking probabilities. The black lines denote the boundaries of optimal grades for this $\lambda$ for the firms in the rows. Panel (b) repeats the same exercise, but uses the industry random effect model to compute posteriors.}
 \end{figure}

\begin{figure}[ht!]
    \centering
    \caption{Grades and discordance as a function of $\lambda$}
    \includegraphics[width=\textwidth]{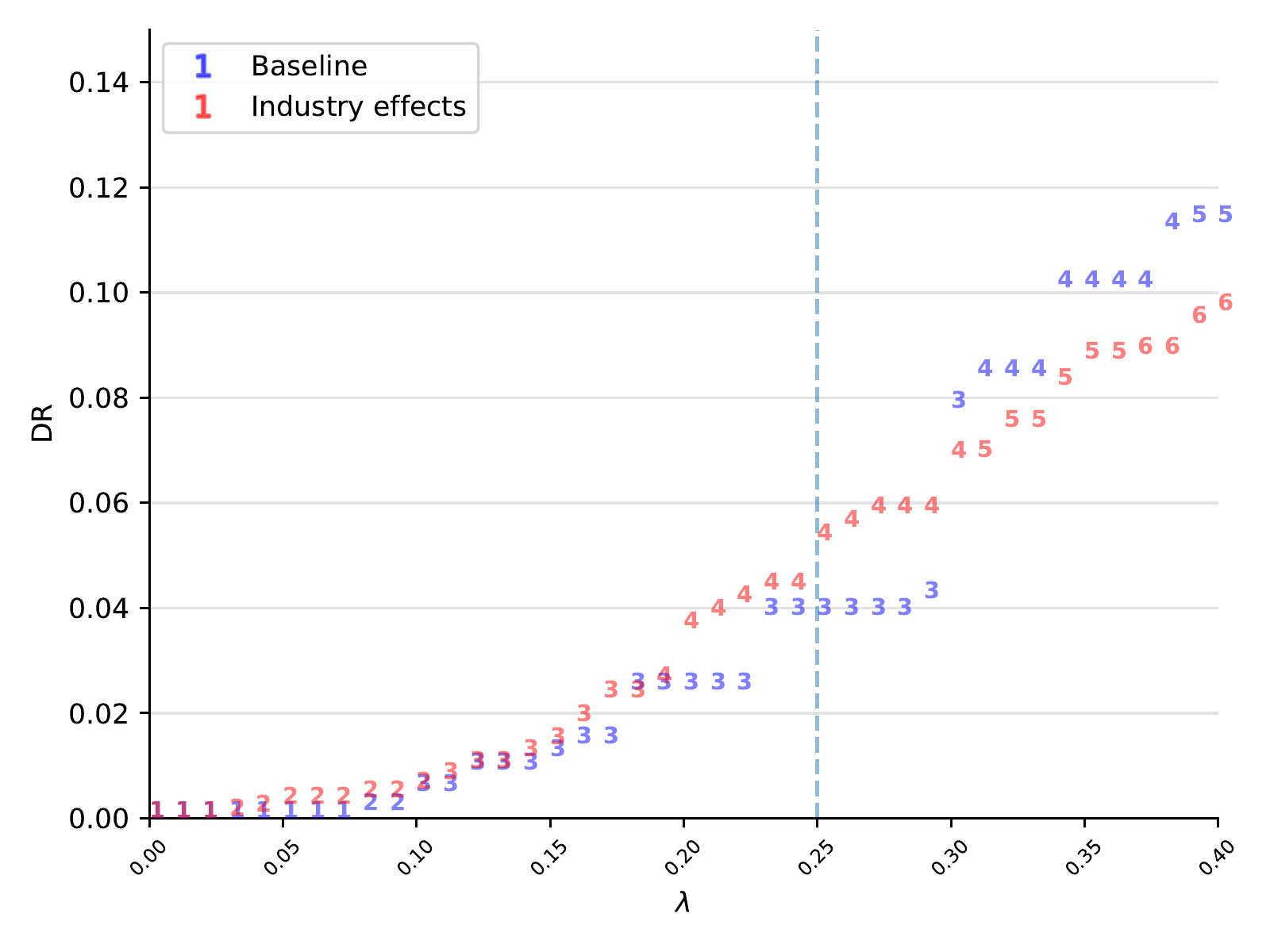}
    \label{fig:poisson_binary_lambda}
\parbox{\textwidth}{\small
\vspace{1eX}\emph{Notes}: This figure shows estimated Discordance Rates (DR) as a function of $\lambda$. The number on each point indicates the number of unique grades in the underlying grading scheme. The vertical dashed line shows results for the benchmark case of $\lambda=0.25$.}
\end{figure}

\begin{figure}[ht!]
    \centering
    \caption{Reporting possibilities}
\includegraphics[width=\textwidth]{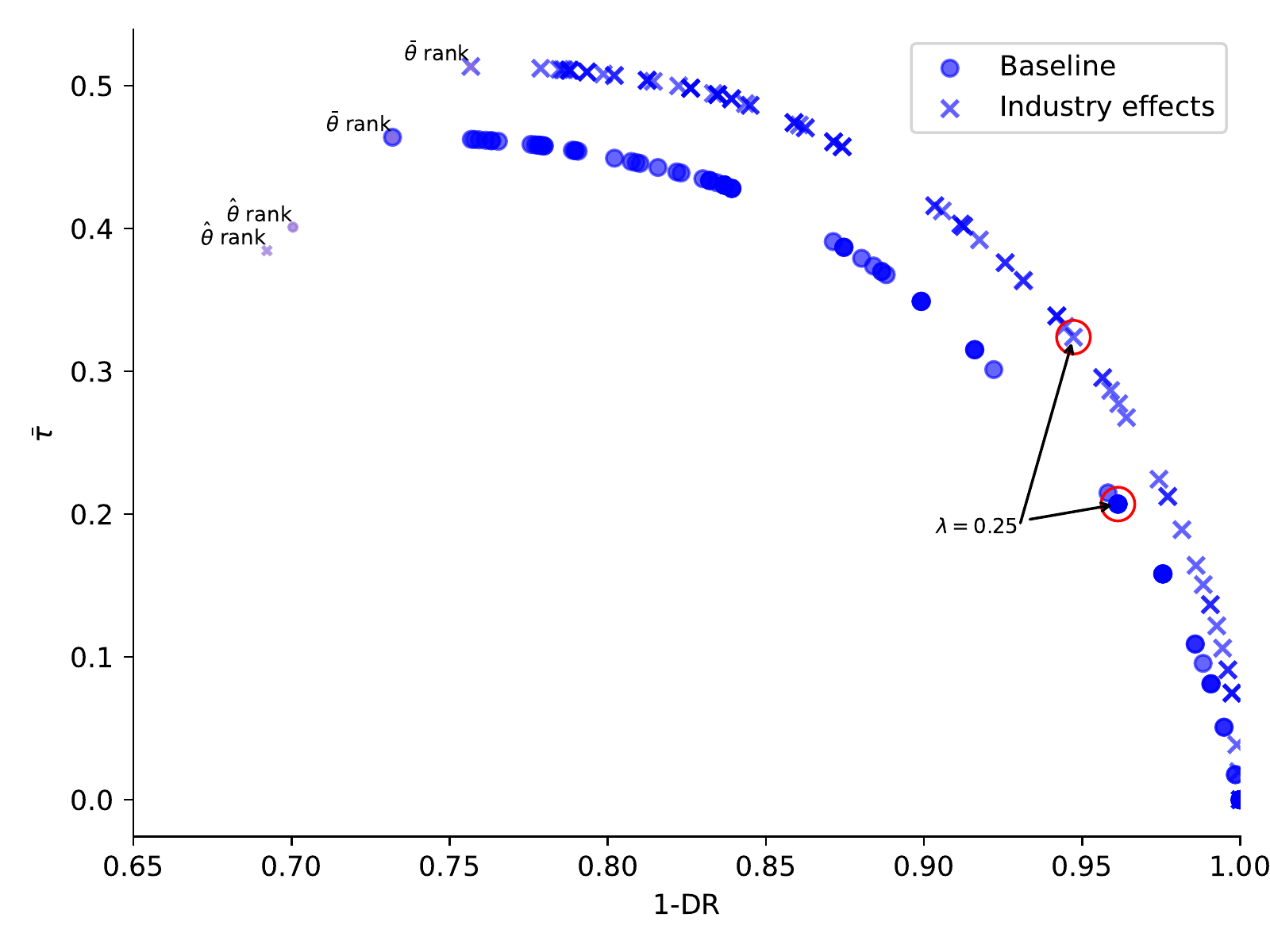}
    \label{fig:PPF}
\parbox{\textwidth}{\small
\vspace{1eX}\emph{Notes}: This figure shows the expectation of Kendall's $\tau$ rank correlation between $\theta$ and assigned grades (labeled $\bar{\tau}$) against Discordance Rates (DR) for a range of grades indexed by $\lambda$.  Red circles highlight the DR and $\bar{\tau}$ corresponding to $\lambda = 0.25$. ``$\hat \theta$ rank'' refers to ranks based upon point estimates. ``$\bar \theta$ rank'' refers to ranks based upon Empirical Bayes posterior means.}
\end{figure}

\begin{figure}[ht!]
    \centering
    \caption{Posterior means and grades of firms}
    \includegraphics[width=\textwidth]{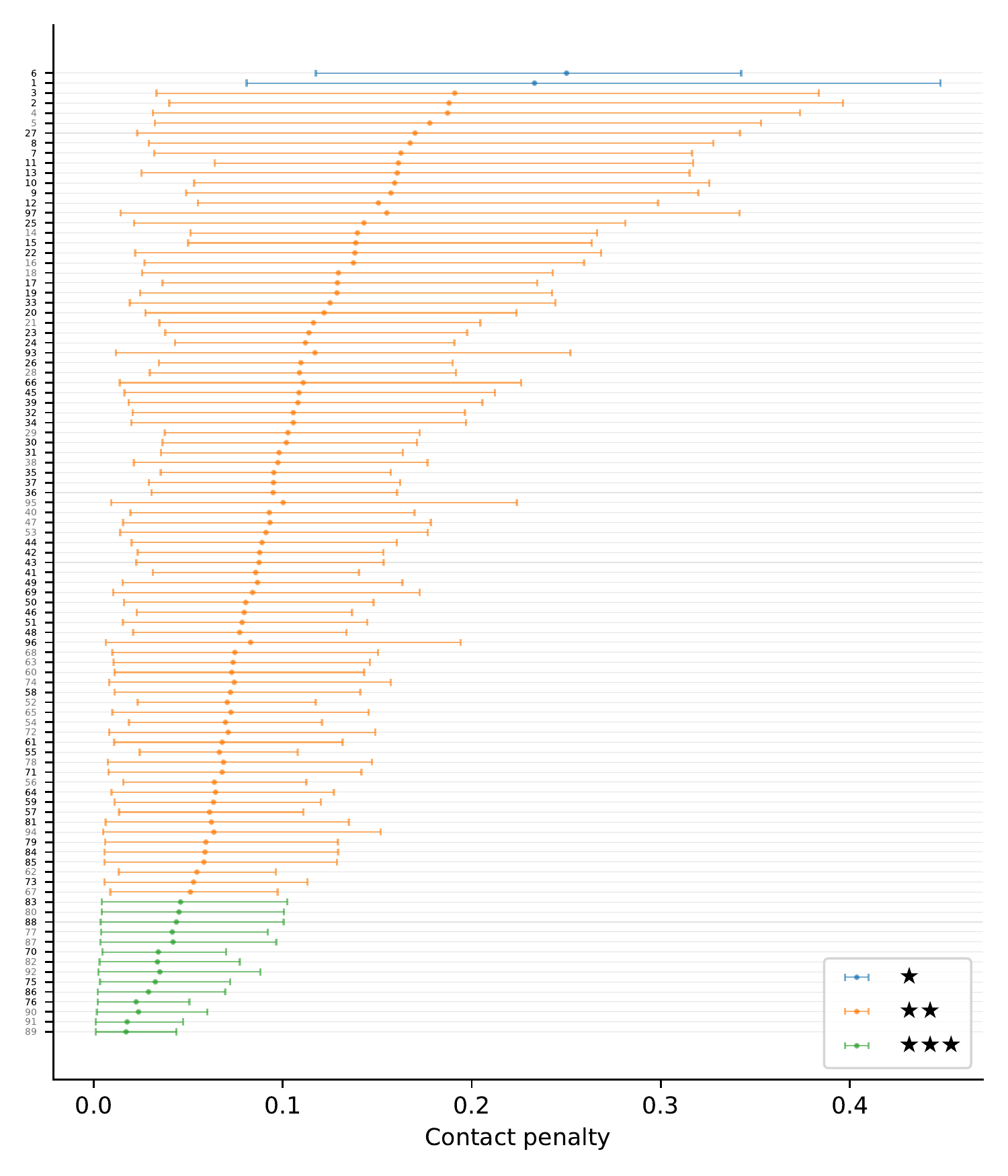}
    \label{fig:poisson_binary}
\parbox{\textwidth}{\small
\vspace{1eX}\emph{Notes}: This figure shows posterior mean proportional contact penalties, 95\% credible intervals, and assigned grades. Results are shown for $\lambda = 0.25$, implying an 80\% threshold for posterior ranking probabilities. Firms are ordered by their rank under $\lambda = 1$, when each firm is assigned its own grade, and labeled by their raw contact penalty rank, with \#1 showing the largest bias towards white applicants. Firms labeled with black text are federal contractors, whereas firms in gray are not.}
\end{figure}

\begin{figure}[ht!]
    \centering
    \caption{Posterior means and grades of firms (Industry effects model)}
    \includegraphics[width=\textwidth]{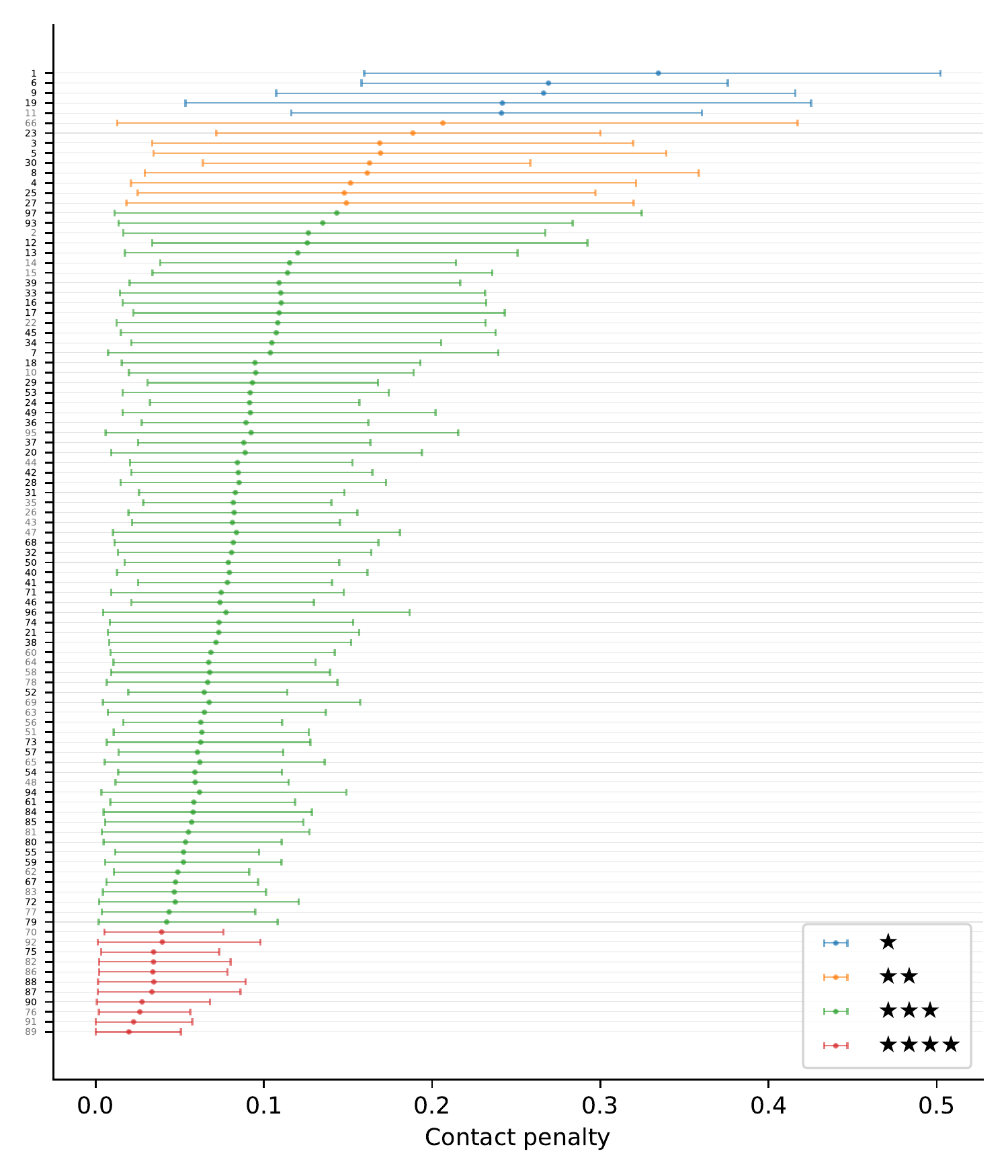}
    \label{fig:poisson_binary_industry}
\parbox{\textwidth}{\small
\vspace{1eX}\emph{Notes}: This figure shows posterior mean proportional contact penalties, 95\% credible intervals, and assigned grades from the industry random effect model. Results are shown for $\lambda = 0.25$, implying an 80\% threshold for posterior ranking probabilities. Firms are ordered by their rank under $\lambda = 1$, when each firm is assigned its own grade, and labeled by their raw contact penalty rank, with \#1 showing the largest bias towards white applicants. Firms labeled with black text are federal contractors, whereas firms in gray are not.}
\end{figure}

\begin{figure}[ht!]
    \centering
    \caption{Posterior means and grades of industries}
    \includegraphics[width=\textwidth]{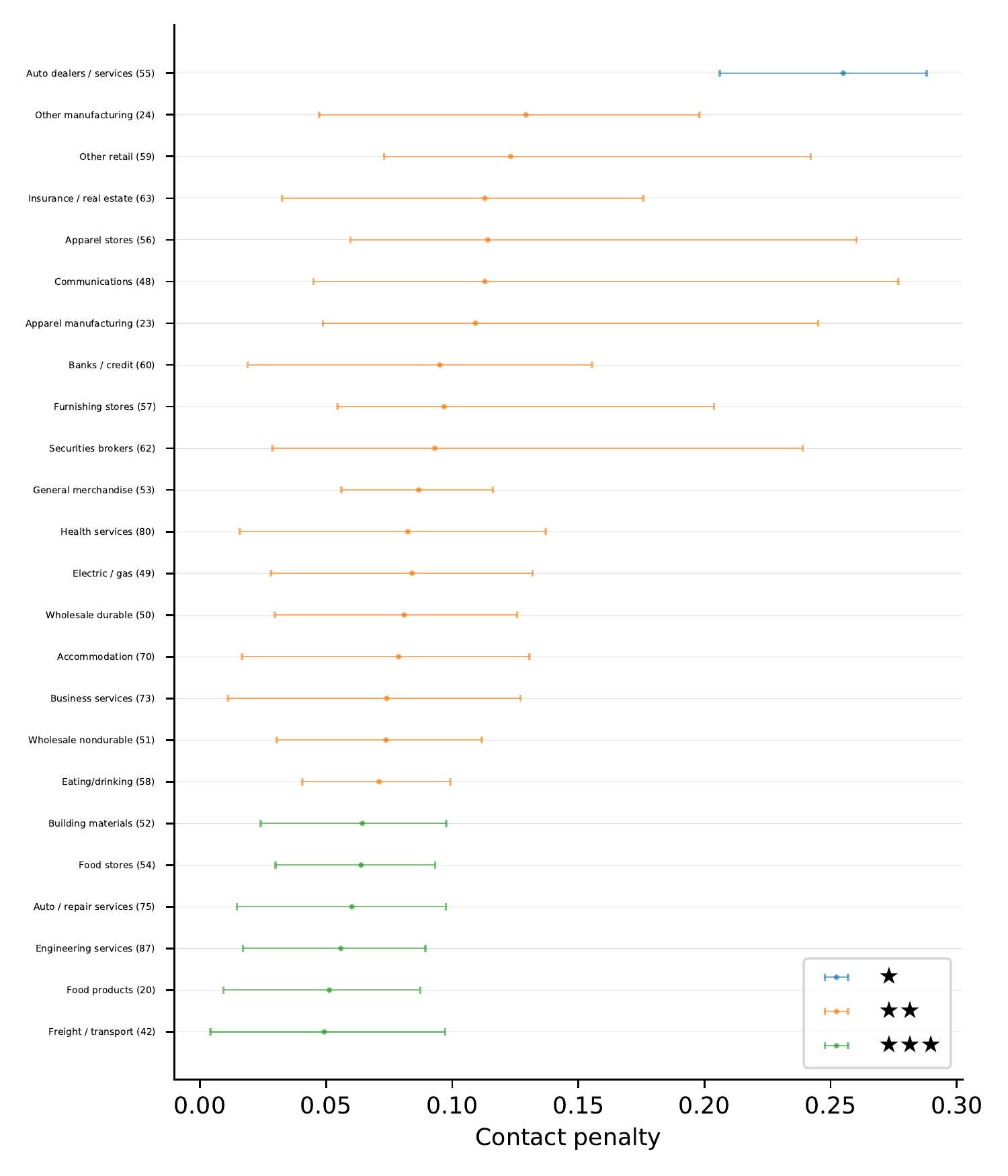}
    \label{fig:btwn_binary}
\parbox{\textwidth}{\small
\vspace{1eX}\emph{Notes}: This figure shows posterior means, 95\% credible intervals, and assigned grades for industry mean proportional contact penalties. Results are shown for $\lambda = 0.25$, implying an 80\% threshold for posterior ranking probabilities. Each industry is labeled by its name and two-digit SIC code.}
\end{figure}

\begin{figure}[!htbp]
    \centering
    \caption{DR in baseline and industry effects model}
    \begin{tabular}{c}
 a) Baseline \\
 \includegraphics[width=0.7\textwidth]{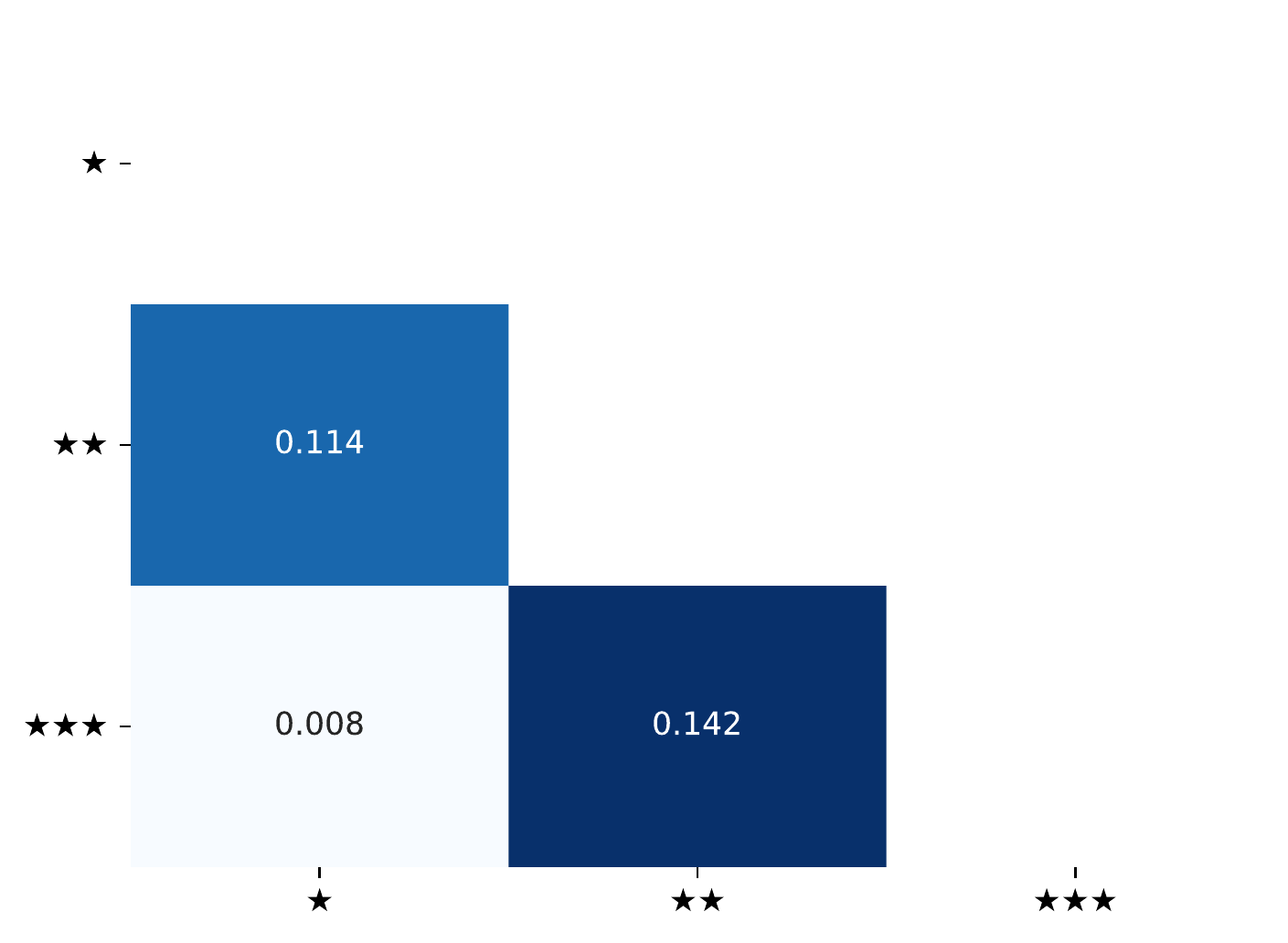} \\[0.5em]
b) Industry effects \\
\includegraphics[width=0.8\textwidth]{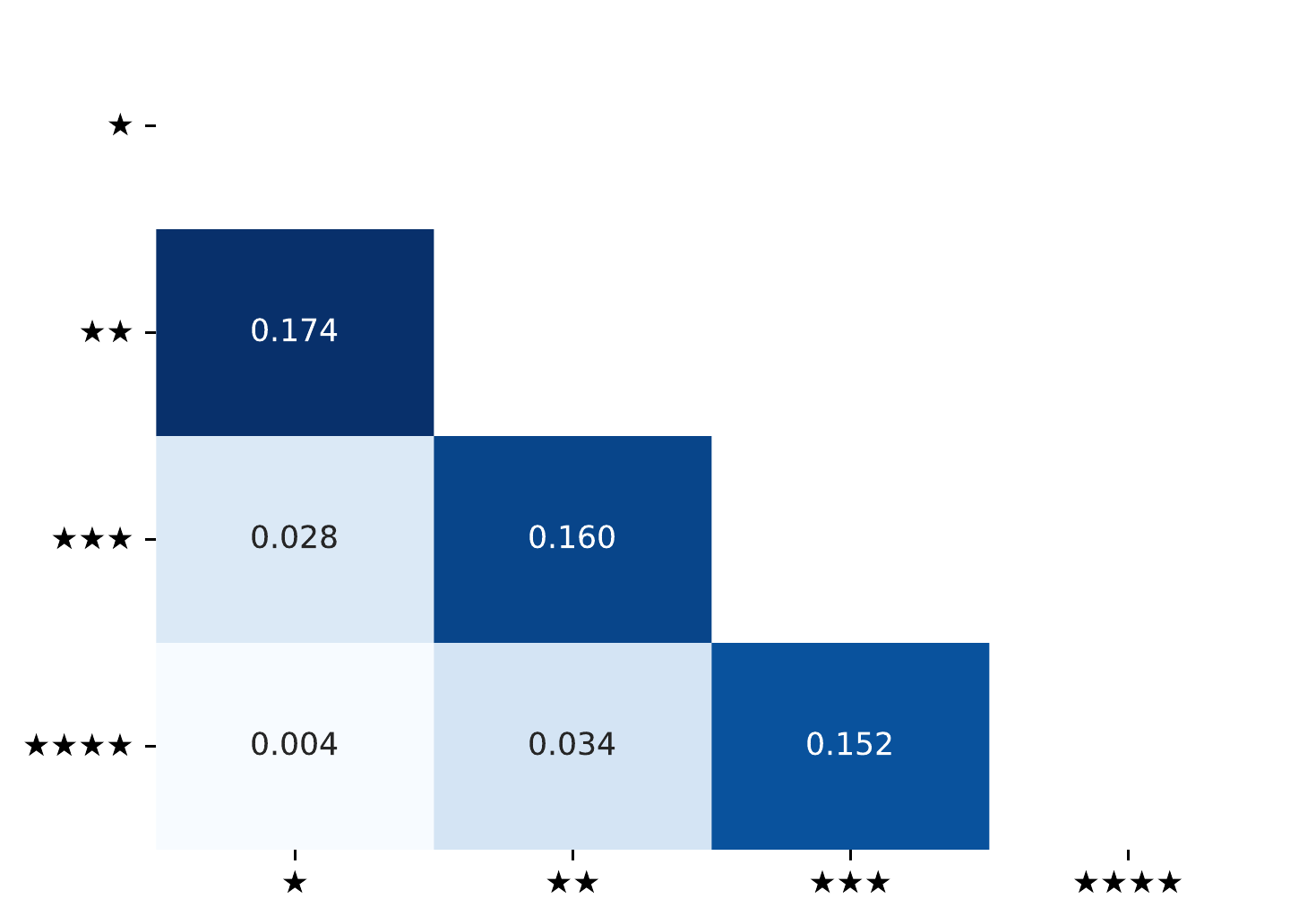}
\end{tabular}
    \label{fig:poisson_DR}
\parbox{\textwidth}{\small
\vspace{1eX}\emph{Notes}: This figure shows mean Discordance Rates (DR) across grade pairs for the baseline model and the model with industry effects. In both panels, DR$_{g,g'}$ is the expected share of pairwise comparisons between firms in grades $g$ and $g'$ where the grade rank differs from the latent firm rank. The upper-left most estimate in panel a, for example, indicates a 11\% chance that a random firm in grade $\bigstar$'s true rank is below a randomly chosen firm's in grade $\bigstar\bigstar$'s. In both panels, DR decays quickly when comparing non-adjacent grades.}
\end{figure}

\clearpage
\FloatBarrier
\section*{Tables}

\begin{table}[ht!]
\centering
\caption{Summary statistics for first names sample}
\label{tab:sumstats_names}
\begin{adjustbox}{center}
\begin{tabular}{l c c c c}		\hline\hline					
	&		&		& & Wald test of 	\\
    &    Contact rate	&	\# apps	&	\# first names & heterogeneity \\\cmidrule(lr){2-2}\cmidrule(lr){3-3}\cmidrule(lr){4-4}\cmidrule(lr){5-5}
	\\[-1em]
Male	&		&		&		\\[0.2em]
\quad Black	&	0.233	&	 20,927 	&	 19 & 12.6 	\\
	&	(0.003)	&		&		& [0.82] \\
\quad White	&	0.246	&	 20,975 	&	 19 & 15.8 	\\
	&	(0.003)	&		&	& [0.61]	\\[0.5em]
Female	&		&		&		\\[0.2em]
\quad Black	&	0.226	&	 20,879 	&	 19 & 21.2	\\
	&	(0.003)	&		&		& [0.24] \\
\quad White	&	0.254	&	 20,862 	&	 19 & 19.9	\\
	&	(0.003)	&		&		& [0.34] \\[0.5em]
Estimated contact rate SD \\[0.2em]
\quad Total & 0.010 \\[0.1em]
\quad Between race/sex & 0.011 \\[0.2em]
 \hline\hline
\end{tabular}													
\end{adjustbox}\\
\parbox{\textwidth}{\small
\vspace{1eX}
\textit{Notes:} This table presents summary statistics for the sample of applications used in the analysis of first names. The table presents the mean 30-day contact rate, total number of applications sent, and number of unique first names used for each race and sex combination. Contact rates are re-weighted to balance the distribution of names across experimental waves. Although Black and white names were sent in pairs during the experiment, the total number of applications across race groups is not identical because some jobs closed before both applications could be sent. The gender of the name assigned to each application was unconditionally randomized. The final column reports Wald tests for equality of contact probabilities across the first names in each demographic group. Under the null hypothesis of equal contact probabilities, each test statistic is distributed $\chi^2(18)$. Corresponding $p$-values are reported in brackets. The estimated contact rate SD is a bias-corrected estimate of the standard deviation of name-specific contact rates, computed by subtracting the average squared standard error from the sample variance of contact rate estimates then taking the square root. The between race/sex standard deviation is a corresponding bias-corrected estimate of the variation in mean contact rates across race and sex groups. See Appendix Table \ref{tab: firstname_list} for a list of first names used in the analysis.}
\end{table}

\begin{landscape}
\begin{table}[ht!]
\centering
\caption{Summary statistics for firm sample}
\label{tab:sumstats}
\begin{adjustbox}{center}
\begin{tabular}{l cccccccc} \hline\hline																					
	&		&		&		&	\multicolumn{4}{c}{Contact rates and gaps}													\\\cmidrule(lr){5-9}
	&	\# Firms	&	\# Jobs	&	\# Apps	&	White 		&	Black		&	Difference		&	Log dif		&	Mean SE	\\\hline \\[-0.7em]
All	&	97	&	 10,453 	&	 78,910 	&	0.256	 (0.004)	&	0.236	 (0.003)	&	0.020	 (0.002)	&	0.095	 (0.013)	&	0.104	\\[0.5em]
																					
2-digit SIC industry (code)																					\\[0.2em]
\quad Food products (20)	&	1	&	 100 	&	 788 	&	0.435	 (0.041)	&	0.440	 (0.040)	&	-0.005	 (0.019)	&	-0.011	 (0.045)	&	0.045	\\
\quad Apparel manufacturing (23)	&	2	&	 200 	&	 1,538 	&	0.205	 (0.026)	&	0.175	 (0.023)	&	0.031	 (0.012)	&	0.177	 (0.062)	&	0.088	\\
\quad Other manufacturing (24)	&	4	&	 375 	&	 2,904 	&	0.119	 (0.012)	&	0.104	 (0.012)	&	0.015	 (0.009)	&	0.179	 (0.116)	&	0.211	\\
\quad Freight / transport (42)	&	4	&	 458 	&	 3,300 	&	0.194	 (0.017)	&	0.197	 (0.017)	&	-0.003	 (0.008)	&	-0.014	 (0.039)	&	0.076	\\
\quad Communications (48)	&	2	&	 175 	&	 1,124 	&	0.273	 (0.036)	&	0.225	 (0.033)	&	0.048	 (0.021)	&	0.163	 (0.086)	&	0.120	\\
\quad Electric / gas (49)	&	3	&	 320 	&	 2,419 	&	0.261	 (0.021)	&	0.247	 (0.020)	&	0.014	 (0.010)	&	0.120	 (0.060)	&	0.094	\\
\quad Wholesale durable (50)	&	2	&	 152 	&	 1,143 	&	0.194	 (0.028)	&	0.177	 (0.027)	&	0.017	 (0.011)	&	0.088	 (0.057)	&	0.081	\\
\quad Wholesale nondurable (51)	&	11	&	 1,117 	&	 8,194 	&	0.299	 (0.011)	&	0.288	 (0.011)	&	0.011	 (0.007)	&	0.092	 (0.032)	&	0.091	\\
\quad Building materials (52)	&	3	&	 377 	&	 2,755 	&	0.297	 (0.019)	&	0.285	 (0.019)	&	0.012	 (0.008)	&	0.024	 (0.039)	&	0.062	\\
\quad General merchandise (53)	&	12	&	 1,380 	&	 10,440 	&	0.320	 (0.010)	&	0.292	 (0.010)	&	0.028	 (0.006)	&	0.108	 (0.030)	&	0.083	\\
\quad Food stores (54)	&	5	&	 530 	&	 4,030 	&	0.451	 (0.018)	&	0.425	 (0.018)	&	0.026	 (0.010)	&	0.063	 (0.029)	&	0.058	\\
\quad Auto dealers / services (55)	&	8	&	 891 	&	 6,930 	&	0.257	 (0.012)	&	0.204	 (0.011)	&	0.053	 (0.008)	&	0.237	 (0.040)	&	0.107	\\
\quad Apparel stores (56)	&	4	&	 400 	&	 3,093 	&	0.237	 (0.017)	&	0.202	 (0.016)	&	0.035	 (0.010)	&	0.173	 (0.067)	&	0.117	\\
\quad Furnishing stores (57)	&	4	&	 482 	&	 3,679 	&	0.286	 (0.018)	&	0.251	 (0.017)	&	0.035	 (0.011)	&	0.131	 (0.045)	&	0.086	\\
\quad Eating/drinking (58)	&	4	&	 500 	&	 4,000 	&	0.368	 (0.018)	&	0.337	 (0.018)	&	0.032	 (0.009)	&	0.086	 (0.027)	&	0.053	\\
\quad Other retail (59)	&	7	&	 816 	&	 6,281 	&	0.206	 (0.011)	&	0.182	 (0.011)	&	0.024	 (0.007)	&	0.133	 (0.060)	&	0.138	\\
\quad Banks / credit (60)	&	2	&	 252 	&	 1,947 	&	0.119	 (0.015)	&	0.121	 (0.016)	&	-0.002	 (0.010)	&	-0.073	 (0.116)	&	0.150	\\
\quad Securities brokers (62)	&	1	&	 125 	&	 965 	&	0.122	 (0.021)	&	0.111	 (0.019)	&	0.011	 (0.012)	&	0.098	 (0.102)	&	0.102	\\
\quad Insurance / real estate (63)	&	5	&	 398 	&	 2,907 	&	0.142	 (0.015)	&	0.142	 (0.016)	&	0.000	 (0.010)	&	0.015	 (0.108)	&	0.203	\\
\quad Accommodation (70)	&	2	&	 243 	&	 1,850 	&	0.200	 (0.022)	&	0.199	 (0.023)	&	0.001	 (0.012)	&	0.043	 (0.068)	&	0.094	\\
\quad Business services (73)	&	3	&	 375 	&	 2,812 	&	0.214	 (0.017)	&	0.212	 (0.017)	&	0.003	 (0.007)	&	0.101	 (0.076)	&	0.113	\\
\quad Auto / repair services (75)	&	3	&	 340 	&	 2,551 	&	0.285	 (0.022)	&	0.275	 (0.022)	&	0.010	 (0.010)	&	0.046	 (0.037)	&	0.062	\\
\quad Health services (80)	&	4	&	 400 	&	 2,886 	&	0.150	 (0.015)	&	0.144	 (0.015)	&	0.006	 (0.008)	&	-0.071	 (0.067)	&	0.127	\\
\quad Engineering services (87)	&	1	&	 47 	&	 374 	&	0.122	 (0.047)	&	0.117	 (0.046)	&	0.005	 (0.005)	&	0.044	 (0.042)	&	0.042	\\\hline\hline
\end{tabular}																					
\end{adjustbox}
\parbox{1.5\textwidth}{\small
\vspace{1eX}
\textit{Notes:} This table presents summary statistics the sample of applications used in the analysis of firm contact penalties. ``White" and ``Black" refer to average firm-level contact rates for white and Black applications. Difference is the average difference. Log dif is the industry average of the primary contact penalty measure used in the analysis: $\ln(\hat{p}_{iw}) - \ln(\hat{p}_{ib})$. Mean SE is the average standard error of firm-level log difference estimate. Standard errors in parentheses.}
\end{table}
\end{landscape}

\begin{table}[htbp]
  \centering
  \caption{GMM estimates of contact penalty parameters}
    \label{tab:gmm_table}%
    \begin{tabular}{p{6.665em}rr}
    \midrule \midrule
    \multicolumn{1}{r}{} & \multicolumn{1}{c}{No industry} & \multicolumn{1}{c}{With industry} \\
    \multicolumn{1}{r}{} & \multicolumn{1}{c}{effects} & \multicolumn{1}{c}{effects} \\
    \multicolumn{1}{r}{} & \multicolumn{1}{c}{(1)} & \multicolumn{1}{c}{(2)} \\
    \midrule
    \multicolumn{1}{l}{$\beta$} & \multicolumn{1}{c}{0.510} & \multicolumn{1}{c}{0.517} \\
    \multicolumn{1}{r}{} & \multicolumn{1}{c}{(0.190)} & \multicolumn{1}{c}{(0.121)}  \\
    \multicolumn{1}{r}{} &       &   \\
    \multicolumn{1}{l}{$\mu_{v}$} & \multicolumn{1}{c}{0.313} & \multicolumn{1}{c}{0.292} \\
    \multicolumn{1}{r}{} & \multicolumn{1}{c}{(0.074)} & \multicolumn{1}{c}{(0.074)}  \\
    \multicolumn{1}{r}{} &       &     \\
    \multicolumn{1}{l}{$\sigma_{v}$} & \multicolumn{1}{c}{0.207} & \multicolumn{1}{c}{ }  \\
    \multicolumn{1}{r}{} & \multicolumn{1}{c}{(0.106)} & \multicolumn{1}{c}{ } \\
    \multicolumn{1}{r}{} &       &   \\
    \multicolumn{1}{l}{$\sigma_{\eta}$} & \multicolumn{1}{c}{ } & \multicolumn{1}{c}{0.452}  \\
    \multicolumn{1}{r}{} & \multicolumn{1}{c}{ } & \multicolumn{1}{c}{(0.171)} \\
    \multicolumn{1}{r}{} &       &   \\
    \multicolumn{1}{l}{$\sigma_{\xi}$} & \multicolumn{1}{c}{ } & \multicolumn{1}{c}{0.144}  \\
    \multicolumn{1}{r}{} & \multicolumn{1}{c}{ } & \multicolumn{1}{c}{(0.066)} \\
    \multicolumn{1}{r}{} &       &     \\
    \multicolumn{1}{l}{Within share} & \multicolumn{1}{c}{ } & \multicolumn{1}{c}{0.556} \\
    \multicolumn{1}{r}{} &       &     \\
    \multicolumn{1}{l}{$J$-statistic (d.f.)} & \multicolumn{1}{c}{0.101 (1)} & \multicolumn{1}{c}{0.111 (2)} \\
    \hline\hline \\
    \end{tabular}
  \parbox{\textwidth}{\small
\vspace{1eX}
\textit{Notes:} This table reports generalized method of moments (GMM) estimates of the parameters of the model $\theta_{i}=s_{i}^{\beta}v_{i}$, with $\mathbb{E}[v_{i}]=\mu_{v}$ and $\mathbb{V}[v_{i}]=\sigma^2_{v}$. Column (2) allows an industry component of the form $v_{i}=\eta_{k(i)}\xi_{i}$, where $k(i)$ is the industry of firm $i$ and $\mathbb{E}[\eta_{k}]=1$. Estimates come from two-step optimally-weighted GMM with an identity weighting matrix in the first step. The variance matrix in column (2) is clustered by industry. The within share is $\ensuremath{\frac{\mathbb{E}\left[\mathbb{V}\left[v_{i}|k(i)\right]\right]}{\mathbb{V}[v_{i}]}=\frac{(\sigma_{\eta}^{2}+1)\sigma_{\xi}^{2}}{\sigma_{\eta}^{2}\sigma_{\xi}^{2}+\sigma_{\eta}^{2}\mu_{v}^{2}+\sigma_{\xi}^{2}}}$.}
\end{table}

\begin{table}[htbp]
  \centering
  \caption{Moments of random effect distributions}
    \label{tab:double_g_table}%
    \begin{tabular}{p{6.665em}rrr}
    \midrule \midrule
    \multicolumn{1}{r}{} & \multicolumn{1}{c}{Industry} & \multicolumn{1}{c}{Firm} & \multicolumn{1}{c}{Contact} \\
    \multicolumn{1}{r}{} & \multicolumn{1}{c}{effect ($\eta_{k}$)} & \multicolumn{1}{c}{effect ($\xi_{i}$)} & \multicolumn{1}{c}{penalty ($\theta_{i}$)} \\
    \multicolumn{1}{r}{} & \multicolumn{1}{c}{(1)} & \multicolumn{1}{c}{(2)} & \multicolumn{1}{c}{(3)} \\
    \midrule
    \multicolumn{1}{l}{Mean} & \multicolumn{1}{c}{1.000} & \multicolumn{1}{c}{0.292} & \multicolumn{1}{c}{0.086} \\
    \multicolumn{1}{r}{} & \multicolumn{1}{c}{-} & \multicolumn{1}{c}{(0.040)} & \multicolumn{1}{c}{(0.012)} \\
    \multicolumn{1}{r}{} &       &       &  \\
    \multicolumn{1}{l}{Std. dev.} & \multicolumn{1}{c}{0.495} & \multicolumn{1}{c}{0.138} & \multicolumn{1}{c}{0.072} \\
    \multicolumn{1}{r}{} & \multicolumn{1}{c}{(0.192)} & \multicolumn{1}{c}{(0.049)} & \multicolumn{1}{c}{(0.019)} \\
    \multicolumn{1}{r}{} &       &       &  \\
    \multicolumn{1}{l}{Skewness} & \multicolumn{1}{c}{1.856} & \multicolumn{1}{c}{2.293} & \multicolumn{1}{c}{3.545} \\
    \multicolumn{1}{r}{} & \multicolumn{1}{c}{(1.824)} & \multicolumn{1}{c}{(4.580)} & \multicolumn{1}{c}{(2.778)} \\
    \multicolumn{1}{r}{} &       &       &  \\
    \multicolumn{1}{l}{Excess kurtosis} & \multicolumn{1}{c}{7.735} & \multicolumn{1}{c}{30.067} & \multicolumn{1}{c}{43.653} \\
    \multicolumn{1}{r}{} & \multicolumn{1}{c}{(6.740)} & \multicolumn{1}{c}{(52.585)} & \multicolumn{1}{c}{(70.467)} \\
    \hline\hline \\
    \end{tabular}
  \parbox{\textwidth}{\small
\vspace{1eX}
\textit{Notes:} This table reports estimated moments of the distributions of industry and firm effects along with moments of the marginal distribution of contact penalties. Results are derived from hierarchical log-spline deconvolution estimates, with spline parameters estimated by penalized maximum likelihood. Standard errors are computed by the Delta method, with the variance matrix for the spline parameters computed from the negative inverse Hessian of the log likelihood function.}
\end{table}%

\clearpage
\FloatBarrier
\setcounter{figure}{0}    
\setcounter{table}{0}    
\setcounter{section}{0}    
\appendix
\renewcommand{\thesection}{Appendix \Alph{section}}
\renewcommand{\thesubsection}{\Alph{section}.\arabic{subsection}}
\renewcommand{\theHsection}{\Alph{section}.\Alph{section}}
\renewcommand\thefigure{\Alph{section}\arabic{figure}}    
\renewcommand{\theHfigure}{\Alph{section}.\arabic{figure}}
\renewcommand\thetable{\Alph{section}\arabic{table}}    
\renewcommand{\theHtable}{\Alph{section}.\arabic{table}}
\setcounter{page}{1}

\begin{center}
\textbf{\huge{Appendix}} \\
\huge{(for online publication only)}
\end{center}

\clearpage
\setcounter{section}{0}    
\section{Proofs of propositions} \label{sec: proofs}

This section provides proofs of the propositions discussed in Section \ref{sec: social_choice}, which are restated here for completeness.

\setcounter{proposition}{0}

\begin{proposition}[$\lambda$-Condorcet Criterion]
Suppose that firm $i$ satisfies $\pi_{ij} > (1+\lambda)^{-1} \ \forall \ j \neq i$. Then $d_i > d_j \ \forall \ j \neq i$. Moreover, suppose that firm $k$ satisfies $\pi_{ik} > (1+\lambda)^{-1}$ and  $\pi_{kj} > (1+\lambda)^{-1} \ \forall \ j \neq i, j \neq k$, then $d_i > d_k > d_j \ \forall \ j \neq i, j \neq k$.
\end{proposition}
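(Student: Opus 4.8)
The plan is to establish both claims by local exchange arguments built on the single-pair cost comparison from Section \ref{sec:risk}. Writing the objective \eqref{eq:obj2} as a sum over unordered pairs, the contribution of a pair $\{i,j\}$ equals $\pi_{ji}-\lambda\pi_{ij}$ when $d_i>d_j$, equals $0$ when $d_i=d_j$, and equals $\pi_{ij}-\lambda\pi_{ji}$ when $d_i<d_j$. By \eqref{eq:thresholds}, $\pi_{ij}>(1+\lambda)^{-1}$ is equivalent to $\pi_{ji}-\lambda\pi_{ij}<0$; and since $\lambda\le 1$ gives $(1+\lambda)^{-1}\ge 1/2$, the hypothesis $\pi_{ij}>(1+\lambda)^{-1}$ also yields $\pi_{ij}>\pi_{ji}$ and hence $\pi_{ji}-\lambda\pi_{ij}<\pi_{ij}-\lambda\pi_{ji}$. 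Thus whenever $\pi_{ij}>(1+\lambda)^{-1}$, setting $d_i>d_j$ is the \emph{strictly} unique cost-minimizing choice for that pair considered in isolation.

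For the first claim I would argue by contradiction. Let $d^*$ minimize \eqref{eq:obj2} and suppose $d_i^*\le d_j^*$ for some $j\ne i$. Construct $\tilde d$ from $d^*$ by giving firm $i$ a grade strictly larger than every other firm's, while leaving the relative order and the ties among all firms $\ell\ne i$ unchanged (relabeling grades into $[n]$ if necessary). Being an actual assignment of grades-with-ties, $\tilde d$ automatically satisfies the transitivity constraints \eqref{eq:constraints} and is therefore feasible. Passing from $d^*$ to $\tilde d$ changes only the pairwise contributions involving $i$: each pair $\{i,j\}$ now contributes $\pi_{ji}-\lambda\pi_{ij}$, which by the previous paragraph is at most its value under $d^*$, with strict inequality for any $j$ with $d_i^*\le d_j^*$. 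Such a $j$ exists by assumption, so $\tilde d$ attains strictly lower risk than $d^*$ --- a contradiction. Hence every minimizer has $d_i>d_j$ for all $j\ne i$.

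For the second claim, $d_i>d_k$ is immediate from the first claim (whose hypothesis is retained). To get $d_k>d_j$ for all $j\ne i,k$, I would run the exchange a second time on firm $k$: if a minimizer $d^*$ (which satisfies $d_i^*>d_\ell^*$ for all $\ell\ne i$) had $d_k^*\le d_j^*$ for some $j\ne i,k$, reassign $k$ a grade strictly between firm $i$'s grade and those of all remaining firms, leaving the relative order of all firms other than $k$ unchanged. The modified vector is again a legitimate grade assignment, hence feasible, and the only contributions that change are those of pairs $\{k,j\}$ with $j\ne i,k$; each now contributes $\pi_{jk}-\lambda\pi_{kj}$, which since $\pi_{kj}>(1+\lambda)^{-1}$ is at most its previous value and strictly smaller for the offending $j$. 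This again contradicts optimality, so $d_k>d_j$ for all $j\ne i,k$; combined with $d_i>d_k$ this gives $d_i>d_k>d_j$.

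The only points requiring care are (i) the feasibility of the ``promoted'' grade vectors, which follows because any genuine grades-with-ties vector satisfies \eqref{eq:constraints}, and (ii) keeping both pairwise inequalities strict, which is precisely where $\lambda\le 1$ is used. Everything else is bookkeeping to confirm that only the pairs incident to the promoted firm have their contributions altered.
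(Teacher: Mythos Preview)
Your proof is correct and follows essentially the same exchange-argument strategy as the paper's: both show that any optimal grade vector must place $i$ strictly above every other firm by constructing an improving deviation from the single-pair cost inequality $\pi_{ji}-\lambda\pi_{ij}<0$. The only cosmetic difference is that the paper proceeds in two local steps---first ruling out ties with $i$, then ruling out any level above $i$ by bumping $i$ up just one grade at a time---whereas you promote $i$ straight to the top in a single move; your one-shot promotion is slightly cleaner but the underlying idea is identical.
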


\begin{proof} 
First, we establish that no firm can be tied with firm $i$.  Suppose $\exists \ j \ s.t. \ d_j = d_i = d$. Let $\tilde{d} = \inf\{\{d' \in d^*(\lambda) \ s.t. \ d' > d\} \cup \{\infty\}\}$. Then changing firm $i$'s grade to a value in $(d,\tilde{d})$ yields strictly lower loss, because $\sum_{j \neq i \ s.t. \ d_j = d} \pi_{ji} - \lambda \pi_{ij} < 0$, and comparisons between $i$ and all other firms $j \ s.t. \ d_j \neq d$ are unaffected.

Now suppose $\exists \ d \in d^*(\lambda) \ s.t. \ d > d_i$. Let $d' = \inf \{d \in d^*(\lambda) \ s.t. \ d > d_i\}$. Then $\forall j \ s.t. \ d_j = d'$, the risk of re-assigning $d_i = d' + \epsilon < \inf\{\{d \in d^*(\lambda) \ s.t. \ d > d'\} \cup \{\infty\}\}$ is strictly lower because $\sum_{j \neq i \ s.t. \ d_j = d'} \pi_{ji} - \lambda \pi_{ij} < 0 < \sum_{j \neq i \ s.t. \ d_j = d'} \pi_{ij} - \lambda \pi_{ji}$, and comparisons between $i$ and all other firms $j \ s.t. \ d_j \neq d'$ are unaffected. Since the same argument applies to firm $k$ removing firm $i$ from set of firms under consideration, the proof of the second part of the claim is identical. 
\end{proof}

\begin{proposition}[$\lambda$-Smith criterion]\label{prop2}
Let $\mathcal{S}$ denote a collection of firms exhibiting the following dominance property: $\pi_{ij}>(1+\lambda)^{-1}$ $\forall i\in \mathcal{S},j \notin \mathcal{S}$. Then the top graded firms must be a member of $\mathcal{S}$.
\end{proposition}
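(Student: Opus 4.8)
The plan is to argue by contradiction with a single global perturbation of the grade vector, in the same spirit as the proof of Proposition \ref{prop1}. The case $\mathcal{S}=[n]$ is vacuous (the hypothesis is empty and the conclusion trivial), so assume $\emptyset\neq\mathcal{S}\subsetneq[n]$. Let $d^{*}$ be an optimal grade vector and $g^{*}=\max_{i}d_{i}^{*}$ the top grade. If every firm attaining $g^{*}$ already lies in $\mathcal{S}$ we are done, so suppose there is some $j^{*}\notin\mathcal{S}$ with $d_{j^{*}}^{*}=g^{*}$. I would then define a competing weak order $\tilde d$ that ``lifts'' all of $\mathcal{S}$ above everyone else while leaving the internal order on $\mathcal{S}$ and on $\mathcal{S}^{c}$ untouched: concretely, set $\tilde d_{i}=d_{i}^{*}+n$ for $i\in\mathcal{S}$ and $\tilde d_{i}=d_{i}^{*}$ for $i\notin\mathcal{S}$, so that $\tilde d_{i}>\tilde d_{j}$ for every $i\in\mathcal{S}$, $j\notin\mathcal{S}$.

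Next I would compare $\tilde d$ with $d^{*}$ pair by pair in the decomposition \eqref{eq:obj2}, whose $\{i,j\}$ term equals $\pi_{ji}-\lambda\pi_{ij}$ when $d_{i}>d_{j}$, equals $\pi_{ij}-\lambda\pi_{ji}$ when $d_{i}<d_{j}$, and equals $0$ when $d_{i}=d_{j}$. Pairs with both members inside $\mathcal{S}$, or both outside, contribute identically under $\tilde d$ and $d^{*}$, since their relative order is preserved. For a crossing pair $i\in\mathcal{S}$, $j\notin\mathcal{S}$, the dominance hypothesis $\pi_{ij}>(1+\lambda)^{-1}$ gives $\pi_{ji}=1-\pi_{ij}<\lambda/(1+\lambda)$, hence the decision $\tilde d_{i}>\tilde d_{j}$ carries term $\pi_{ji}-\lambda\pi_{ij}<0$; this is no larger than the term $d^{*}$ attached to the same pair, since a tie contributes $0$ and the reverse ranking contributes $\pi_{ij}-\lambda\pi_{ji}$, which exceeds $\pi_{ji}-\lambda\pi_{ij}$ by $(1+\lambda)(2\pi_{ij}-1)>0$. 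So every crossing pair weakly improves. Finally, taking any $i\in\mathcal{S}$ (nonempty), the pair $(i,j^{*})$ strictly improves, because $d_{i}^{*}\le g^{*}=d_{j^{*}}^{*}$ means the old term was either $0$ (if tied) or $\pi_{ij^{*}}-\lambda\pi_{j^{*}i}$, both strictly above the new term $\pi_{j^{*}i}-\lambda\pi_{ij^{*}}$. Hence $\mathcal{R}(\tilde d;\lambda)<\mathcal{R}(d^{*};\lambda)$, contradicting optimality, and the loser-set version follows from the sign-flip symmetry of \eqref{eq:obj2} already invoked after Proposition \ref{prop1} (reversing the estimand swaps $\pi_{ij}\leftrightarrow\pi_{ji}$ and turns $\mathcal{S}$ into a dominating set from below).

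The one point I would take care to spell out — and the only place where anything beyond the routine sign computation is needed — is that $\tilde d$ is a legitimate competitor. The ``lift'' is exactly the lexicographic refinement that ranks $\mathcal{S}$-membership first and the old weak order $d^{*}$ second, which is automatically transitive, so $\tilde d$ satisfies the constraints in \eqref{eq:constraints}; and because the objective \eqref{eq:obj2} depends on $d$ only through the induced weak order (i.e.\ through the $d_{ij}$ and $e_{ij}$), relabeling $\tilde d$ back into $[n]^{n}$ by its ranks changes neither feasibility nor risk. Everything else is the one-pair threshold algebra from Section \ref{sec: social_choice}.
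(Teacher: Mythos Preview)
Your argument is correct and, in fact, takes a cleaner route than the paper's own proof. The paper argues by a \emph{local} swap: it identifies the highest grade $\tilde d$ attained within $\mathcal{S}$ and the next grade $d'>\tilde d$ (necessarily populated only by firms outside $\mathcal{S}$), and then exchanges those two grade levels, checking that the $\bar{\mathcal{S}}$-versus-$\underline{\mathcal{S}}$ cross terms all move in the right direction while the remaining comparisons are ``unaffected.'' Your argument instead performs a single \emph{global} lift, adding $n$ to the grade of every firm in $\mathcal{S}$ so that all of $\mathcal{S}$ sits strictly above $\mathcal{S}^{c}$ while both internal weak orders are preserved.

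The global lift buys you two things. First, the bookkeeping is simpler: only the crossing pairs $(i,j)$ with $i\in\mathcal{S}$, $j\notin\mathcal{S}$ can change, and for each of those the dominance hypothesis directly gives $\pi_{ji}-\lambda\pi_{ij}<0$, which is the minimum over the three possible pairwise contributions. Second, it sidesteps a subtlety in the paper's swap: when grades are exchanged between $\bar{\mathcal{S}}$ and $\underline{\mathcal{S}}$, comparisons between $\underline{\mathcal{S}}$ and any firms \emph{outside} $\mathcal{S}$ that happened to share the old grade $\tilde d$ also change (from strictly ordered to tied), and the dominance hypothesis says nothing about those pairs. Your construction never touches such pairs. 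The only point that needs care --- which you flag --- is that $\tilde d$ is a legitimate competitor because the objective depends on $d$ only through the induced weak order, so the shifted vector can be relabeled back into $[n]^{n}$ without affecting feasibility or risk.
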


\begin{proof} 
First, note that if $\mathcal{S}$ is a singleton, then Proposition 1 applies directly. Otherwise, let $\tilde{d} = \sup\{d_i \ s.t. \ i \in \mathcal{S}\}$ and let $\bar{\mathcal{S}}$ denote the set $\{i \in \mathcal{S} \ s.t. \ d_i = \tilde{d}\}$. Suppose $\exists \ j \notin \mathcal{S} \ s.t. \ d_j > \tilde{d}$. Let $d' = \inf\{d \in d^*(\lambda) \ s.t. \ d > \tilde{d}\}$ and ${\underline{\mathcal{S}}}$ denote the set $\{j \notin \mathcal{S} \ s.t. \ d_j = d'\}$. Then swapping grades such that all firms in $\bar{\mathcal{S}}$ receive grade $d'$ and all firms in ${\underline{\mathcal{S}}}$ receive grade $\tilde{d}$ must decrease risk, because $\sum_{i \in \bar{\mathcal{S}}}\sum_{j \in {\underline{\mathcal{S}}}} \pi_{ji} - \lambda \pi_{ij} < 0 < \sum_{i \in \bar{\mathcal{S}}} \sum_{j \in {\underline{\mathcal{S}}}} \pi_{ij} - \lambda \pi_{ji}$, comparisons between all firms within $\bar{\mathcal{S}}$ and $\underline{\mathcal{S}}$ are unaffected, and comparisons between all firms $k \notin \{\bar{\mathcal{S}} \cup \underline{\mathcal{S}}\}$ are unaffected. Thus no firm $j \notin \mathcal{S}$ may be ranked above the top graded member of $\mathcal{S}$.
\end{proof}

\begin{proposition}[Unordered $\lambda$-Smith candidates are tied]
Let $\mathcal{S}$ denote a collection of firms exhibiting the following dominance property: $\pi_{ij}>(1+\lambda)^{-1}$ $\forall i\in \mathcal{S},j \notin \mathcal{S}$. Moreover, suppose $\pi_{ij}<(1+\lambda)^{-1}$ $\forall (i,j) \in \mathcal{S}$. Then all firms in $\mathcal{S}$ receive the highest grade.
\end{proposition}

\begin{proof}
First, we show that all firms $j \notin \mathcal{S}$ must be ranked below every member of $\mathcal{S}$. Suppose not. Let $d' = \inf\{d_j \ s.t. \ j \notin \mathcal{S}, \exists \ i \in \mathcal{S} \ s.t. \ d_j >  d_i\}$, $\underline{\mathcal{S}} = \{j \notin \mathcal{S} \ s.t. \ d_j = d'\}$, $\tilde{d} = \sup\{d_i \ s.t. \ i \in \mathcal{S}, d_i < d'\}$, $\bar{\mathcal{S}} = \{i \in \mathcal{S} \ s.t. \ d_i = \tilde{d}\}$. Then setting grades so that all firms in $\underline{\mathcal{S}}$ receive a grade $m \in (d',\tilde{d})$ and all firms in ${\bar{\mathcal{S}}}$ receive grade $d'$ must decrease risk because $\sum_{i \in \bar{\mathcal{S}}}\sum_{j \in {\underline{\mathcal{S}}}} \pi_{ji} - \lambda \pi_{ij} < 0 < \sum_{i \in \bar{\mathcal{S}}} \sum_{j \in {\underline{\mathcal{S}}}} \pi_{ij} - \lambda \pi_{ji}$, implying it is optimal to rank all firms in $\bar{\mathcal{S}}$ above those in $\underline{\mathcal{S}}$. Moreover, $\sum_{i \in \bar{\mathcal{S}}} \sum_{j \in \mathcal{S} \ s.t. \ d_j = d'} \pi_{ji} - \lambda \pi_{ij} > 0$ implies that it is optimal to tie firms in $\bar{\mathcal{S}}$ with firms in $\mathcal{S}$ that already have grade $d'$, while  $\sum_{j \in \underline{\mathcal{S}}} \sum_{i \in \mathcal{S} \ s.t. \ d_i = d'} \pi_{ji} - \lambda \pi_{ij} < 0$ implies it is optimal to rank any firms in $\mathcal{S}$ that already have grade $d'$ above those firms $\notin \mathcal{S}$ reassigned to grade $m$, and $\sum_{i \in \bar{\mathcal{S}}} \sum_{j \notin \mathcal{S} \ s.t. \ d_j = \tilde{d}} \pi_{ji} - \lambda \pi_{ij} < 0$ implies it is optimal to rank firms in $\bar{\mathcal{S}}$ above any firms $\notin \mathcal{S}$ that currently have grade $\tilde{d}$. Comparisons to all firms with grades higher than $d'$ are unaffected, as well as to any firms with grades below $\tilde{d}$. The top grades thus consist exclusively of firms in $\mathcal{S}$. To see that they also must be tied, note that because $\pi_{ji} -\lambda \pi_{ij}>0$ $\forall (i,j) \in \mathcal{S}$, collapsing any two adjacent grades for firms in $\mathcal{S}$ must decrease risk.
\end{proof}

\clearpage
\FloatBarrier
\section{Computing posteriors} \label{sec: posteriors}

This appendix details computation of posterior distributions for the firm contact gap analysis of Section 5. Computation of posteriors for the name contact rate analysis of Section 4 is a special case of this framework setting the dependence parameter $\beta$ to zero and the standard error $s_{i}$ for name $i$ to $(4N_{i})^{-1}$. Under the model in \eqref{eq:dependence_model}, the posterior density for $v_{i}=\theta_{i}/s_{i}^{\beta}$ is given by
\begin{flalign*}
f_{v}(x|Y_{i};G_{v},\beta) =& \frac{\mathcal{L}(\hat{\theta}_i | v_i = x, s_i;\beta)dG_{v}(x)}{\int \mathcal{L}(\hat{\theta}_i | v_i = u, s_i;\beta)dG_{v}(u)}, \\
\mathcal{L}(\hat{\theta}_i | v_i = x, s_i;\beta) =& \frac{1}{s_i^{1-\beta}}\phi \left( \frac{(\hat{\theta}_i/s_{i}^{\beta})- x}{s_i^{1-\beta}} \right).
\end{flalign*}

\noindent Taking $\hat G_{v}$ as a deconvolution estimate of $G_{v}$ and $\hat{\beta}$ as a GMM estimate of $\beta$, posterior means for $\theta_{i}$ are computed as $s_{i}^{\hat{\beta}}\times \int x f_{v}(x | Y_i;\hat{G}_{v},\hat{\beta})dx$, while the lower and upper limits of 95\% credible intervals are given by the 2.5th and 97.5th percentiles of the posterior cumulative  distribution $F_{\theta}(t | Y_i;\hat{G}_{v},\hat{\beta}) = \int_{-\infty}^{t/s_{i}^{\hat{\beta}}} f_{v}(x | Y_i;\hat{G}_{v},\hat{\beta})dx$.

We also use  $\hat{G}_{v}$ and $\hat{\beta}$ to compute the matrix of pairwise posterior ranking probabilities $\pi_{ij}$. We have:
\begin{flalign*}
\pi_{ij} &= \Pr(\theta_i > \theta_j | Y_i, Y_j;G_{v},\beta) \\
&= \Pr((s_{i}/s_{j})^{\beta}v_{i} > v_{j} | Y_i, Y_j; G_{v},\beta) \\
&= \int_{-\infty}^{\infty} \int_{-\infty}^{(s_{i}/s_{j})^{\beta}x} f_{v}(x|Y_{i};G_{v},\beta)f_{v}(u|Y_{j};G_{v},\beta)du dx.
\end{flalign*}
\noindent Posterior moments of differences in firm biases $\mu^p_{ij}=\mathbb{E}\left[\max\{(\theta_i-\theta_j),0\}^p\mid Y_i,Y_j\right]$ are computed analogously. Specifically:
\begin{flalign*}
\mathbb{E}\left[\max\{(\theta_i-\theta_j),0\}^p\mid Y_i,Y_j\right] &= \int_{-\infty}^{\infty} \int_{-\infty}^{(s_{i}/s_{j})^{\beta}x} (s_{i}^{\beta}x - s_{j}^{\beta}u)^{p}f_{v}(x|Y_{i};G_{v},\beta)f_{v}(u|Y_{j};G_{v},\beta)du dx.
\end{flalign*}

\noindent We plug $\hat{G}_{v}$ and $\hat{\beta}$ into these formulas to construct empirical Bayes posteriors by numerical integration, then solve the linear programming problem to minimize either binary or weighted risk using Gurobi.

\subsection{Industry effects}
Posteriors for the hierarchical industry effects model of Section \ref{sec: industry_rfe} condition on the  data for all firms in an industry. Let $\mathbf{Y}_{k}$ denote the $2n_{k}\times 1$ vector of estimates $\hat{\theta}_{i}$ and standard errors $s_{i}$ for all firms in industry $k$, and let $\pmb{\xi}_{k}$ denote the $n_{k}\times 1$ vector of within-industry deviations $\xi_{i}$ for all firms in this industry. The joint posterior density for $\pmb{\xi}_{k}$ and the industry effect $\eta_{k}$ at the point where $\eta_{k}=x$ and $\mathbf{\xi}_{k}=\mathbf{z}=(z_{1},....,z_{n_{k}})^{\prime}$ is given by:
\begin{flalign*}
f_{\eta,\pmb{\xi}}(x,\mathbf{z}|\mathbf{Y}_{k};G_{\eta},G_{\xi},\beta) =& \frac{\left[\prod_{i:k(i)=k}\mathcal{L}(\hat{\theta}_i | v_i = x\times z_{i}, s_i;\beta)dG_{\xi}(z_{i})\right]dG_{\eta}(x)}{\int_{u} \int_{\mathbf{t}} \left[\prod_{i:k(i)=k}\mathcal{L}(\hat{\theta}_i | v_i = u\times t_{i}, s_i;\beta)dG_{\xi}(t_{i})\right]dG_{\eta}(u)}.
\end{flalign*}
We form Empirical Bayes joint posteriors given by $f_{\eta,\pmb{\xi}}(x,\mathbf{z}|\mathbf{Y}_{k};\hat{G}_{\eta},\hat{G}_{\xi},\hat{\beta})$, where $\hat{\beta}$ is the GMM estimate of $\beta$ from column (2) of Table 3, and $\hat{G}_{\xi}$ and $\hat{G}_{\eta}$ are hierarchical deconvolution estimates from panel (a) of Figure 5. We then integrate over these joint posteriors by simulation to compute posterior means and quantiles for each random effect along with pairwise posterior probabilities $\pi_{ij}$ for the model with industry effects.

\subsection{Between grade variance}
Letting $M$ denote the total number of grades, the (firm-weighted) between grade variance of $\theta_i$ can be written
\[
\sum_{g=1}^{M}w_{g}\bar{\theta}_{g}^{2}-\left(\sum_{g=1}^{M}w_{g}\bar{\theta}_{g}\right)^{2}=\sum_{g=1}^{M}w_{g}\left(1-w_{g}\right)\bar{\theta}_{g}^{2}-\sum_{g=1}^{M}\sum_{g'\neq g}w_{g}w_{g'}\bar{\theta}_{g}\bar{\theta}_{g'},
\]
where $\bar{\theta}_{g}=\frac{\sum_{i=1}^{n}D_{ig}\theta_{i}}{\sum_{i=1}^{n}D_{ig}}$, $D_{ig}=1\{d_i^*=g\}$ is an indicator for being assigned grade $g\in[M]$, and $w_{g}=n^{-1}\sum_{i=1}^{n}D_{ig}$ gives the share of firms assigned grade $g$. 

We compute a Bayes unbiased estimate of each $\bar{\theta}_{g}$ by simply averaging the firm specific posterior firm means $\mathbb{E}\left[\theta_{i}|Y\right]$ within grade. The posterior mean estimate of each $\bar{\theta}_{g}^{2}$ is slightly harder to compute because 
\begin{align*}
\bar{\theta}_{g}^{2}&=\frac{\sum_{i=1}^{n}D_{ig}\theta_{i}^{2}}{\left(\sum_{i=1}^{n}D_{ig}\right)^{2}}+\frac{\sum_{i=1}^{n}\sum_{i'\neq i}D_{ig}D_{i'g}\theta_{i}\theta_{i'}}{\left(\sum_{i=1}^{n}D_{ig}\right)^{2}}\\&=\left(nw_{g}\right)^{-2}\left\{ \sum_{i=1}^{n}D_{ig}\theta_{i}^{2}+\sum_{i=1}^{n}\sum_{i'\neq i}D_{ig}D_{i'g}\theta_{i}\theta_{i'}\right\} .
\end{align*}
Our posterior mean estimate of this quantity is computed analogously as
\begin{align*}
\mathbb{E}\left[\bar{\theta}_{g}^{2}|Y\right]	&=	\left(nw_{g}\right)^{-2}\left\{ \sum_{i=1}^{n}D_{ig}\mathbb{E}\left[\theta_{i}^{2}|Y\right]+\sum_{i=1}^{n}\sum_{i'\neq i}D_{ig}D_{i'g}\mathbb{E}\left[\theta_{i}|Y\right]\mathbb{E}\left[\theta_{i'}|Y\right]\right\} \\
	&=	\left(nw_{g}\right)^{-2}\left\{ \sum_{i=1}^{n}D_{ig}\mathbb{E}\left[\theta_{i}^{2}|Y\right]+\left(\sum_{i=1}^{n}D_{ig}\mathbb{E}\left[\theta_{i}|Y\right]\right)^{2}-\sum_{i=1}^{n}D_{ig}\mathbb{E}\left[\theta_{i}|Y\right]^{2}\right\} ,
\end{align*}
where each $\mathbb{E}\left[\theta_{i}|Y\right]$ and $\mathbb{E}\left[\theta_{i}^2|Y\right]$ is evaluated numerically using the relevant estimated $\hat G$.

\clearpage
\section{Hierarchical log-spline estimator} \label{sec: industry_fx}

Efron's (2016) empirical Bayes log-spline deconvolution approach uses a flexible exponential family mixing distribution model with density parameterized by a flexible $B$-th order spline. The spline parameters are then estimated by penalized maximum likelihood. We adapt this approach to estimate the within- and between-industry distributions for the hierarchical model of Section \ref{sec: industry_rfe}. The between-industry distribution $G_{\eta}$ is approximated with a discrete probability mass function defined on a set of $M_{\eta}$ support points $\{\bar{\eta}_{1},....,\bar{\eta}_{M_{\eta}}\}$. The mass at the $m$-th support point $\bar{\eta}_{m}$ is given by
\[
g_{\eta,m}(\alpha_{\eta})=\exp\left(q_{\eta,m}^{\prime}\alpha_{\eta}-\log\left(\sum_{\ell=1}^{M_{\eta}}\exp(q_{\eta,\ell}^{\prime}\alpha_{\eta})\right)\right),
\]
 where $q_{\eta,m}$ is a $B \times 1$ vector of values of natural spline basis functions for point $m$ (as detailed in Efron 2016) and $\alpha_{\eta}$ is a $B\times 1$ vector of coefficients. Similarly, we approximate the within-industry distribution $G_{\xi}$ with a discrete distribution defined on support $\{\bar{\xi}_{1},....,\bar{\xi}_{M_{\xi}}\}$, with mass function
 \[
g_{\xi,m}(\alpha_{\xi})=\exp\left(q_{\xi,m}^{\prime}\alpha_{\xi}-\log\left(\sum_{\ell=1}^{M_{\xi}}\exp(q_{\xi,\ell}^{\prime}\alpha_{\xi})\right)\right)
\]
for $B\times 1$ spline basis and coefficient vectors $q_{\xi,m}$ and $\alpha_{\xi}$, respectively. 

With this specification of the mixing distributions, the joint likelihood contribution for firms in industry $k$ is given by:
\[
\mathcal{L} \left(\hat{\bm{\theta}}_{k}|\bm{s}_{k};\alpha_{\eta},\alpha_{\xi}\right) = \displaystyle{\sum_{\ell = 1}^{M_{\eta}}g_{\eta,\ell}(\alpha_{\eta})\left\{\prod_{i:k(i)=k}\left[\sum_{m=1}^{M_{\xi}}g_{\xi,m}(\alpha_{\xi})\frac{1}{s_{i}^{1-\beta}}\phi\left(\frac{(\hat{\theta}_{i}/s_{i}^{\beta})-\bar{\eta}_{\ell}\bar{\xi}_{m}}{s_{i}^{1-\beta}}\right)\right]\right\}},
\]
where $\hat{\bm{\theta}}_{k}$ and $\bm{s}_{k}$ are vectors collecting the $\hat{\theta}_{i}$ and $s_{i}$ for firms with $k(i)=k$. Following Efron (2016), we estimate the parameters $\alpha_{\eta}$ and $\alpha_{\xi}$ by penalized maximum likelihood. Our approach extends the Efron (2016) estimator to add a separate penalty for the within- and between-industry spline coefficients. Specifically, the parameter estimates are computed as:
\[
(\hat{\alpha}_{\eta},\hat{\alpha}_{\xi}) = \displaystyle{\arg \max_{(\alpha_{\eta},\alpha_{\xi})}\sum_{k=1}^{K}\log\mathcal{L} \left(\hat{\bm{\theta}}_{k}|\bm{s}_{k};\alpha_{\eta},\alpha_{\xi}\right) -c_{\eta}\sqrt{\alpha_{\eta}^{\prime}\alpha_{\eta}}-c_{\xi}\sqrt{\alpha_{\xi}^{\prime}\alpha_{\xi}}}.
\]
We set the spline order equal to $B=5$. In models with industry effects the number of support points is set equal to $M_{\eta}=M_{\xi}=200$, with points equally spaced on the supports of $\eta_{k}$ and $\xi_{i}$. Models without industry effects use $M_{\xi}=1,000$ and $M_{\eta}=1$ with $\bar{\eta}_{1}=1$, so that $\eta_{k}$ has a degenerate distribution at unity. The lower limits of the supports are set to zero. The upper limits of the support for each component is set equal to the maximum of the empirical distribution of corresponding estimates or five standard deviations above the GMM-estimated mean, whichever is larger. Since the scales of the two mixing distributions are not separately identified we impose the constraint $\sum_{m}g_{\eta,m}(\alpha_{m})\bar{\eta}_{m}=1$. The penalty terms $c_{\eta}$ and $c_{\xi}$ are calibrated so that mean contact ratio and variances of the within- and between-industry components come as close as possible to matching GMM estimates of these same quantities, as measured by the quadratic distance between log-spline and GMM estimates (scaled by the inverse variance matrix of the GMM estimates). The model without industry effects chooses $c_{\xi}$ to minimize the quadratic difference between model-implied and GMM estimates of the mean and total variance of contact gaps. In practice all parameters match well, as can be seen by comparing Tables \ref{tab:gmm_table} and \ref{tab:double_g_table}.

\clearpage
\FloatBarrier
\section{Results under square weighted loss} \label{appdx:sqwt}

In this Appendix we discuss the grades that result from minimizing the square-weighted risk function $\mathcal{R}^2(d;\lambda)$ introduced in Section \ref{sec:sqwt}. Mirroring the formulation in \eqref{eq:obj}, square-weighted risk can be written as a linear combination of a square-weighted Discordance Rate and a square-weighted rank correlation:
\begin{align}
\mathcal{R}^2(d;\lambda)=(1-\lambda)DR^2(d) - \lambda \bar{\tau}^2(d). \label{eq:sqdecomp}
\end{align}
The square-weighted rank correlation coefficient
\begin{align}
\bar{\tau}^2(d)=\frac{1}{\sum_{i=2}^{n}\sum_{j=1}^{i-1} m_{ij}^{2}}\sum_{i=2}^{n}\sum_{j=1}^{i-1}1\left\{ d_{i}<d_{j}\right\} (\mu_{ij}^2 - \mu_{ji}^2)+1\left\{ d_{i}>d_{j}\right\} (\mu_{ji}^2 - \mu_{ij}^2) \nonumber
\end{align}
provides a measure of the information conveyed by a set of grades. The square-weighted Discordance Rate
\begin{align}
DR^2(d)=\frac{1}{\sum_{i=2}^{n}\sum_{j=1}^{i-1} m_{ij}^{2}}\sum_{i=2}^{n}\sum_{j=1}^{i-1}1\left\{ d_{i}<d_{j}\right\} \mu_{ij}^2+1\left\{ d_{i}>d_{j}\right\} \mu_{ji}^2 \nonumber
\end{align}
is an average of the $DR^2_{g,g'}$ introduced in Section \ref{sec:DR} that summarizes the reliability of a set of grades. In words, $DR^2(d)$ gives the chance that a randomly selected pair of firms, sampled with weights proportional to the square of the difference in their latent contact gaps, is misranked.

We use the decomposition given by \eqref{eq:sqdecomp} in the applications below to construct a reporting possibilities frontier contrasting $DR^2(d^*(\lambda))$ with $\bar {\tau}^2(d^*(\lambda))$ for different choices of $\lambda$. In both applications, square-weighting shifts out the reporting possibilities frontier relative to the frontier implied by binary (i.e. $p=0$) loss, which is intuitive as one should expect less loss from ranking exercises when small mistakes are inconsequential. As a result, switching from binary to square-weighted loss tends to yield more grades when $\lambda$ is held constant. 

\subsection{Square-weighted results for first names}

Panel (a) of Figure \ref{fig: names_res_sq} depicts the number of grades that emerge from minimizing square-weighted risk as a function of the preference parameter $\lambda$. At our preferred value of $\lambda=0.25$, five grades emerge with a square weighted discordance rate of \namedrsq\%. Panel (b) of Figure \ref{fig: names_res_sq} shows that square-weighting shifts out the reporting possibilities frontier. As in our binary baseline, naively ranking either the point estimates $\hat \theta_i$ or the posterior means $\bar \theta_i$ according to their empirical distribution yields grades with information content and reliability very close to those delivered by the square-weighted grades with $\lambda=1$.

The five grades that result from choosing $\lambda=0.25$ under square-weighting are shown in Figure \ref{fig: names_ranks_sq_weighted}. Interestingly, under this weighting scheme, the posterior means of the names are more nearly monotone in the $n$ assigned grades that result when setting $\lambda=1$ than was found under binary loss. Two distinctively white female names ``Misty'' and ``Heather'' earn the top grade, while three out of the four names assigned the bottom grade ``Lawanda'', ``Tameka'', and ``Latisha'' are distinctively Black and female. As shown in Figure \ref{fig:pseudoR2_sq}, however, a logistic regression of sex on these five grade categories yields a pseudo-$R^2$ of roughly 0.05, while a corresponding race regression yields a pseudo-$R^2$ of 0.24 revealing that these grades are much stronger predictors of race than sex.

\subsection{Square-weighted results for firms}
Figure \ref{fig:poisson_binary_lambda_sq} shows the number of grades that emerge from different choices of $\lambda$. Regardless of whether industry effects are included, square weighting substantially increases the number of grades assigned relative to the solutions under binary loss. At $\lambda=0.25$ square weighting yields six grades without industry effects and eight grades when industry effects are included. While the number of grades increases, their reliability remains high, with the six grade ranking exhibiting a square-weighted discordance rate of \sqdr\% \ and the eight grade ranking a corresponding discordance rate of \industrysqdr\%. Evidently, it is possible to assign many grades without exposing oneself to many large mistakes.

Figure \ref{fig:PPF_sq} displays the reporting possibilities under binary and square weighting. Square-weighting the losses pushes out the reporting possibilities frontier relative to binary ranking. Naively ranking based upon the posterior mean $\bar \theta_i$ again yields results very similar to the grades generated by setting $\lambda=1$. However ranking based upon the naive (i.e., unshrunk) point estimates $\hat \theta_i$ yields a combination of very high square-weighted Discordance Rates and a modest square-weighted rank correlation that is dominated by our square-weighted grades when $\lambda=0.25$.

Figure \ref{fig:poisson_squared} displays the firm rankings derived from minimizing square-weighted risk without conditioning on industry effects. As before, firms are ordered by their ranking under $\lambda=1$, which aligns closely with the Condorcet ranks reported in Figure \ref{fig:poisson_binary}. Relative to the binary loss solution, square-weighting leads to three additional grades, and a more even distribution of firms across grades. Using industry effects to evaluate the square-weighted risk generates the eight grade ranking portrayed in Figure \ref{fig:poisson_squared_cov}. Notably, a single firm with the highest posterior mean contact penalty earns the worst grade.

As in the binary loss case, these grades explain a meaningful share of total variance in $\theta_i$. The estimated between-grade standard deviation in contact penalties is \sqbtwnvar \ for the six grades reported in Figure \ref{fig:poisson_squared}, implying an $R^2$ of \sqbtwnr\%. The grades computed from the model with industry effects and reported in Figure \ref{fig:poisson_squared_cov} explain \indsqbtwnr\% of the total variance. Figures \ref{fig: sqwt_all_grades} and \ref{fig: sqwt_irfe_all_grades} depict the grades assigned under different choices of $\lambda$ for the square-weighting schemes excluding and including industry effects respectively.

Figure \ref{fig: sqwt_irfe_all_grades} reports square-weighted DR estimates for the grades that minimize $\mathcal{R}^2(d;\lambda)$. Under square weighted loss, we find weighted Discordance Rates between adjacent grades ranging from 10 to 15\% implying Bayes factors ranging from 8 to 6. Weighted Discordance Rates between non-adjacent grades are uniformly smaller than 3\%. Our finding that the square-weighted DRs in Figure \ref{fig: sqwt_irfe_all_grades} fall below those in Figure \ref{fig: sqwt_irfe_all_grades}, reflects that most of the ranking mistakes likely to arise are small in magnitude, meaning that if one firm is erroneously listed as more biased than another, the two firms likely exhibit similar levels of bias.

\clearpage
\FloatBarrier
\section{Additional Figures and Tables}

\begin{table}[h!]
\caption{First names assigned by race and gender}
\label{tab: firstname_list}
\begin{adjustbox}{center}
\begin{tabular}{l l c l c l c l c} \hline\hline
& \multicolumn{2}{c}{Black male} & \multicolumn{2}{c}{White male} & \multicolumn{2}{c}{Black female}  & \multicolumn{2}{c}{White female} \\\cmidrule(lr){2-3}\cmidrule(lr){4-5}\cmidrule(lr){6-7}\cmidrule(lr){8-9}
& Name & Source & Name & Source  & Name & Source  & Name & Source \\\hline
1	&	Antwan	&	NC	&	Adam	&	NC	&	Aisha	&	Both	&	Allison	&	BM	\\
2	&	Darnell	&	BM	&	Brad	&	Both	&	Ebony	&	Both	&	Amanda	&	NC	\\
3	&	Donnell	&	NC	&	Bradley	&	NC	&	Keisha	&	BM	&	Amy	&	NC	\\
4	&	Hakim	&	BM	&	Brendan	&	Both	&	Kenya	&	BM	&	Anne	&	BM	\\
5	&	Jamal	&	Both	&	Brett	&	BM	&	Lakeisha	&	NC	&	Carrie	&	BM	\\
6	&	Jermaine	&	Both	&	Chad	&	NC	&	Lakesha	&	NC	&	Emily	&	Both	\\
7	&	Kareem	&	Both	&	Geoffrey	&	BM	&	Lakisha	&	Both	&	Erin	&	NC	\\
8	&	Lamar	&	NC	&	Greg	&	BM	&	Lashonda	&	NC	&	Heather	&	NC	\\
9	&	Lamont	&	NC	&	Jacob	&	NC	&	Latasha	&	NC	&	Jennifer	&	NC	\\
10	&	Leroy	&	BM	&	Jason	&	NC	&	Latisha	&	NC	&	Jill	&	Both	\\
11	&	Marquis	&	NC	&	Jay	&	BM	&	Latonya	&	Both	&	Julie	&	NC	\\
12	&	Maurice	&	NC	&	Jeremy	&	NC	&	Latoya	&	Both	&	Kristen	&	Both	\\
13	&	Rasheed	&	BM	&	Joshua	&	NC	&	Lawanda	&	NC	&	Laurie	&	BM	\\
14	&	Reginald	&	NC	&	Justin	&	NC	&	Patrice	&	NC	&	Lori	&	NC	\\
15	&	Roderick	&	NC	&	Matthew	&	Both	&	Tameka	&	NC	&	Meredith	&	BM	\\
16	&	Terrance	&	NC	&	Nathan	&	NC	&	Tamika	&	Both	&	Misty	&	NC	\\
17	&	Terrell	&	NC	&	Neil	&	BM	&	Tanisha	&	BM	&	Rebecca	&	NC	\\
18	&	Tremayne	&	BM	&	Scott	&	NC	&	Tawanda	&	NC	&	Sarah	&	Both	\\
19	&	Tyrone	&	Both	&	Todd	&	BM	&	Tomeka	&	NC	&	Susan	&	NC	\\
\hline\hline
\end{tabular}
\end{adjustbox}
\parbox{\textwidth}{\small
\vspace{1eX}
      \textit{Notes:} This table lists the first names assigned by race and gender and their sources. ``BM" indicates that the name appeared in original set of nine names used for each group in \citet{bertrand_mullainathan_2004}. ``NC" indicates the name was drawn from data on North Carolina speeding infractions and arrests. ``Both" indicates the name appeared in both sources. Names from N.C. speeding tickets were selected from the most common names where at least 90\% of individuals are reported to belong to the relevant race and gender group.}
\end{table}

\clearpage
\begin{center}
\renewcommand*{\arraystretch}{.88}
\begin{longtable}{c c cccc cccc}
\caption[Detailed results by firm]{Detailed results by firm} \label{tab: results_detail}
\\
\hline\hline						
	&		&	\multicolumn{4}{c}{Baseline model}							&	\multicolumn{4}{c}{Industry effects model}							\\\cmidrule(lr){3-6}\cmidrule(lr){7-10}
         Firm   &               & Post. & Post. & & Cond.& Post. & Post. & & Cond. \\
(SIC)	&	$\theta_i$	&	mean	& CI	&	Grd	& 	rank	&	mean	&	 CI	&	Grd	& 	rank	\\\hline \\[-0.7em]
\endfirsthead

\multicolumn{5}{l}{{\textit{Continued on next page}}} \\ \hline
\endfoot

\hline\hline
\endlastfoot
	
6 (55)	&	0.33 (0.07)	& 	0.25	& 	[0.12, 0.34]	& 	1	& 	1	&	0.27	& 	[0.16, 0.38]	& 	1	& 	2	\\
1 (55)	&	0.43 (0.13)	& 	0.23	& 	[0.08, 0.45]	& 	1	& 	2	&	0.33	& 	[0.16, 0.5]	& 	1	& 	1	\\
3 (53)	&	0.38 (0.28)	& 	0.19	& 	[0.03, 0.38]	& 	2	& 	3	&	0.17	& 	[0.03, 0.32]	& 	2	& 	8	\\
2 (63)	&	0.4 (0.22)	& 	0.19	& 	[0.04, 0.4]	& 	2	& 	4	&	0.13	& 	[0.02, 0.27]	& 	3	& 	17	\\
4 (24)	&	0.35 (0.29)	& 	0.19	& 	[0.03, 0.37]	& 	2	& 	5	&	0.15	& 	[0.02, 0.32]	& 	2	& 	12	\\
5 (59)	&	0.34 (0.24)	& 	0.18	& 	[0.03, 0.35]	& 	2	& 	6	&	0.17	& 	[0.03, 0.34]	& 	2	& 	9	\\
27 (24)	&	0.17 (0.31)	& 	0.17	& 	[0.02, 0.34]	& 	2	& 	7	&	0.15	& 	[0.02, 0.32]	& 	2	& 	14	\\
8 (56)	&	0.3 (0.23)	& 	0.17	& 	[0.03, 0.33]	& 	2	& 	8	&	0.16	& 	[0.03, 0.36]	& 	2	& 	11	\\
7 (73)	&	0.3 (0.19)	& 	0.16	& 	[0.03, 0.32]	& 	2	& 	9	&	0.10	& 	[0.01, 0.24]	& 	3	& 	29	\\
11 (55)	&	0.27 (0.08)	& 	0.16	& 	[0.06, 0.32]	& 	2	& 	10	&	0.24	& 	[0.12, 0.36]	& 	1	& 	5	\\
13 (51)	&	0.24 (0.24)	& 	0.16	& 	[0.03, 0.32]	& 	2	& 	11	&	0.12	& 	[0.02, 0.25]	& 	3	& 	19	\\
10 (51)	&	0.29 (0.11)	& 	0.16	& 	[0.05, 0.33]	& 	2	& 	12	&	0.10	& 	[0.02, 0.19]	& 	3	& 	31	\\
9 (55)	&	0.29 (0.11)	& 	0.16	& 	[0.05, 0.32]	& 	2	& 	13	&	0.27	& 	[0.11, 0.42]	& 	1	& 	3	\\
12 (23)	&	0.26 (0.09)	& 	0.15	& 	[0.06, 0.3]	& 	2	& 	14	&	0.13	& 	[0.03, 0.29]	& 	3	& 	18	\\
97 (63)	&	-0.54 (0.44)	& 	0.16	& 	[0.01, 0.34]	& 	2	& 	15	&	0.14	& 	[0.01, 0.32]	& 	3	& 	15	\\
25 (59)	&	0.17 (0.21)	& 	0.14	& 	[0.02, 0.28]	& 	2	& 	16	&	0.15	& 	[0.02, 0.3]	& 	2	& 	13	\\
14 (59)	&	0.24 (0.08)	& 	0.14	& 	[0.05, 0.27]	& 	2	& 	17	&	0.12	& 	[0.04, 0.21]	& 	3	& 	20	\\
15 (56)	&	0.24 (0.09)	& 	0.14	& 	[0.05, 0.26]	& 	2	& 	18	&	0.11	& 	[0.03, 0.24]	& 	3	& 	21	\\
22 (63)	&	0.18 (0.19)	& 	0.14	& 	[0.02, 0.27]	& 	2	& 	19	&	0.11	& 	[0.01, 0.23]	& 	3	& 	26	\\
16 (49)	&	0.22 (0.15)	& 	0.14	& 	[0.03, 0.26]	& 	2	& 	20	&	0.11	& 	[0.02, 0.23]	& 	3	& 	24	\\
18 (51)	&	0.2 (0.14)	& 	0.13	& 	[0.03, 0.24]	& 	2	& 	21	&	0.09	& 	[0.02, 0.19]	& 	3	& 	30	\\
17 (48)	&	0.22 (0.1)	& 	0.13	& 	[0.04, 0.23]	& 	2	& 	22	&	0.11	& 	[0.02, 0.24]	& 	3	& 	25	\\
19 (55)	&	0.19 (0.14)	& 	0.13	& 	[0.02, 0.24]	& 	2	& 	23	&	0.24	& 	[0.05, 0.43]	& 	1	& 	4	\\
33 (24)	&	0.14 (0.17)	& 	0.12	& 	[0.02, 0.24]	& 	2	& 	24	&	0.11	& 	[0.01, 0.23]	& 	3	& 	23	\\
20 (70)	&	0.19 (0.12)	& 	0.12	& 	[0.03, 0.22]	& 	2	& 	25	&	0.09	& 	[0.01, 0.19]	& 	3	& 	39	\\
21 (80)	&	0.19 (0.08)	& 	0.12	& 	[0.03, 0.2]	& 	2	& 	26	&	0.07	& 	[0.01, 0.16]	& 	3	& 	57	\\
23 (55)	&	0.18 (0.08)	& 	0.11	& 	[0.04, 0.2]	& 	2	& 	27	&	0.19	& 	[0.07, 0.3]	& 	2	& 	7	\\
24 (53)	&	0.17 (0.06)	& 	0.11	& 	[0.04, 0.19]	& 	2	& 	28	&	0.09	& 	[0.03, 0.16]	& 	3	& 	34	\\
93 (59)	&	-0.12 (0.24)	& 	0.12	& 	[0.01, 0.25]	& 	2	& 	29	&	0.13	& 	[0.01, 0.28]	& 	3	& 	16	\\
26 (54)	&	0.17 (0.08)	& 	0.11	& 	[0.03, 0.19]	& 	2	& 	30	&	0.08	& 	[0.02, 0.16]	& 	3	& 	45	\\
28 (49)	&	0.17 (0.08)	& 	0.11	& 	[0.03, 0.19]	& 	2	& 	31	&	0.09	& 	[0.01, 0.17]	& 	3	& 	42	\\
66 (55)	&	0.05 (0.17)	& 	0.11	& 	[0.01, 0.23]	& 	2	& 	32	&	0.21	& 	[0.01, 0.42]	& 	2	& 	6	\\
45 (48)	&	0.11 (0.14)	& 	0.11	& 	[0.02, 0.21]	& 	2	& 	33	&	0.11	& 	[0.01, 0.24]	& 	3	& 	27	\\
39 (57)	&	0.13 (0.12)	& 	0.11	& 	[0.02, 0.21]	& 	2	& 	34	&	0.11	& 	[0.02, 0.22]	& 	3	& 	22	\\
32 (51)	&	0.14 (0.11)	& 	0.11	& 	[0.02, 0.2]	& 	2	& 	35	&	0.08	& 	[0.01, 0.16]	& 	3	& 	49	\\
34 (57)	&	0.14 (0.11)	& 	0.11	& 	[0.02, 0.2]	& 	2	& 	36	&	0.10	& 	[0.02, 0.21]	& 	3	& 	28	\\
29 (59)	&	0.15 (0.06)	& 	0.10	& 	[0.04, 0.17]	& 	2	& 	37	&	0.09	& 	[0.03, 0.17]	& 	3	& 	32	\\
30 (55)	&	0.15 (0.06)	& 	0.10	& 	[0.04, 0.17]	& 	2	& 	38	&	0.16	& 	[0.06, 0.26]	& 	2	& 	10	\\
31 (58)	&	0.14 (0.06)	& 	0.10	& 	[0.04, 0.16]	& 	2	& 	39	&	0.08	& 	[0.03, 0.15]	& 	3	& 	43	\\
38 (75)	&	0.13 (0.09)	& 	0.10	& 	[0.02, 0.18]	& 	2	& 	40	&	0.07	& 	[0.01, 0.15]	& 	3	& 	58	\\
35 (53)	&	0.14 (0.05)	& 	0.10	& 	[0.04, 0.16]	& 	2	& 	41	&	0.08	& 	[0.03, 0.14]	& 	3	& 	44	\\
37 (57)	&	0.14 (0.06)	& 	0.10	& 	[0.03, 0.16]	& 	2	& 	42	&	0.09	& 	[0.03, 0.16]	& 	3	& 	38	\\
36 (59)	&	0.14 (0.06)	& 	0.09	& 	[0.03, 0.16]	& 	2	& 	43	&	0.09	& 	[0.03, 0.16]	& 	3	& 	36	\\
95 (60)	&	-0.19 (0.22)	& 	0.10	& 	[0.01, 0.22]	& 	2	& 	44	&	0.09	& 	[0.01, 0.22]	& 	3	& 	37	\\
40 (50)	&	0.12 (0.08)	& 	0.09	& 	[0.02, 0.17]	& 	2	& 	45	&	0.08	& 	[0.01, 0.16]	& 	3	& 	51	\\
47 (62)	&	0.1 (0.1)	& 	0.09	& 	[0.02, 0.18]	& 	2	& 	46	&	0.08	& 	[0.01, 0.18]	& 	3	& 	47	\\
53 (53)	&	0.09 (0.1)	& 	0.09	& 	[0.01, 0.18]	& 	2	& 	47	&	0.09	& 	[0.02, 0.17]	& 	3	& 	33	\\
44 (53)	&	0.11 (0.07)	& 	0.09	& 	[0.02, 0.16]	& 	2	& 	48	&	0.08	& 	[0.02, 0.15]	& 	3	& 	40	\\
42 (56)	&	0.12 (0.06)	& 	0.09	& 	[0.02, 0.15]	& 	2	& 	49	&	0.08	& 	[0.02, 0.16]	& 	3	& 	41	\\
43 (53)	&	0.12 (0.07)	& 	0.09	& 	[0.02, 0.15]	& 	2	& 	50	&	0.08	& 	[0.02, 0.15]	& 	3	& 	46	\\
41 (57)	&	0.12 (0.05)	& 	0.09	& 	[0.03, 0.14]	& 	2	& 	51	&	0.08	& 	[0.03, 0.14]	& 	3	& 	52	\\
49 (23)	&	0.09 (0.09)	& 	0.09	& 	[0.02, 0.16]	& 	2	& 	52	&	0.09	& 	[0.02, 0.2]	& 	3	& 	35	\\
69 (73)	&	0.03 (0.11)	& 	0.08	& 	[0.01, 0.17]	& 	2	& 	53	&	0.07	& 	[0, 0.16]	& 	3	& 	64	\\
50 (53)	&	0.09 (0.07)	& 	0.08	& 	[0.02, 0.15]	& 	2	& 	54	&	0.08	& 	[0.02, 0.14]	& 	3	& 	50	\\
46 (53)	&	0.1 (0.05)	& 	0.08	& 	[0.02, 0.14]	& 	2	& 	55	&	0.07	& 	[0.02, 0.13]	& 	3	& 	54	\\
51 (51)	&	0.09 (0.07)	& 	0.08	& 	[0.02, 0.14]	& 	2	& 	56	&	0.06	& 	[0.01, 0.13]	& 	3	& 	67	\\
48 (51)	&	0.1 (0.05)	& 	0.08	& 	[0.02, 0.13]	& 	2	& 	57	&	0.06	& 	[0.01, 0.11]	& 	3	& 	72	\\
96 (80)	&	-0.27 (0.2)	& 	0.08	& 	[0.01, 0.19]	& 	2	& 	58	&	0.08	& 	[0, 0.19]	& 	3	& 	55	\\
68 (56)	&	0.04 (0.09)	& 	0.07	& 	[0.01, 0.15]	& 	2	& 	59	&	0.08	& 	[0.01, 0.17]	& 	3	& 	48	\\
63 (63)	&	0.05 (0.08)	& 	0.07	& 	[0.01, 0.15]	& 	2	& 	60	&	0.06	& 	[0.01, 0.14]	& 	3	& 	65	\\
60 (50)	&	0.06 (0.08)	& 	0.07	& 	[0.01, 0.14]	& 	2	& 	61	&	0.07	& 	[0.01, 0.14]	& 	3	& 	59	\\
74 (54)	&	0 (0.11)	& 	0.07	& 	[0.01, 0.16]	& 	2	& 	62	&	0.07	& 	[0.01, 0.15]	& 	3	& 	56	\\
58 (24)	&	0.06 (0.07)	& 	0.07	& 	[0.01, 0.14]	& 	2	& 	63	&	0.07	& 	[0.01, 0.14]	& 	3	& 	61	\\
52 (58)	&	0.09 (0.04)	& 	0.07	& 	[0.02, 0.12]	& 	2	& 	64	&	0.06	& 	[0.02, 0.11]	& 	3	& 	63	\\
65 (60)	&	0.05 (0.08)	& 	0.07	& 	[0.01, 0.15]	& 	2	& 	65	&	0.06	& 	[0.01, 0.14]	& 	3	& 	70	\\
54 (54)	&	0.08 (0.05)	& 	0.07	& 	[0.02, 0.12]	& 	2	& 	66	&	0.06	& 	[0.01, 0.11]	& 	3	& 	71	\\
72 (42)	&	0.02 (0.09)	& 	0.07	& 	[0.01, 0.15]	& 	2	& 	67	&	0.05	& 	[0, 0.12]	& 	3	& 	84	\\
61 (51)	&	0.06 (0.07)	& 	0.07	& 	[0.01, 0.13]	& 	2	& 	68	&	0.06	& 	[0.01, 0.12]	& 	3	& 	74	\\
55 (52)	&	0.08 (0.03)	& 	0.07	& 	[0.02, 0.11]	& 	2	& 	69	&	0.05	& 	[0.01, 0.1]	& 	3	& 	79	\\
78 (52)	&	-0.01 (0.1)	& 	0.07	& 	[0.01, 0.15]	& 	2	& 	70	&	0.07	& 	[0.01, 0.14]	& 	3	& 	62	\\
71 (53)	&	0.02 (0.09)	& 	0.07	& 	[0.01, 0.14]	& 	2	& 	71	&	0.07	& 	[0.01, 0.15]	& 	3	& 	53	\\
56 (53)	&	0.07 (0.04)	& 	0.06	& 	[0.02, 0.11]	& 	2	& 	72	&	0.06	& 	[0.02, 0.11]	& 	3	& 	66	\\
64 (58)	&	0.05 (0.07)	& 	0.06	& 	[0.01, 0.13]	& 	2	& 	73	&	0.07	& 	[0.01, 0.13]	& 	3	& 	60	\\
59 (75)	&	0.06 (0.06)	& 	0.06	& 	[0.01, 0.12]	& 	2	& 	74	&	0.05	& 	[0.01, 0.11]	& 	3	& 	80	\\
57 (58)	&	0.06 (0.05)	& 	0.06	& 	[0.01, 0.11]	& 	2	& 	75	&	0.06	& 	[0.01, 0.11]	& 	3	& 	69	\\
81 (80)	&	-0.01 (0.09)	& 	0.06	& 	[0.01, 0.13]	& 	2	& 	76	&	0.06	& 	[0, 0.13]	& 	3	& 	77	\\
94 (80)	&	-0.18 (0.14)	& 	0.06	& 	[0, 0.15]	& 	2	& 	77	&	0.06	& 	[0, 0.15]	& 	3	& 	73	\\
79 (42)	&	-0.01 (0.08)	& 	0.06	& 	[0.01, 0.13]	& 	2	& 	78	&	0.04	& 	[0, 0.11]	& 	3	& 	86	\\
84 (63)	&	-0.02 (0.09)	& 	0.06	& 	[0.01, 0.13]	& 	2	& 	79	&	0.06	& 	[0, 0.13]	& 	3	& 	75	\\
85 (51)	&	-0.02 (0.09)	& 	0.06	& 	[0.01, 0.13]	& 	2	& 	80	&	0.06	& 	[0.01, 0.12]	& 	3	& 	76	\\
62 (54)	&	0.06 (0.04)	& 	0.05	& 	[0.01, 0.1]	& 	2	& 	81	&	0.05	& 	[0.01, 0.09]	& 	3	& 	81	\\
73 (59)	&	0.01 (0.07)	& 	0.05	& 	[0.01, 0.11]	& 	2	& 	82	&	0.06	& 	[0.01, 0.13]	& 	3	& 	68	\\
67 (87)	&	0.04 (0.04)	& 	0.05	& 	[0.01, 0.1]	& 	2	& 	83	&	0.05	& 	[0.01, 0.1]	& 	3	& 	82	\\
83 (51)	&	-0.01 (0.06)	& 	0.05	& 	[0, 0.1]	& 	3	& 	84	&	0.05	& 	[0, 0.1]	& 	3	& 	83	\\
80 (53)	&	-0.01 (0.06)	& 	0.05	& 	[0, 0.1]	& 	3	& 	85	&	0.05	& 	[0, 0.11]	& 	3	& 	78	\\
88 (42)	&	-0.03 (0.07)	& 	0.04	& 	[0, 0.1]	& 	3	& 	86	&	0.03	& 	[0, 0.09]	& 	4	& 	92	\\
77 (52)	&	0 (0.05)	& 	0.04	& 	[0, 0.09]	& 	3	& 	87	&	0.04	& 	[0, 0.09]	& 	3	& 	85	\\
87 (42)	&	-0.03 (0.06)	& 	0.04	& 	[0, 0.1]	& 	3	& 	88	&	0.03	& 	[0, 0.09]	& 	4	& 	93	\\
70 (53)	&	0.02 (0.03)	& 	0.03	& 	[0, 0.07]	& 	3	& 	89	&	0.04	& 	[0.01, 0.08]	& 	4	& 	87	\\
82 (20)	&	-0.01 (0.04)	& 	0.03	& 	[0, 0.08]	& 	3	& 	90	&	0.03	& 	[0, 0.08]	& 	4	& 	90	\\
92 (70)	&	-0.11 (0.07)	& 	0.03	& 	[0, 0.09]	& 	3	& 	91	&	0.04	& 	[0, 0.1]	& 	4	& 	88	\\
75 (51)	&	0 (0.04)	& 	0.03	& 	[0, 0.07]	& 	3	& 	92	&	0.03	& 	[0, 0.07]	& 	4	& 	89	\\
86 (49)	&	-0.03 (0.04)	& 	0.03	& 	[0, 0.07]	& 	3	& 	93	&	0.03	& 	[0, 0.08]	& 	4	& 	91	\\
76 (54)	&	0 (0.03)	& 	0.02	& 	[0, 0.05]	& 	3	& 	94	&	0.03	& 	[0, 0.06]	& 	4	& 	95	\\
90 (75)	&	-0.05 (0.04)	& 	0.02	& 	[0, 0.06]	& 	3	& 	95	&	0.03	& 	[0, 0.07]	& 	4	& 	94	\\
91 (51)	&	-0.07 (0.04)	& 	0.02	& 	[0, 0.05]	& 	3	& 	96	&	0.02	& 	[0, 0.06]	& 	4	& 	96	\\
89 (73)	&	-0.03 (0.03)	& 	0.02	& 	[0, 0.04]	& 	3	& 	97	&	0.02	& 	[0, 0.05]	& 	4	& 	97	\\
									
\end{longtable}	
\parbox{\textwidth}{\small
\vspace{1eX}
      \textit{Notes:} This table reports estimated contact penalties and the results of Empirical Bayes and grading exercises for each firm in the sample. Firms are labeled by their raw contact penalty rank, with their industry (2-digit SIC code) shown in parentheses. The column $\theta_i$ reports contact penalties with a job-clustered standard error in parentheses. The remaining columns report posterior means (Post. mean), 95\% credible intervals (Post. CI), assigned grades using $\lambda=.25$ (Grd), and Condorcet ranks (Cond. rank), which are grades under $\lambda=1$, in the baseline model and the model with industry effects.}
\end{center}

\FloatBarrier

\begin{figure}[ht!]
    \centering
    \caption{Contact rates, standard errors, and name grades}
    \includegraphics[width=\textwidth]{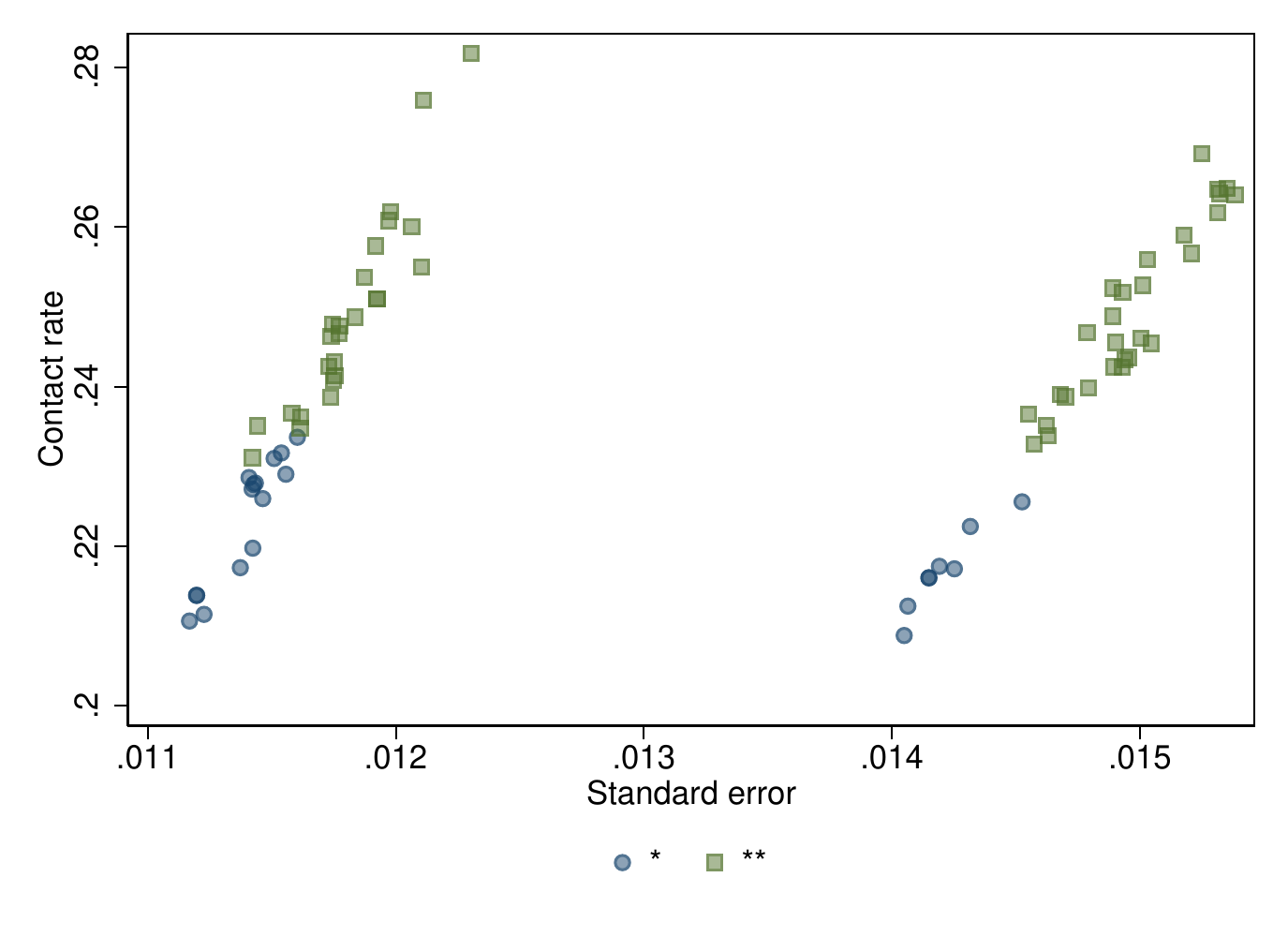}
    \label{fig:grade_frontiers_names}
\parbox{\textwidth}{\small
\vspace{1eX}\emph{Notes}: This figure plots the estimated contact rates for each name against its standard error. The shape and color of each point indicate the grade assigned to the name using the same specification as Figure \ref{fig: names_ranks}.}
\end{figure}

\begin{figure}[ht!]
    \centering
    \caption{Predictive power of grades name for race and sex labels}
    \begin{tabular}{c}
    a) Pseudo $R^2$ \\
    \includegraphics[width=0.8\textwidth]{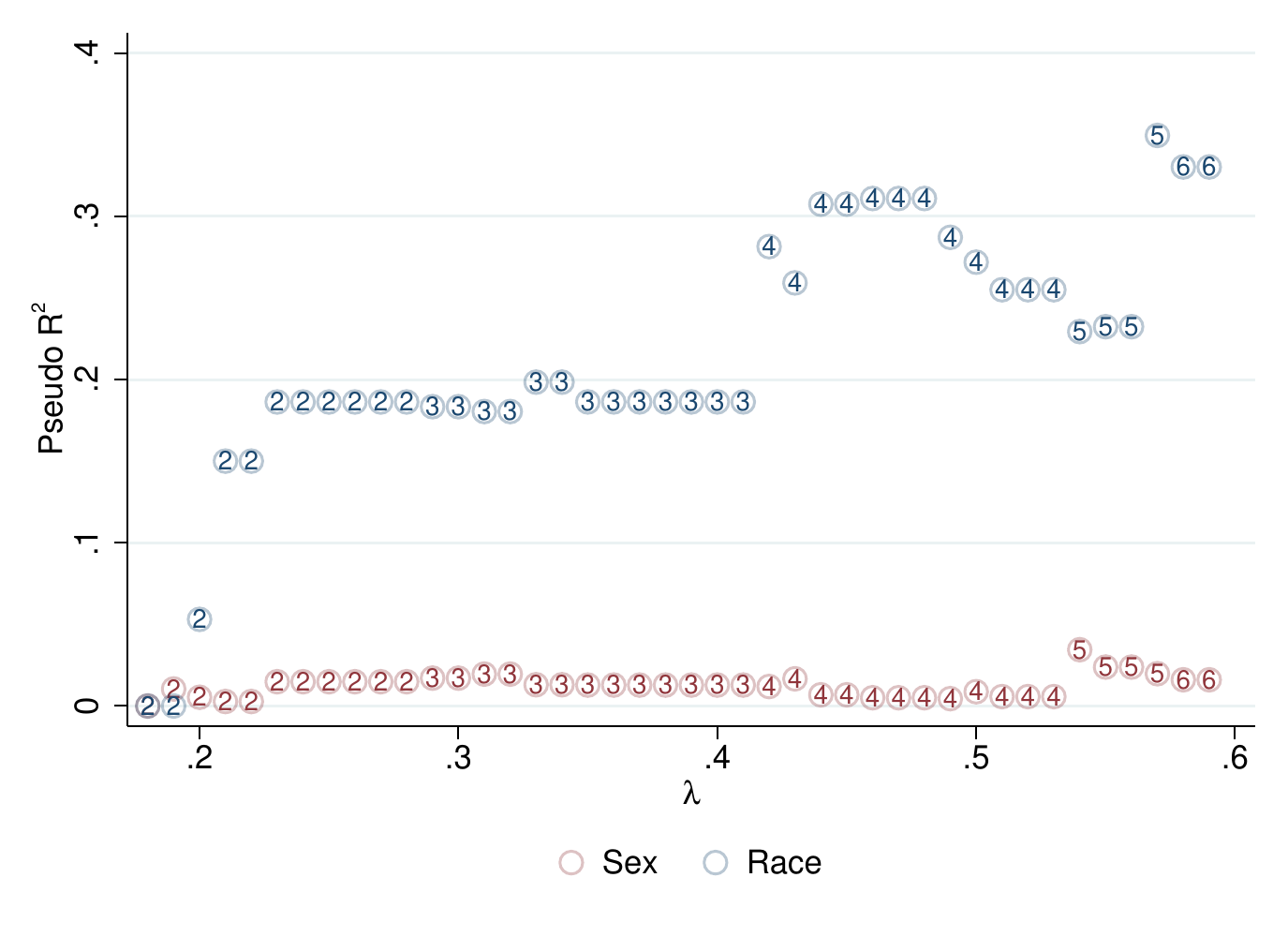} \\
    b) Area under the curve \\
    \includegraphics[width=0.8\textwidth]{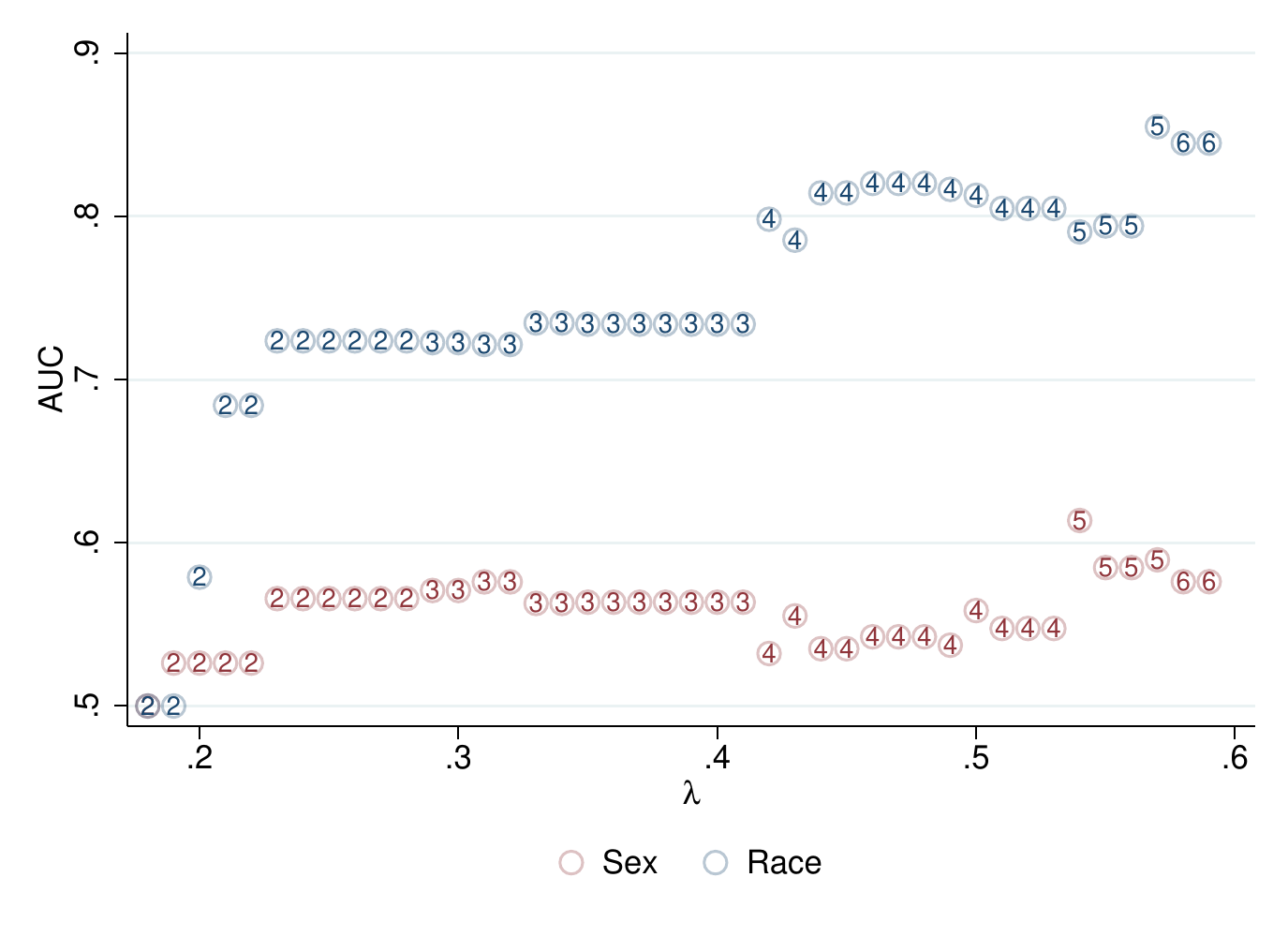}
    \end{tabular}
    \label{fig:pseudoR2}
\parbox{\textwidth}{\small
\vspace{1eX}\emph{Notes}: This figure plots the psuedo-$R^2$ (panel a) and AUC (panel b) for a series of logistic regressions using an indicator for the race or sex of the name as the outcome and dummies for assigned grades as the explanatory variables for an intermediate range of $\lambda$. The number shown indicates the number of grades assigned.}
\end{figure}

\begin{figure}
\centering
\caption{Square weighted name ranking exercises}
    \label{fig: names_res_sq}
\begin{adjustbox}{center}
\begin{tabular}{c c}
\multicolumn{2}{c}{a) Grades and discordance} \\
 \includegraphics[width=0.8\textwidth]{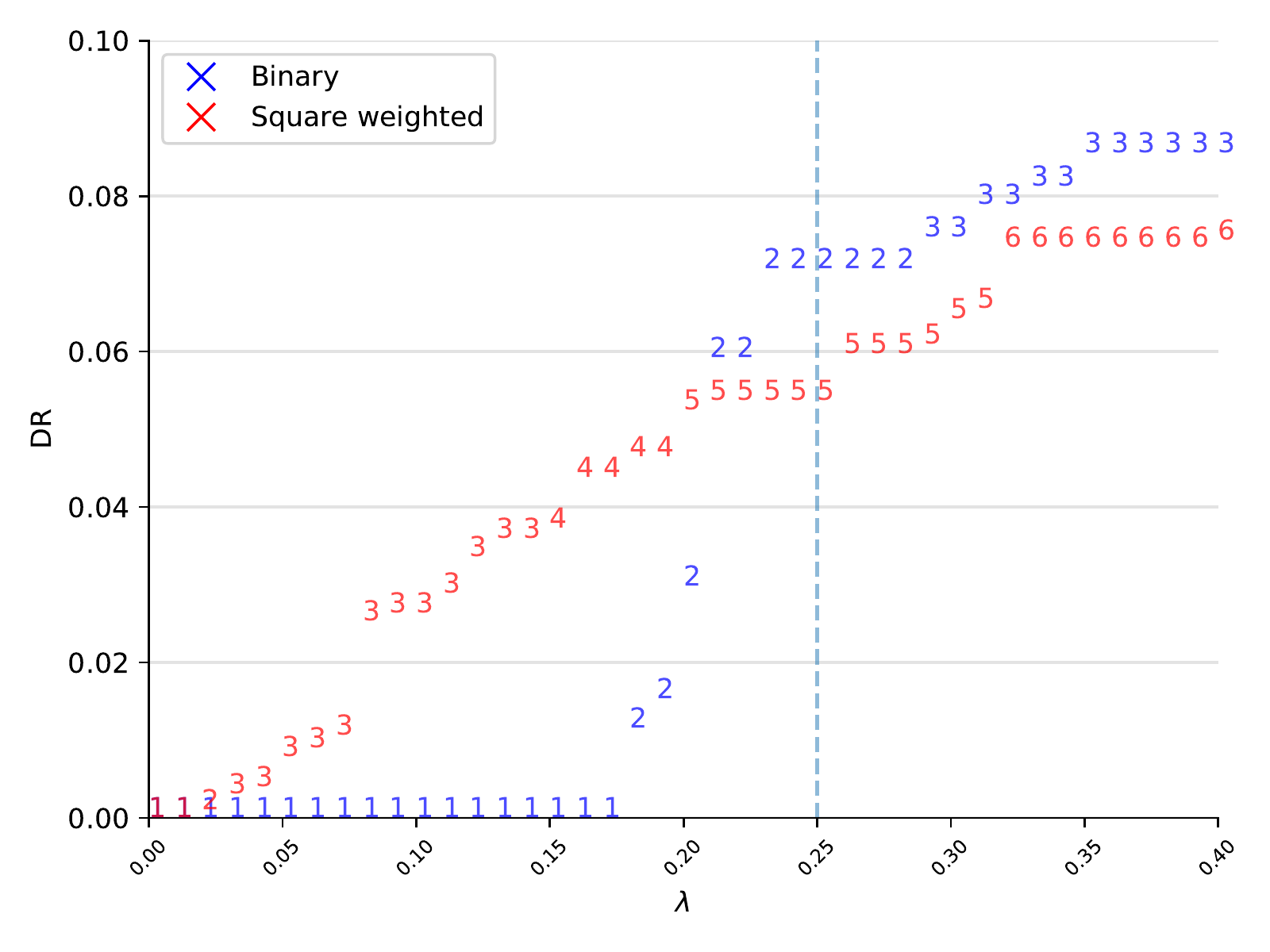} \\
\multicolumn{2}{c}{b) Reporting possibilities } \\
\multicolumn{2}{c}{\includegraphics[width=0.8\textwidth]{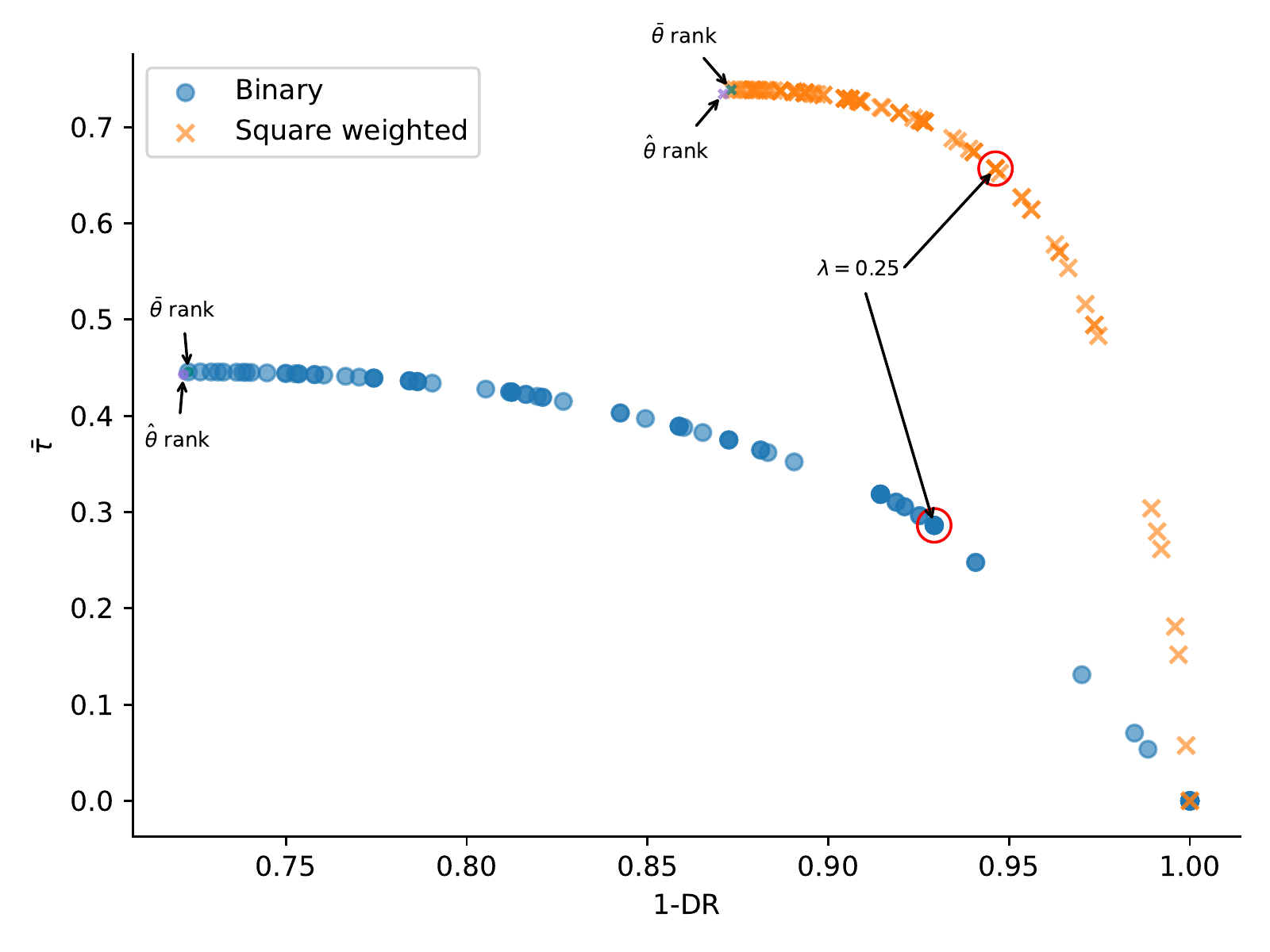}}
\end{tabular}
\end{adjustbox}
\parbox{\textwidth}{\small
\vspace{1eX}\emph{Notes}: This figure presents additional  results from grading contact rates for names under binary and square-weighted loss. In both panels, posteriors are computed using the log-spline estimate plotted in Figure \ref{fig: name_g} as the prior. Panel (a) shows estimated Discordance Rates (DR) for an intermediate range of $\lambda$ under binary and square-weighted loss. Panel (b) plots the expectation of Kendall's $\tau$ rank correlation between true contact rates and grades against Discordance Rates (DR) for a range of grades indexed by $\lambda$. Red circles highlight the DR and expected $\tau$ corresponding to $\lambda = 0.25$. ``$\hat \theta$ rank'' refers to ranks based upon point estimates. ``$\bar \theta$ rank'' refers to ranks based upon Empirical Bayes posterior means.}
\end{figure}

\begin{figure}
\centering
\caption{Posterior means and grades for names under square-weighted loss}
    \label{fig: names_ranks_sq_weighted}
\begin{adjustbox}{center}
\includegraphics[width=\textwidth]{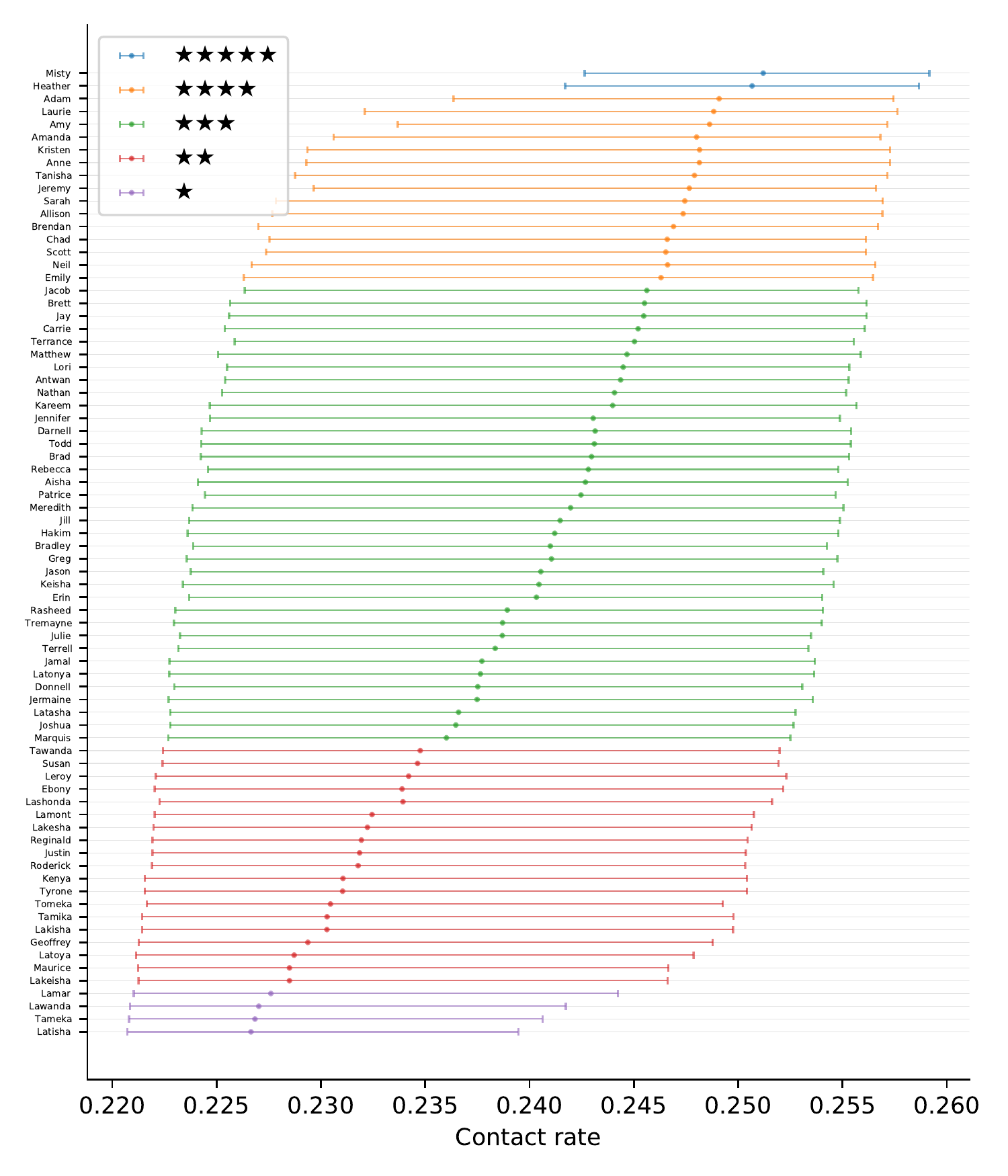}
\end{adjustbox}
\parbox{\textwidth}{\small
\vspace{1eX}\emph{Notes}: This figure shows posterior mean contact rates, 95\% credible intervals, and assigned grades for names. Results are shown for $\lambda = 0.25$, implying an 80\% threshold for posterior ranking probabilities. Names are ordered by their rank under $\lambda = 1$, when each name is assigned its own grade.}
\end{figure}

\begin{figure}[ht!]
    \centering
    \caption{Predictive power of square-weighted name grades for race and sex labels}
    \begin{tabular}{c}
    a) Pseudo $R^2$ \\
    \includegraphics[width=0.8\textwidth]{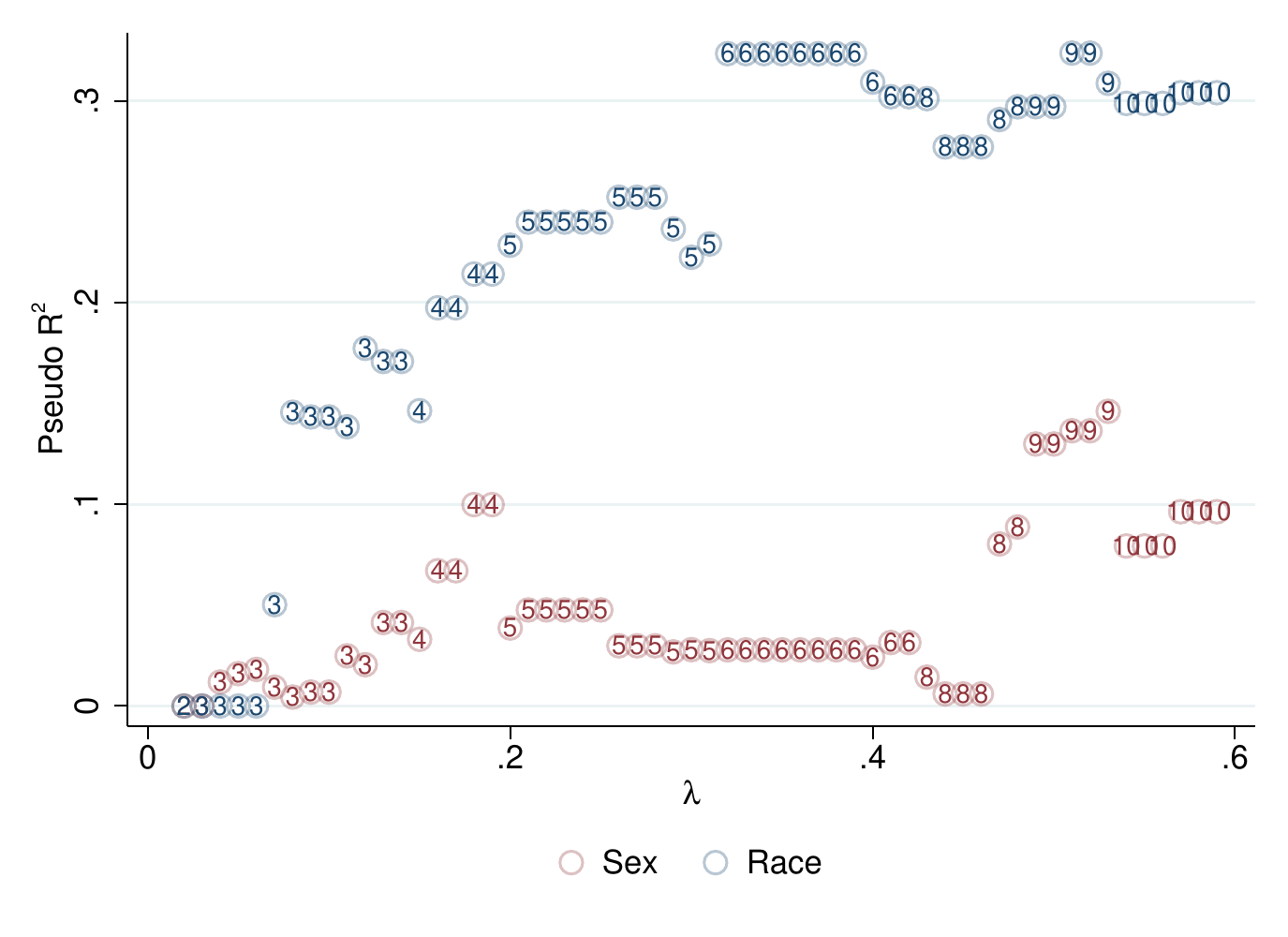} \\
    b) Area under the curve \\
    \includegraphics[width=0.8\textwidth]{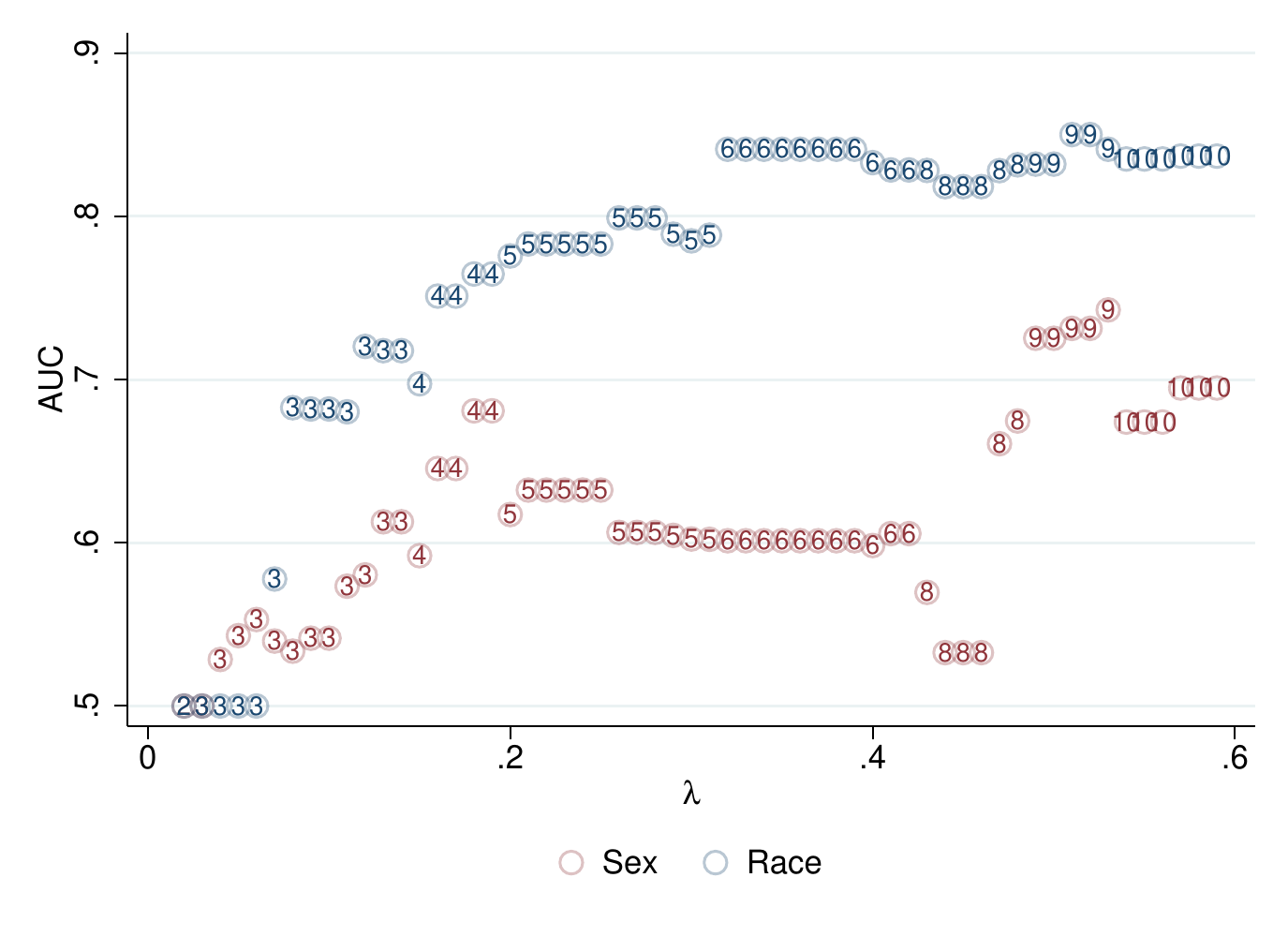}
    \end{tabular}
    \label{fig:pseudoR2_sq}
\parbox{\textwidth}{\small
\vspace{1eX}\emph{Notes}: This figure plots the psuedo-$R^2$ (panel a) and AUC (panel b) for a series of logistic regressions using an indicator for the race or sex of the name as the outcome and dummies for assigned grades under square-weighted loss as the explanatory variables for an intermediate range of $\lambda$. The number shown indicates the number of grades assigned.}
\end{figure}

\begin{figure}[ht!]
    \centering
    \caption{Unadjusted and studentized contact gaps against standard errors}
    \begin{tabular}{c}
    a) $\hat{\theta}_i$ vs. $s_i$ \\
    \includegraphics[width=.8\textwidth]{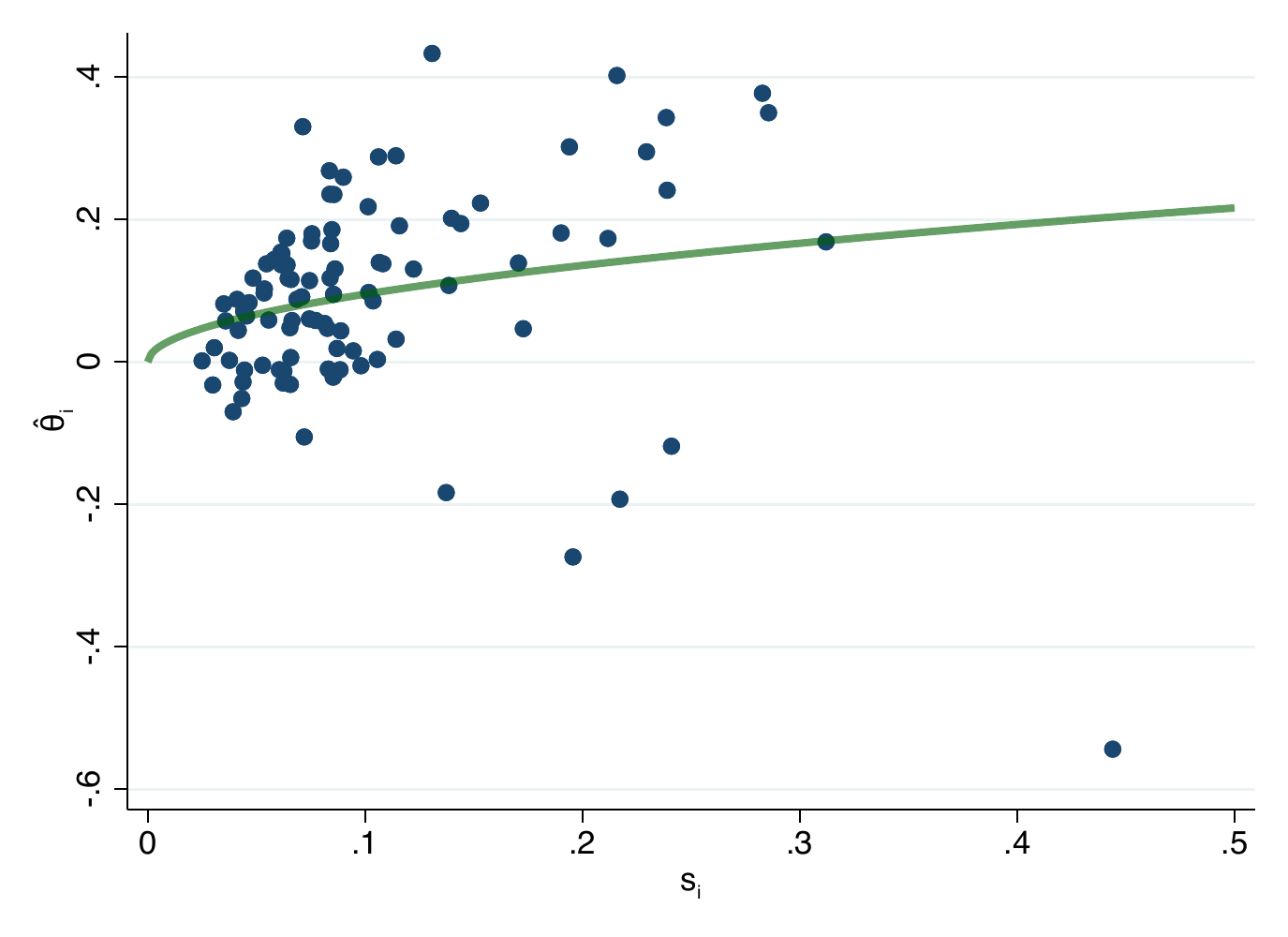} \\
    b) $\hat{T}_i$ vs. $s_i$ \\
    \includegraphics[width=.8\textwidth]{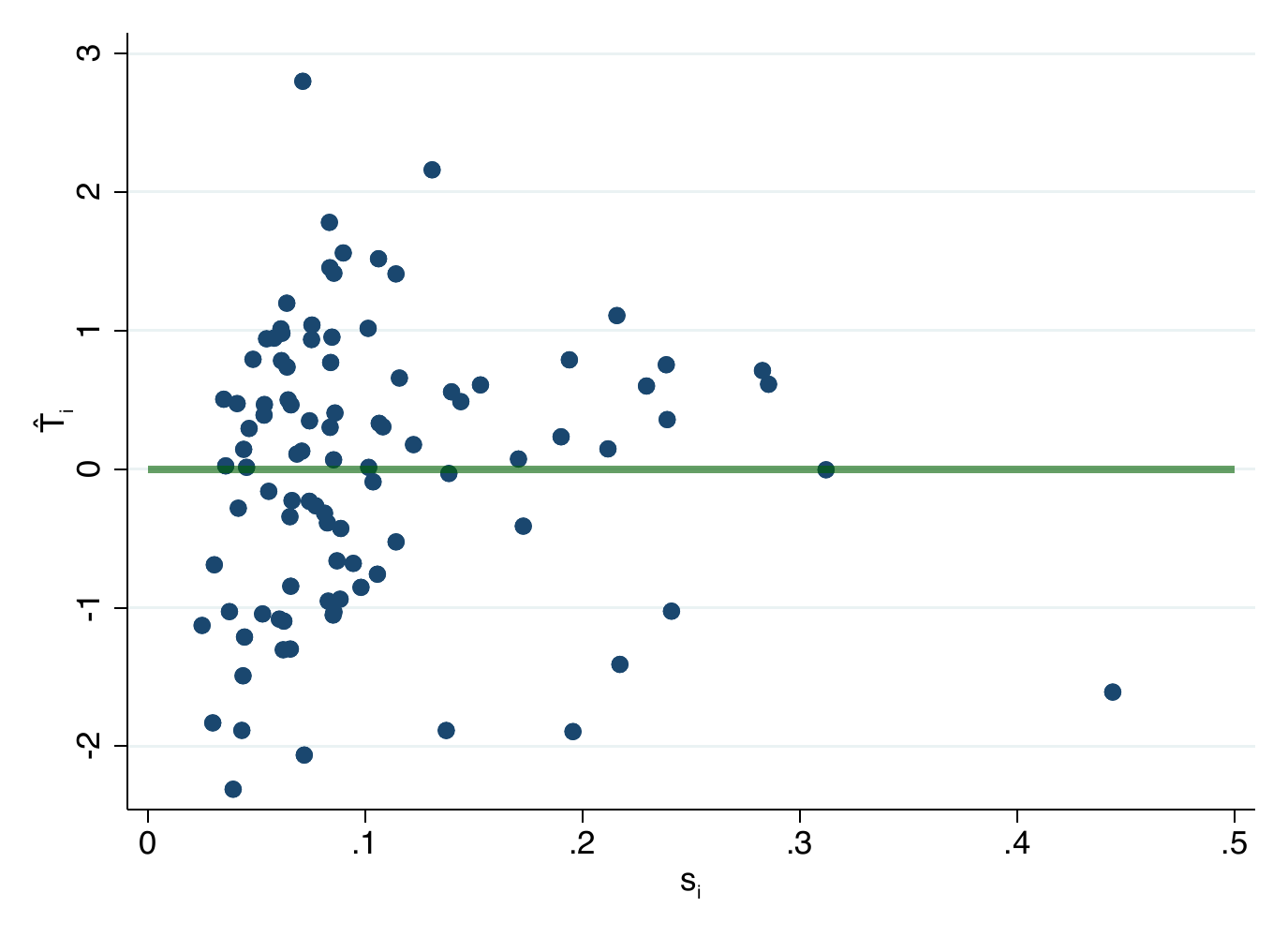} \\
    \end{tabular}
    \label{fig: theta_vs_s}
\parbox{\textwidth}{\small
\vspace{1eX}\emph{Notes}: Panel (a) of this figure plots estimated contact penalties against their standard errors. The green line plots the conditional mean of $\theta_i$ given $s_i$ implied by the GMM estimates. Panel (b) plots studentized contact gaps $\hat{T}_i$ against standard errors. The green line plots the relationship implied by the model.}
\end{figure}

\begin{figure}[ht!]
    \centering
    \caption{Contact penalties, standard errors, and report card grades}
    \includegraphics[width=\textwidth]{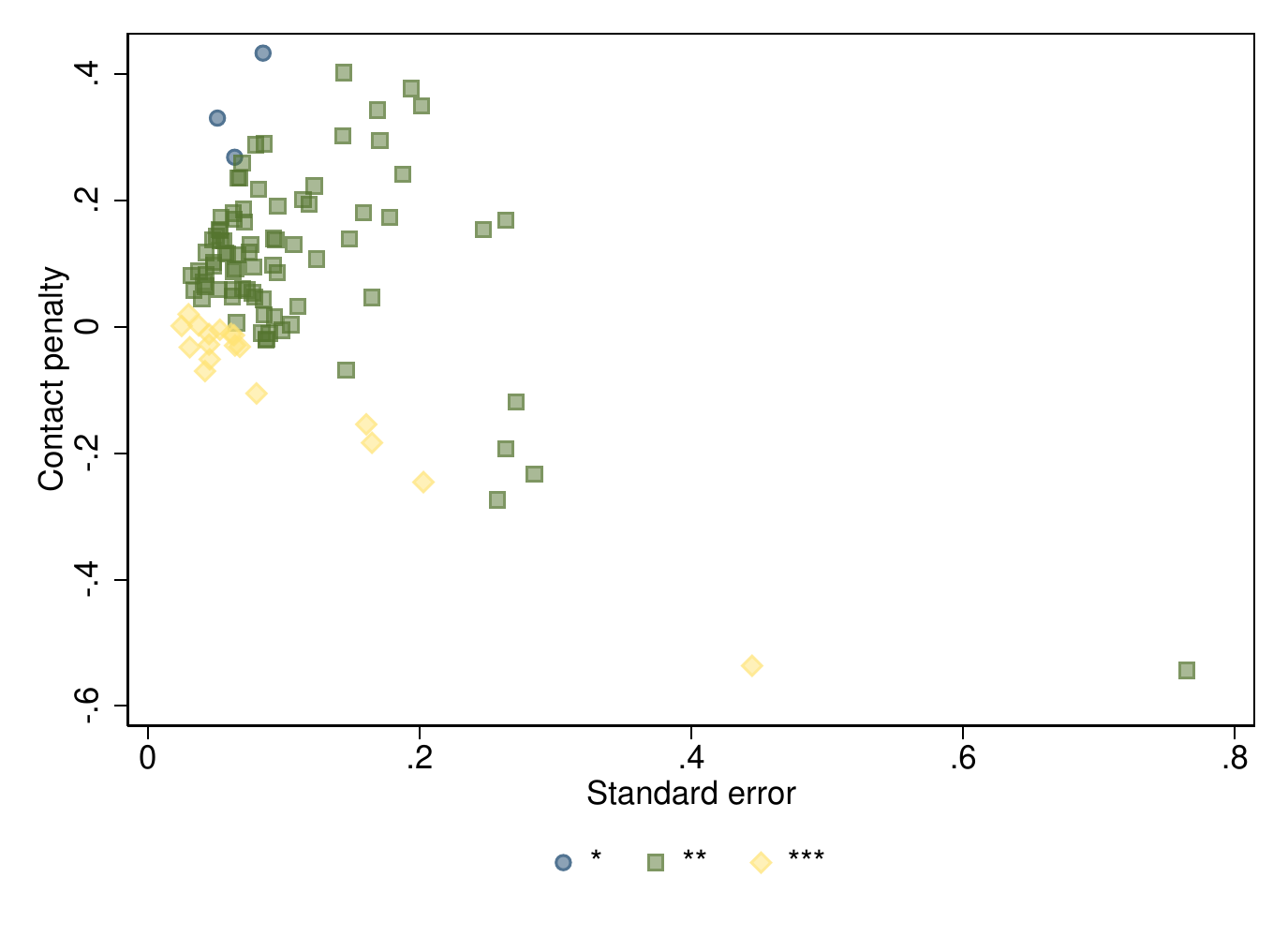}
    \label{fig:grade_frontiers}
\parbox{\textwidth}{\small
\vspace{1eX}\emph{Notes}: This figure plots the estimated contact penalty for a Black name at each firm against the standard error of the contact penalty estimate. The shape and color of each point indicate the grade assigned to the firm using the same specification as Figure \ref{fig:poisson_binary}.}
\end{figure}

\begin{figure}[ht!]
    \centering
    \caption{Grades and discordance as a function of $\lambda$}
    \includegraphics[width=\textwidth]{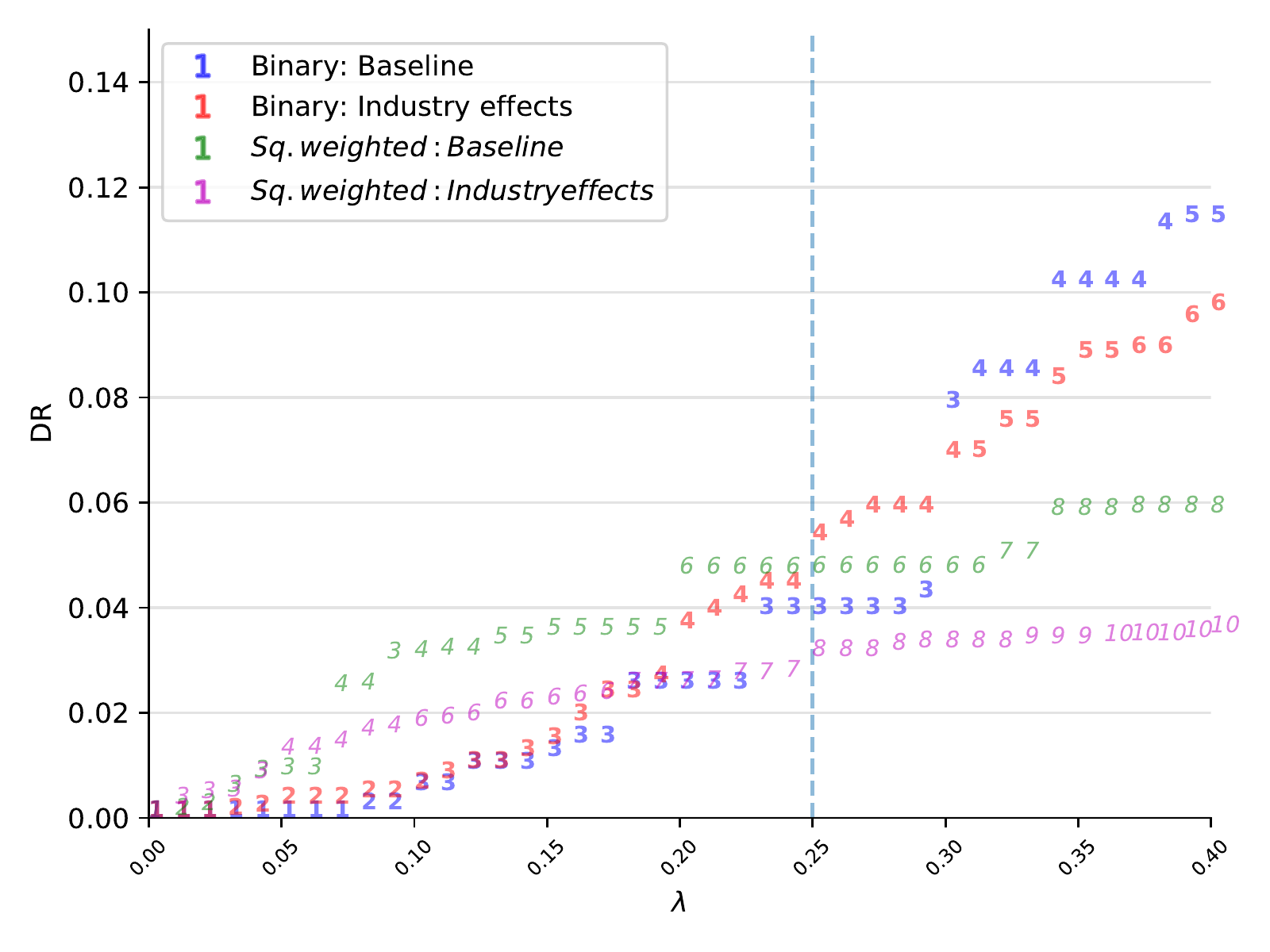}
    \label{fig:poisson_binary_lambda_sq}
\parbox{\textwidth}{\small
\vspace{1eX}\emph{Notes}: This figure shows estimated Discordance Rates (DR) as a function of $\lambda$ for both binary and square-weighted loss. The number on each point indicates the number of unique grades in the underlying grading scheme. The vertical dashed line shows results for the benchmark case of $\lambda=0.25$.}
\end{figure}

\begin{figure}[ht!]
    \centering
    \caption{Reporting possibilities}
\includegraphics[width=\textwidth]{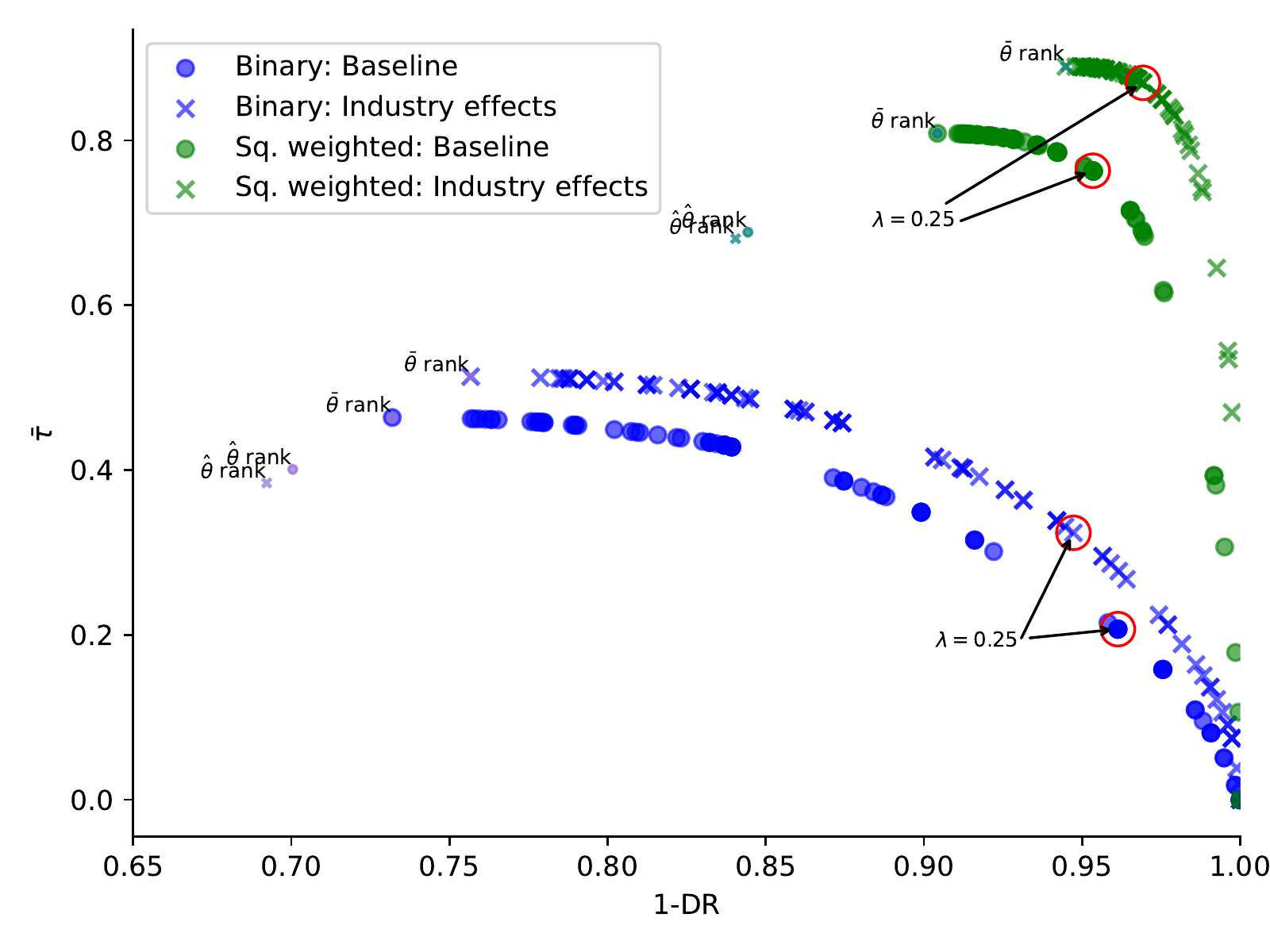}
    \label{fig:PPF_sq}
\parbox{\textwidth}{\small
\vspace{1eX}\emph{Notes}: This figure shows the expectation of Kendall's $\tau$ rank correlation between $\theta$ and assigned grades (labeled $\bar{\tau}$) against Discordance Rates (DR) for a range of grades indexed by $\lambda$ and for both binary and square-weighted loss.  Red circles highlight the DR and $\bar{\tau}$ corresponding to $\lambda = 0.25$. ``$\hat \theta$ rank'' refers to ranks based upon point estimates. ``$\bar \theta$ rank'' refers to ranks based upon Empirical Bayes posterior means.}
\end{figure}

\begin{figure}[ht!]
    \centering
    \caption{Square-weighted loss: Posterior means and grades}
    \includegraphics[width=\textwidth]{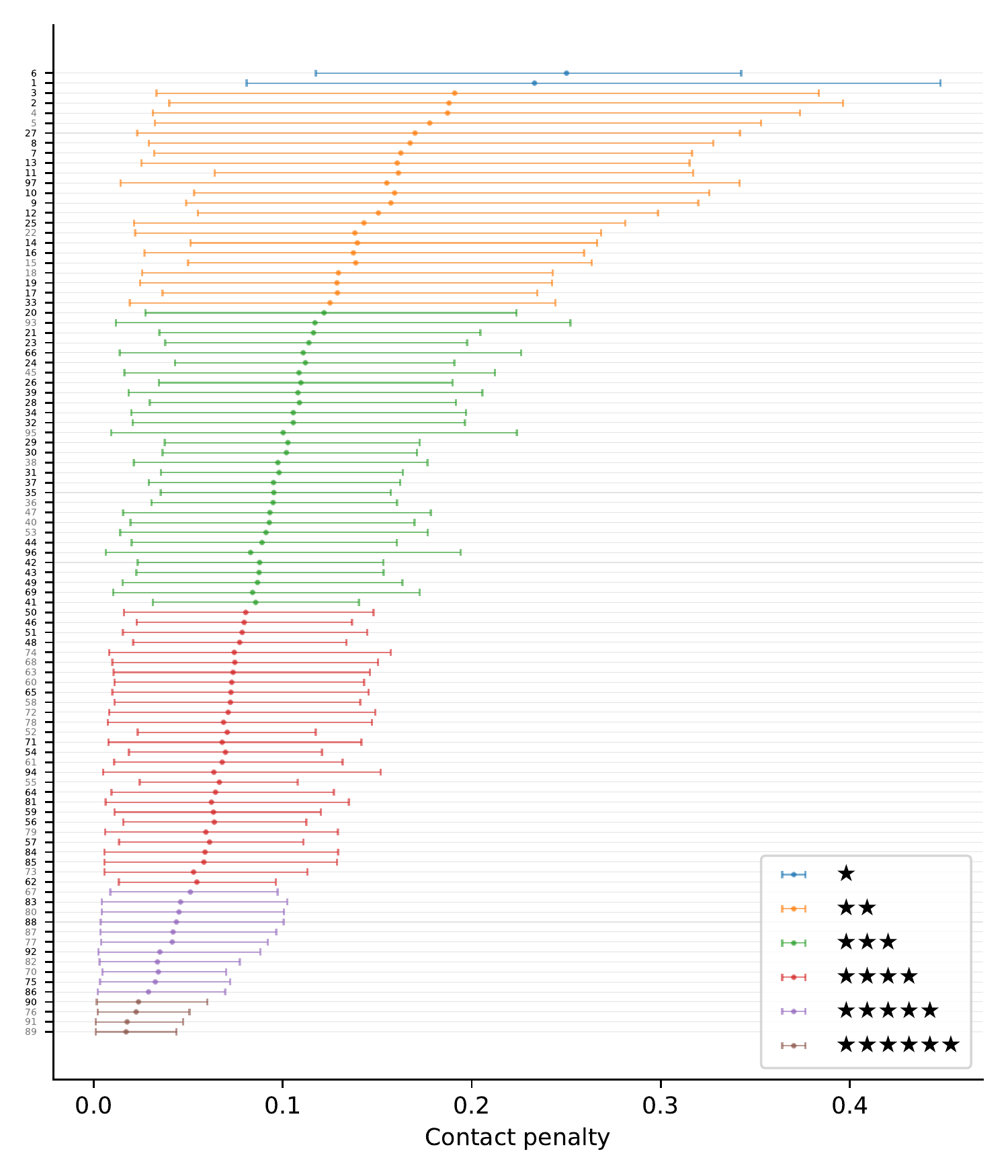}
    \label{fig:poisson_squared}
\parbox{\textwidth}{\small
\vspace{1eX}\emph{Notes}: This figure shows posterior mean proportional contact penalties, 95\% credible intervals, and assigned grades under square-weighted loss. Results are shown for $\lambda = 0.25$. Firms are ordered by their rank under $\lambda = 1$, when each firm is assigned its own grade, and labeled by the raw proportional contact gap rank, with \#1 showing the largest gap in favor of white applicants. Firms labeled with black text are federal contractors, whereas firms in gray are not.}
\end{figure}

\begin{figure}[ht!]
    \centering
    \caption{Square-weighted loss with industry effects: Posterior means and grades}
   \includegraphics[width=\textwidth]{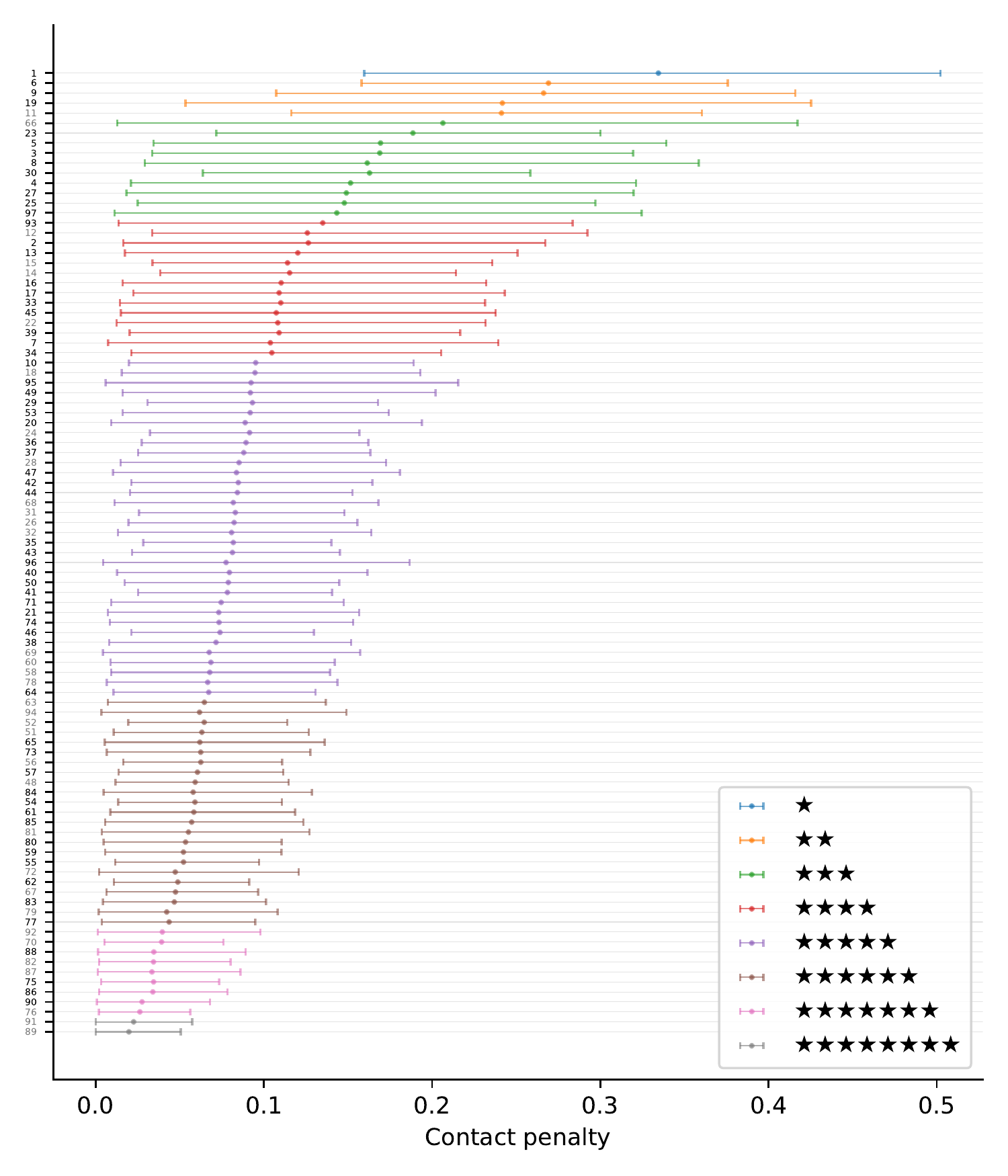}
    \label{fig:poisson_squared_cov}
\parbox{\textwidth}{\small
\vspace{1eX}\emph{Notes}: This figure shows posterior mean proportional contact penalties, 95\% credible intervals, and assigned grades from the model with industry effects under square-weighted loss. Results are shown for $\lambda = 0.25$. Firms are ordered by their rank under $\lambda = 1$, when each firm is assigned its own grade, and labeled by the raw proportional contact gap rank, with \#1 showing the largest gap in favor of white applicants. Firms labeled with black text are federal contractors, whereas firms in gray are not.}
\end{figure}

\begin{figure}[!htbp]
    \centering
    \caption{DR for square-weighted loss}
    \begin{tabular}{c}
 a) Binary \\
\includegraphics[width=0.8\textwidth]{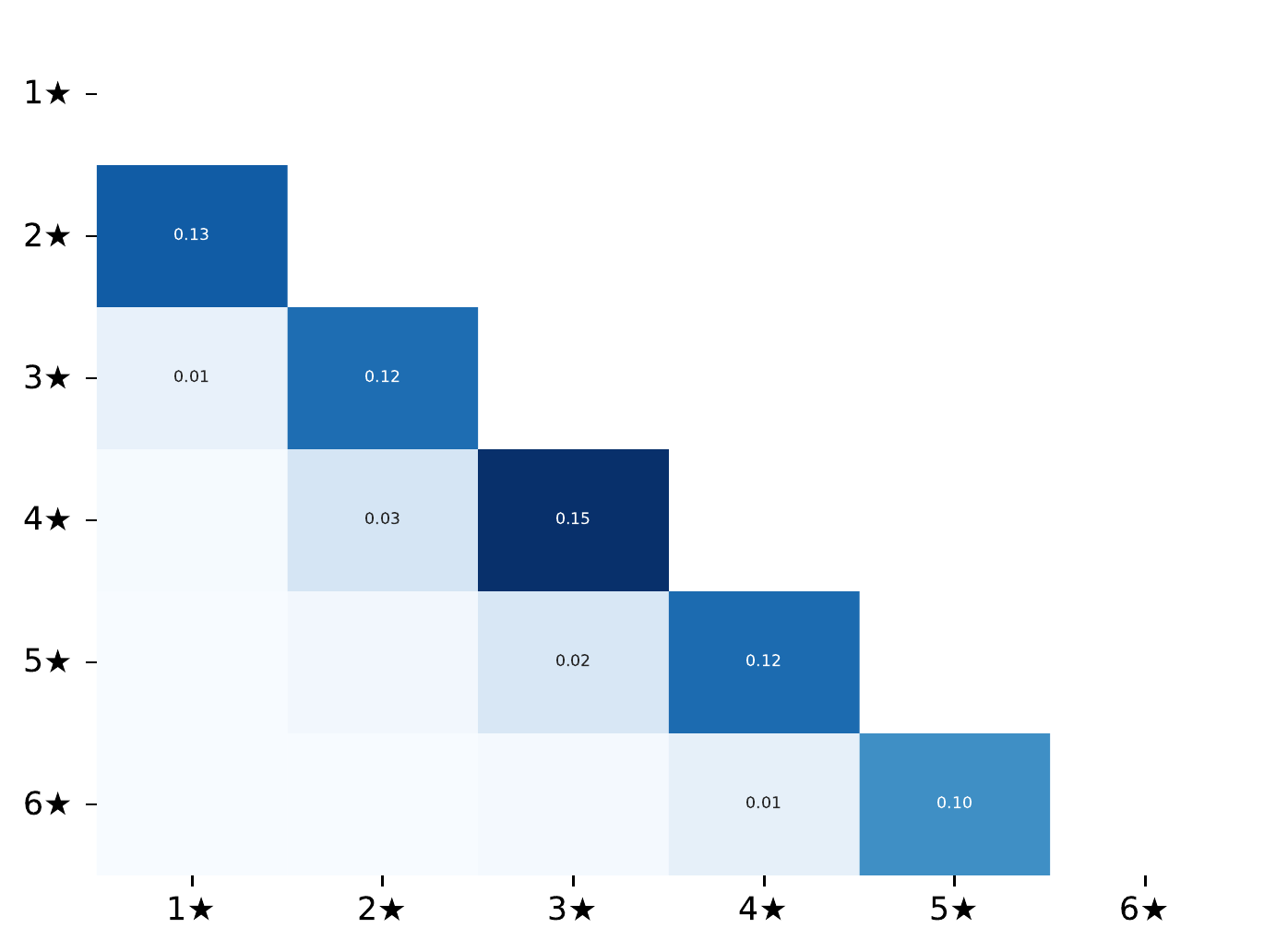} \\
b) Industry effects \\
\includegraphics[width=0.8\textwidth]{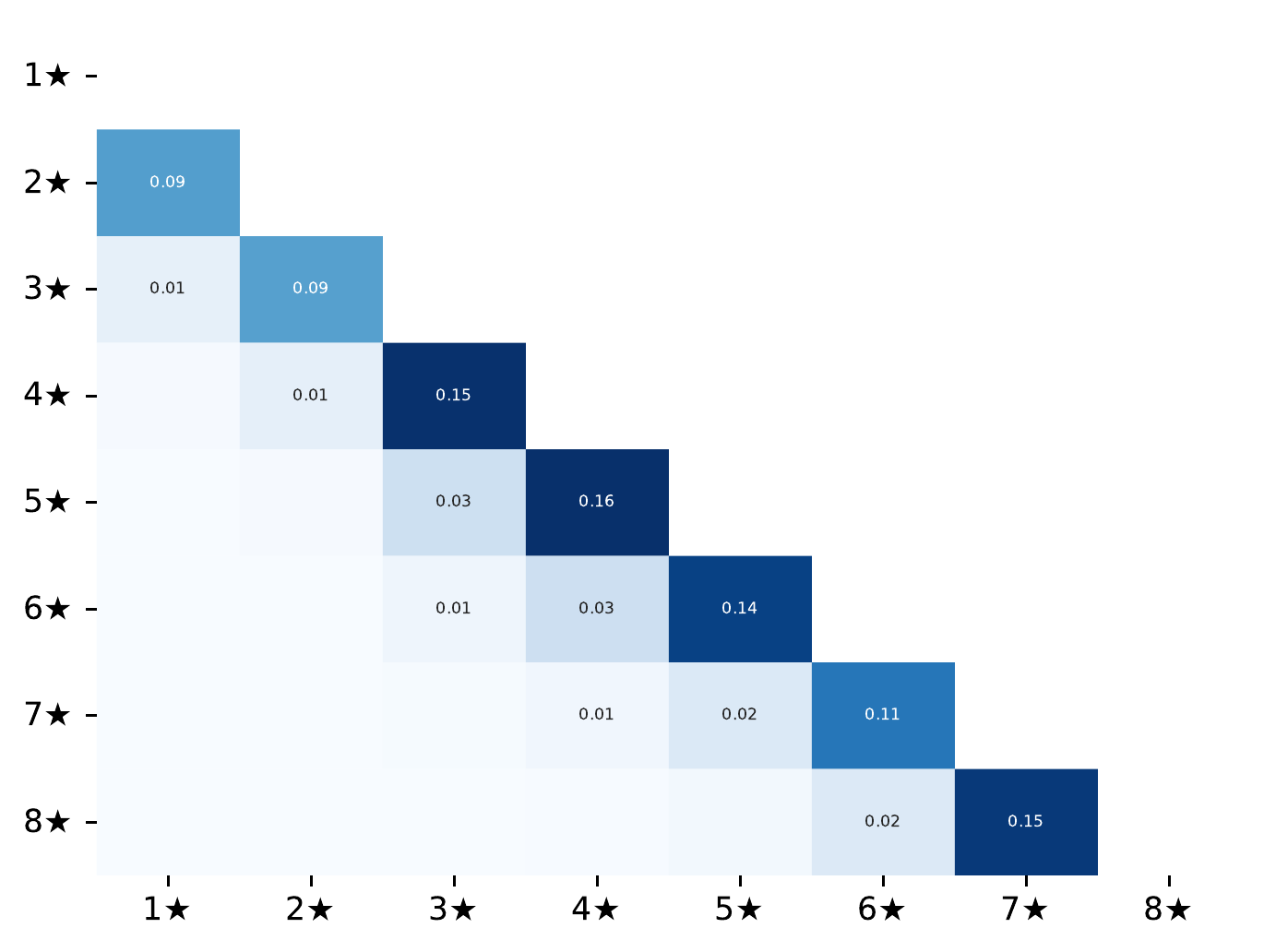}
\end{tabular}
    \label{fig:DR_matri_industry}
\parbox{\textwidth}{\small
\vspace{1eX}\emph{Notes}: This figure shows mean Discordance Rates (DR) across grade pairs under square weighted loss. Panel a uses the baseline model, while panel b uses the model with industry effects. Only cells where DR is above 0.01 are annotated. In both panels, DR decays quickly when comparing non-adjacent grades.}
\end{figure}

\begin{figure}
\centering
\caption{Binary loss: All grades}
    \label{fig: binary_all_grades}
\begin{adjustbox}{center}
\includegraphics[width=\textwidth]{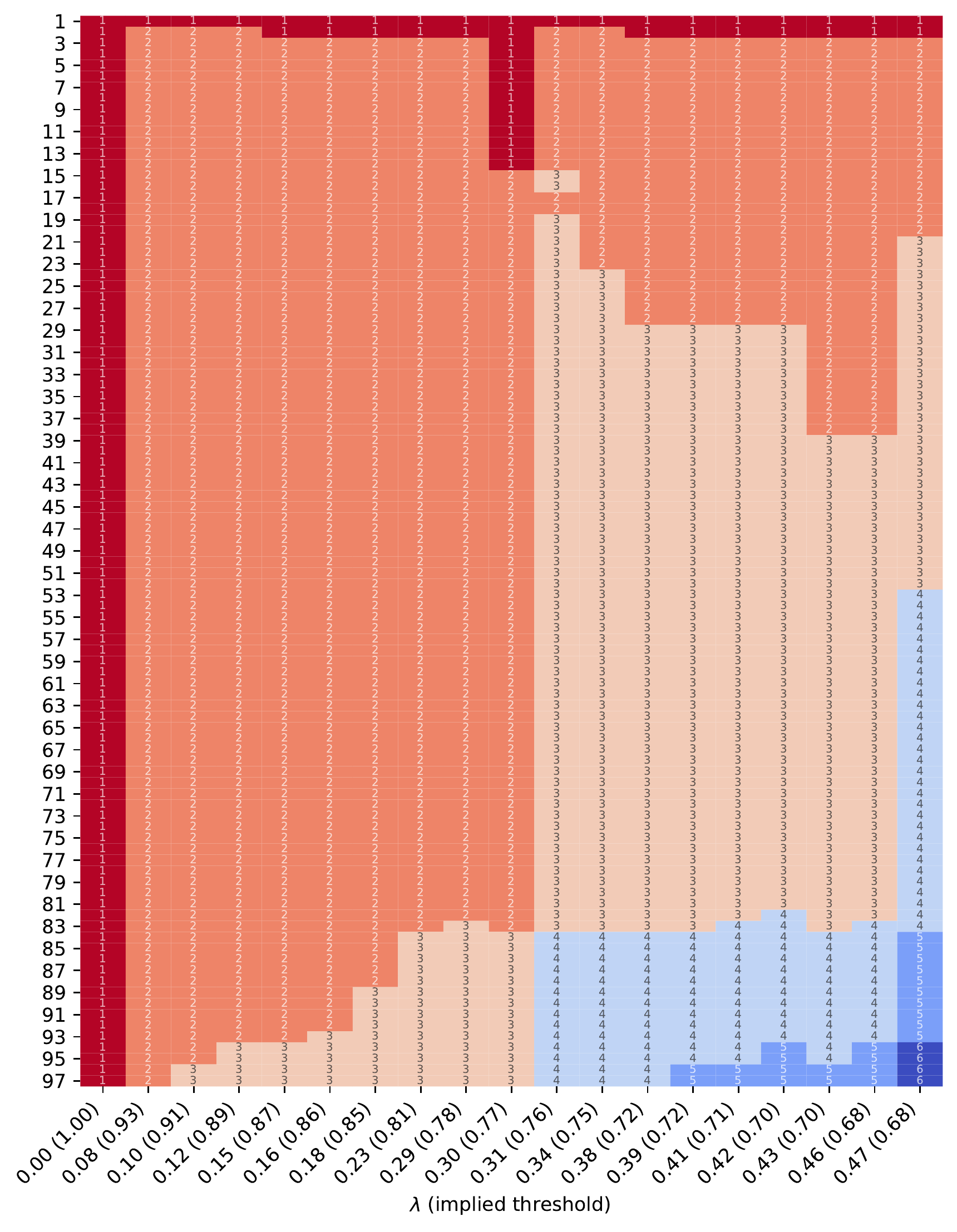}
\end{adjustbox}
\parbox{\textwidth}{\small
\vspace{1eX}\emph{Notes}: This figure shows grade assignments for each value of $\lambda <= 0.5$. To increase readability, only the smallest lambda that yields each unique set of grades is retained. The x-axis reports this $\lambda$ and the value of $1/(1+\lambda)$, which is the implied posterior threshold for pairwise ranking decisions. Firms are ordered by their rank under $\lambda=1$, when each firm is assigned
its own grade.}
\end{figure}

\begin{figure}
\centering
\caption{Binary loss with industry effects: All grades}
    \label{fig: binary_irfe_all_grades}
\begin{adjustbox}{center}
\includegraphics[width=\textwidth]{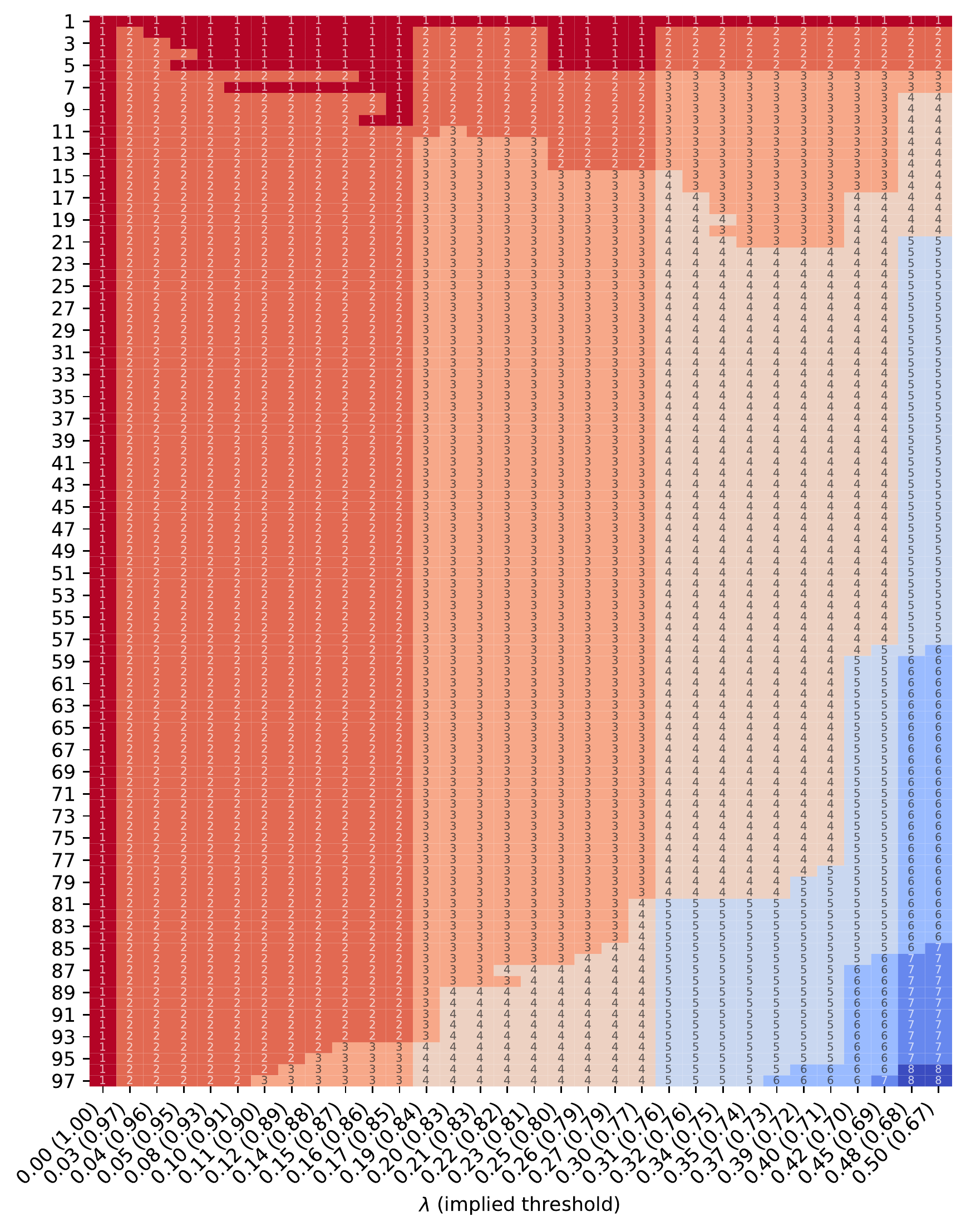}
\end{adjustbox}
\parbox{\textwidth}{\small
\vspace{1eX}\emph{Notes}: This figure shows grade assignments for each value of $\lambda \leq 0.5$. To increase readability, only the smallest lambda that yields each unique set of grades is retained. The x-axis reports this $\lambda$ and the value of $1/(1+\lambda)$, which is the implied posterior threshold for pairwise ranking decisions. Firms are ordered by their rank under $\lambda=1$, when each firm is assigned
its own grade.}
\end{figure}

\begin{figure}
\centering
\caption{Square-weighted loss: All grades}
    \label{fig: sqwt_all_grades}
\begin{adjustbox}{center}
\includegraphics[width=\textwidth]{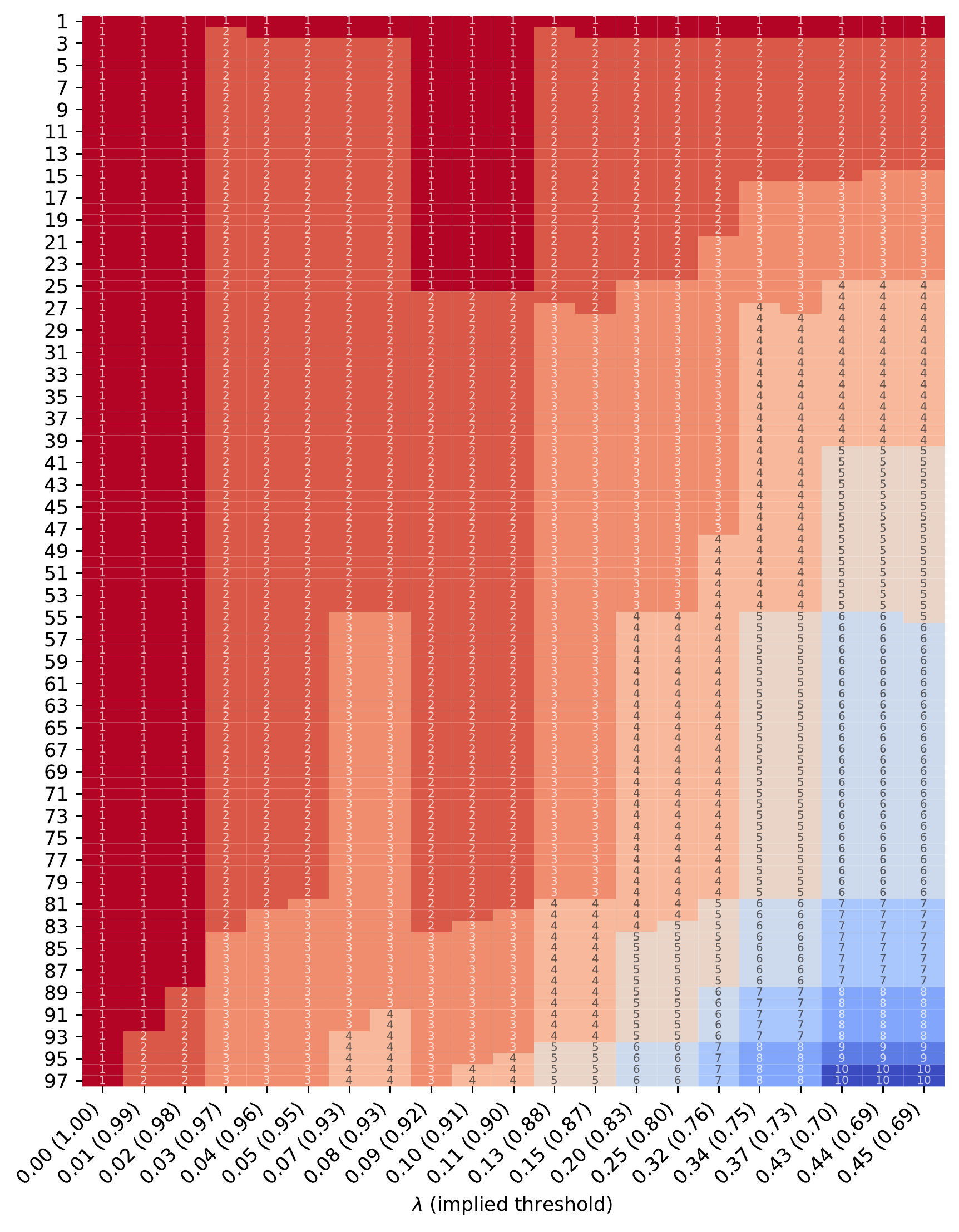}
\end{adjustbox}
\parbox{\textwidth}{\small
\vspace{1eX}\emph{Notes}: This figure shows grade assignments for each value of $\lambda \leq 0.5$. To increase readability, only the smallest lambda that yields each unique set of grades is retained. The x-axis reports this $\lambda$ and the value of $1/(1+\lambda)$, which is the implied posterior threshold for pairwise ranking decisions. Firms are ordered by their rank under $\lambda=1$, when each firm is assigned
its own grade.}
\end{figure}

\begin{figure}
\centering
\caption{Square-weighted loss with industry effects: All grades}
    \label{fig: sqwt_irfe_all_grades}
\begin{adjustbox}{center}
\includegraphics[width=\textwidth]{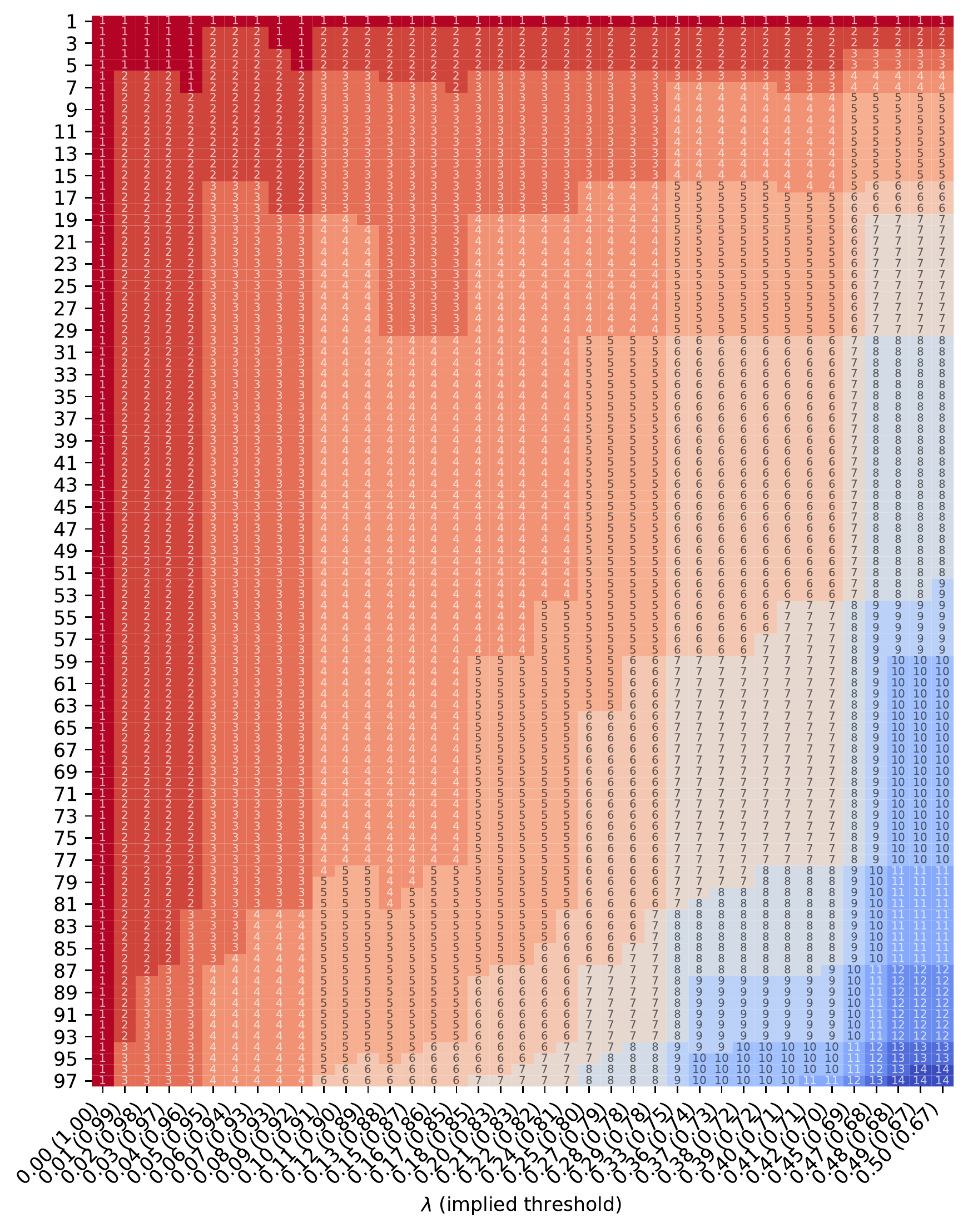}
\end{adjustbox}
\parbox{\textwidth}{\small
\vspace{1eX}\emph{Notes}: This figure shows grade assignments for each value of $\lambda \leq 0.5$. To increase readability, only the smallest lambda that yields each unique set of grades is retained. The x-axis reports this $\lambda$ and the value of $1/(1+\lambda)$, which is the implied posterior threshold for pairwise ranking decisions. Firms are ordered by their rank under $\lambda=1$, when each firm is assigned
its own grade.}
\end{figure}

\end{document}